\let\@authorsaddresses\@empty
\xpatchcmd{\ps@firstpagestyle}{Manuscript submitted to ACM}{}{\typeout{First patch succeeded}}{\typeout{first patch failed}}
\xpatchcmd{\ps@standardpagestyle}{Manuscript submitted to ACM}{}{\typeout{Second patch succeeded}}{\typeout{Second patch failed}}    \@ACM@manuscriptfalse
\renewcommand\footnotetextcopyrightpermission[1]{} 
\DeclareMathOperator*{\argmax}{arg\,max}
\DeclareMathAlphabet\mathbfcal{OMS}{cmsy}{b}{n}
\newcommand{\fixedspaceword}[2][1]{%
  \begingroup
  \spaceskip=#1\fontdimen2\font
  \xspaceskip=0pt\relax 
  #2%
  \endgroup
}
\newcommand{\LL}[1]{{\color{black} #1}}
\newcommand{\laksRev}[1]{{\color{black} #1}}
\newcommand{\weic}[1]{{\color{blue} [\text{Wei Chen: }  #1]}}
\newcommand{\prib}[1]{{\color{red} [\text{Prithu: }  #1]}}
\newcommand{\pink}[1]{\textcolor{magenta}{#1}}
\newcommand{\spara}[1]{\vspace{1mm}\noindent\textbf{#1.}}
\newcommand{\eat}[1]{}
\newcommand{\InFullOnly}[1]{}
\theoremstyle{plain}
\newtheorem{theorem}{Theorem}
\newtheorem*{theorem-non}{Theorem}
\newtheorem{definition}{Definition}
\newtheorem{lemma}{Lemma}
\newtheorem{problem}{Problem}
\newtheorem{property}{Property}
\newtheorem{proposition}{Proposition}
\newtheorem{sub-proposition}{Sub-proposition}
\newtheorem{example}{Example}
\def\noflash#1{\setbox0=\hbox{#1}\hbox to 1\wd0{\hfill}}
\newcommand{\E}{{\mathbb{E}}\xspace}
\newcommand{\iaa}{{iPhoneX} \xspace}
\newcommand{\iab}{{iPad} \xspace}
\newcommand{\iac}{{AirPod} \xspace}
\newcommand{\ia}{{i_1}\xspace}
\newcommand{\ib}{{i_2}\xspace}
\newcommand{\util}{\mathcal{U}\xspace}
\newcommand{\ua}{{u}\xspace}
\newcommand{\price}{\mathcal{P}\xspace}
\newcommand{\val}{\mathcal{V}\xspace}
\newcommand{\noise}{\mathcal{N}\xspace}
\newcommand{\ps}{{ps}\xspace}
\newcommand{\cs}{{c}}
\newcommand{\ga}{{g_1}\xspace}
\newcommand{\gb}{{g_2}\xspace}
\newcommand{\gc}{{g_3}\xspace}
\newcommand{\ebay}{{eBay}\xspace}
\newcommand{\cgl}{{Craigslist}\xspace}
\newcommand{\fb}{{Facebook}\xspace}
\newcommand{\power}{\mathbb{P}}
\newcommand{\squishlist}{
 \begin{list}{$\bullet$}
  {  \setlength{\itemsep}{0pt}
     \setlength{\parsep}{3pt}
     \setlength{\topsep}{3pt}
     \setlength{\partopsep}{0pt}
     \setlength{\leftmargin}{2em}
     \setlength{\labelwidth}{1.5em}
     \setlength{\labelsep}{0.5em}
} }
\newcommand{\squishlisttight}{
 \begin{list}{$\bullet$}
  { \setlength{\itemsep}{0pt}
    \setlength{\parsep}{0pt}
    \setlength{\topsep}{0pt}
    \setlength{\partopsep}{0pt}
    \setlength{\leftmargin}{2em}
    \setlength{\labelwidth}{1.5em}
    \setlength{\labelsep}{0.5em}
} }
\newcommand{\squishdesc}{
 \begin{list}{}
  {  \setlength{\itemsep}{0pt}
     \setlength{\parsep}{2pt}
     \setlength{\topsep}{2pt}
     \setlength{\partopsep}{0pt}
     \setlength{\leftmargin}{2em}
     \setlength{\labelwidth}{1.5em}
     \setlength{\labelsep}{0.5em}
} }
\newcommand{\squishdesctight}{
 \begin{list}{}
  {  \setlength{\itemsep}{0pt}
     \setlength{\parsep}{0pt}
     \setlength{\topsep}{0pt}
     \setlength{\partopsep}{0pt}
     \setlength{\leftmargin}{1em}
     \setlength{\labelwidth}{1.5em}
     \setlength{\labelsep}{0.5em}
} }
\newcommand{\squishnumlist} {
\newcounter{qcounter}
\begin{list}{\arabic{qcounter}.~}{\usecounter{qcounter}} 
{  \setlength{\itemsep}{0pt}
    \setlength{\parsep}{0pt}
    \setlength{\topsep}{0pt}
    \setlength{\partopsep}{0pt}
    \setlength{\leftmargin}{2em}
    \setlength{\labelwidth}{1.5em}
    \setlength{\labelsep}{0.5em}
}}
\newcommand{\squishend}{
  \end{list}
}
\newcommand{\user}{{u}\xspace}
\newcommand{\itemset}{{I}\xspace}
\newcommand{\parameterset}{{\sf Param}\xspace}
\newcommand{\comic}{Com-IC\xspace}
\newcommand{\model}{UIC\xspace}
\newcommand{\allitems}{\mathbf{I}}
\newcommand{\allalloc}{\mathbfcal{S}}
\newcommand{\allseeds}{S}
\newcommand{\alliseeds}{S_i}
\newcommand{\utilw}{{\it util_{W}}\xspace}
\newcommand{\utilow}{\util_{W^N}}
\newcommand{\greedSeeds}{S^{\it Grd}}
\newcommand{\greedESeeds}{S^{\it GrdE}}
\newcommand{\greedAlloc}{\mathbfcal{S}^{\it Grd}}
\newcommand{\optAlloc}{\mathbfcal{S}^{\it OPT}}
\newcommand{\sw}{\rho}
\newcommand{\sww}{\rho_{W^N}}
\newcommand{\allblocks}{\mathcal{B}}
\newcommand{\OPT}{{\it OPT}}
\newcommand{\aware}{\mathcal{R}}
\newcommand{\awares}{\mathcal{R}^{\allalloc}}
\newcommand{\awarews}{\mathcal{R}_{W}^{\allalloc}}
\newcommand{\adopt}{\mathcal{A}}
\newcommand{\adopts}{\mathcal{A}^{\allalloc}}
\newcommand{\adoptws}{\mathcal{A}_{W}^{\allalloc}}
\newcommand{\ow}{W^N}
\newcommand{\dow}{\mathcal{D}^N}
\newcommand{\Iow}{\mathbf{I}_{\ow}^*}
\newcommand{\Istar}{\mathbf{I}^*}
\newcommand{\budgetSwitch}{{\it budgetSwitch}}
\newcommand{\true}{{\bf true}}
\newcommand{\false}{{\bf false}}
\newcommand{\bvec}{\vec{b}} 
\newcommand{\bmax}{\overline{b}}
\newcommand{\bbude}{e}
\newcommand{\qao}{q_{i_1|\emptyset}\xspace}
\newcommand{\qbo}{q_{i_2|\emptyset}\xspace}
\newcommand{\qab}{q_{i_1|i_2}\xspace}
\newcommand{\qba}{q_{i_2|i_1}\xspace}
\newcommand{\chgdel}[1]{\textcolor{red}{\sout{#1}}}
\newcommand{\chgins}[1]{\textcolor{black}{#1}}
\newcommand{\algo}{\textsf{bundleGRD}}
\newcommand{\numblocks}{t}
\newcommand{\PRIMM}{\textsf{PRIMA}}
\newcommand{\nodeselect}{\textsf{NodeSelection}}
\newcommand{\dbBook}{\mbox{Douban-Book}\xspace}
\newcommand{\dbMovie}{\mbox{Douban-Movie}\xspace}
\newcommand{\twit}{\mbox{Twitter}\xspace}
\newcommand{\orkut}{\mbox{Orkut}\xspace}
\newcommand{\db}{{Douban}\xspace}
\newcommand{\flix}{{Flixster}\xspace}
\newcommand{\RRSIM}{\textsf{RR-SIM}^{+}}
\newcommand{\RRCIM}{\textsf{RR-CIM}}
\newcommand{\id}{\textsf{item-disj}}
\newcommand{\bd}{\textsf{bundle-disj}}
\newcommand{\tcsc}{\textsf{BDHS-Concave}}
\newcommand{\tcss}{\textsf{BDHS-Step}}
\newcommand{\ind}{d_{in}}
\newcommand\ceil[1]{\lceil#1\rceil}
\begin{document}

\title{Maximizing Welfare in Social Networks under a Utility Driven Influence Diffusion Model}
\renewcommand{\shorttitle}{Utility Driven Influence Maximization}

\author{
Prithu Banerjee $\dag$ \hspace{4mm}  Wei Chen $\ddag$  \hspace{4mm}   Laks V.S. Lakshmanan $\dag$}

\affiliation{\\University of British Columbia, {\sf \{prithu,laks\}@cs.ubc.ca} $^\dag$, Microsoft Research, {\sf weic@microsoft.com} $\ddag$}

\renewcommand{\textrightarrow}{$\rightarrow$}

\begin{abstract}
\eat{ 
Currently, there is a significant interest in the problem of influence maximization. 
\eat{
: given a social network and a budget $k$, find up to $k$ seed nodes that could lead to the maximum expected number of activated nodes in the network, under well-known stochastic diffusion models. 
}
\eat{ 
\weic{I think the rest of the paragraph below is more suitable in the intro than in the abstract,
and if we need space, we can significantly shorten it or simply remove the rest
of the paragraph and directly go to the next paragraph.}
After the initial spate of papers focusing mostly on algorithmic efficiency on a single item propagation, there is a steady stream of works developing models and seed selection algorithms for multiple item campaigns. Unfortunately, none of them bases item adoption on solid economic footing. In economics, it is well accepted that item adoption is driven by the utility, i.e., the difference between the value a user derives from the item and the price she needs to pay. Specifically, in combinatorial auctions, the problem of allocating items to users so as to maximize the social welfare, i.e., the sum of utilities derived by users from the allocation, is extensively studied. However, the network effect of adoptions propagating through a network is not taken into account.}  
In economics, it is well accepted that adoption of items is governed by the utility that a user derives from their adoption. 
In this paper, we propose a model called \model that combines utility-driven item adoption with the viral network effect helping to propagate adoption of and desire for items from users to their peers. We focus on the case of mutually complementary items and model their adoption behavior via supermodular value functions. We assume price is additive and use zero mean random noise to capture the uncertainty in our knowledge of user valuations. In this setting, we study a novel problem of \emph{social welfare maximization}: given item budgets, find an optimal allocation of items to seed nodes that maximizes the sum of expected utilities derived by users when the diffusion terminates. We show the expected social welfare is monotone but neither submodular nor supermodular. Nevertheless, we show that a simple greedy allocation can ensure a $(1-1/e-\epsilon)$-approximation to the optimum. To the best of our knowledge, this is the first instance where for a non-submodular objective in the context of viral marketing, such a high approximation ratio is achieved. 
We provide the analysis of this result, which is highly nontrivial and along the way we give a solution to the prefix-preserving influence maximization problem, which could be of independent interest. With extensive experiments on real and synthetic datasets, we show that our algorithm significantly outperforms all the baselines. 
} 
Motivated by applications such as viral marketing, 
the problem of influence maximization (IM) has been extensively studied in the literature.
The goal is to select a small number of users to adopt an item such that it results in a large cascade of adoptions by others. Existing works have three key limitations. (1) They do not account for economic considerations of a user in buying/adopting items. (2) Most studies on multiple items focus on competition, with complementary items receiving limited attention. (3) For the network owner, maximizing social welfare is important to ensure customer loyalty\laksRev{, which is not addressed in prior work in the IM literature.}  In this paper, we address all three limitations and propose a novel model called UIC that 
combines utility-driven item adoption with influence propagation over networks.
Focusing on the mutually complementary setting, \laksRev{we formulate the problem of social welfare maximization in this novel setting.} We show that while the objective function is neither submodular nor supermodular, surprisingly a simple greedy allocation algorithm achieves a factor of $(1-1/e-\epsilon)$ of the optimum expected social welfare. We develop \textsf{bundleGRD}, a scalable version of this approximation algorithm, and demonstrate, with comprehensive experiments on real and synthetic datasets, that it significantly outperforms all baselines.


\eat{ 
Existing works do not account for economic considerations involved in a user deciding to buy or adopt an item. This is a subject of careful study in economics, where it is well accepted that the adoption decision of a user is driven by the utility that the user derives from the items. Secondly, while competing IM campaigns have been studied, complementary campaigns have received little attention. Thirdly, prior research has not studied social welfare resulting from IM campaigns. In this paper, we propose a model called UIC that combines utility-driven item adoption with network effect acting to propagate desire for and adoption of items. We focus on a setting where items are mutually complementary. For a network operator, to retain user loyalty, we argue that social welfare is more critical than individual item revenues and formulate a novel problem of social welfare maximization. We show that the expected social welfare is neither submodular nor supermodular. Nevertheless, we show that a simple greedy allocation algorithm achieves an expected social welfare that is within a factor $(1-1/e-\epsilon)$ of the optimum. We develop a scalable version of this approximation algorithm, called PRIMM, which solves the prefix preserving IM problem. 
With comprehensive experiments on real and synthetic datasets, we demonstrate that our algorithm significantly dominates all the baselines.   } 
\eat{Influence maximization is a well-studied problem that asks for a small set of influential users from a social network, such that by targeting them as early adopters, the expected total adoption through influence cascades over the network is maximized.
However, almost all prior work focuses on cascades of a single propagating entity or
	purely-competitive entities.
In this work, we propose the {\em Comparative Independent Cascade} (\comic) model
	that covers the full spectrum of entity interactions from competition to complementarity.
In \comic, users' adoption decisions depend not only on edge-level information propagation, but also on a node-level automaton whose behavior is governed by a set of model parameters,
	enabling our model to capture not only competition, but also complementarity, to {\sl any possible degree}.
We study two natural optimization problems, \emph{Self Influence Maximization} and \emph{Complementary Influence Maximization}, in a novel setting with complementary entities.
Both problems are NP-hard, and we devise efficient and effective approximation algorithms via non-trivial techniques based on reverse-reachable sets and a novel ``sandwich approximation'' strategy.
The applicability of both techniques extends beyond our model and problems.
Our experiments show that the proposed algorithms {consistently} 
outperform intuitive baselines on four real-world social networks, often by a significant margin.
In addition, we learn model parameters from real user action logs.}
\end{abstract}
 



\maketitle

{\fontsize{9pt}{9pt} \selectfont
\textbf{Reference note:}\\
An abridged version of the paper appeared in 2019 International Conference on Management of Data (SIGMOD'19), June 30--July 5, 2019, Amsterdam, Netherlands. ACM, New York, NY, USA, 18 pages. https://doi.org/10.1145/3299869.3319879
} 

\onecolumn 


\section{Introduction}\label{sec:intro}
\eat{ 
flow and points to highlight in the intro. 
\begin{itemize} 
\item economic basis for item adoption; rooted in 
valuations and utility of itemsets. 

\item complementary (as opposed to competing) 
campaigns; very few such models before us. (be sure to 
check the literature for possible other works on 
complementary stuff.) 

\item simple model; easily extends to any number of items; 
simple algorithm; value specification relies on value 
oracles in most works on auction theory. unlike them, 
we do NOT need to know the value (or price or noise 
distributions) of items. remarkable. 

\item objective function mono but not sub/super mod; yet 
we can achieve a $(1-1/e-\epsilon)$-approx. -- first result 
of this kind to our knowledge. a short story of how we 
achieve this. 

\item algo. relies on item prefix property -- requires 
extension of existign scalable approx. algorithms like 
IMM. can we say anything abt extending SSA so as to satisfy 
the item prefix property? 

\item what do the experiments show? 

\end{itemize}
}

Motivated by applications such as viral marketing, 
the problem of influence maximization has been extensively studied in the literature \cite{li-etal-im-survey-tkde-2018}. The seminal paper of Kempe et al. \cite{kempe03} formulated \emph{influence maximization} (IM) as a discrete optimization problem: given a directed graph $G = (V,E,p)$, with nodes $V$, edges $E$, a function $p:E\rightarrow [0,1]$ associating influence weights with edges, a stochastic diffusion model $M$, and a seed budget $k$, select a set $S\subset V$ of up to $k$ seed nodes such that by activating the nodes $S$, the expected number of nodes of $G$ that get activated under $M$ is maximized. Two fundamental diffusion models are independent cascade (IC) and linear threshold (LT) \cite{kempe03}. Most of the work on IM has focused on a single item or phenomenon propagating through the network, and has developed efficient and scalable heuristic and approximation algorithms for IM \cite{ChenWW10, ChenWW10b, borgs14, tang15}. 


Subsequent work studies multiple campaigns propagating through a network \cite{HeSCJ12,BudakAA11,PathakBS10,BharathiKS07,lu2013,lu2015, chalermsook2015social}, mostly focusing on competing campaigns. 
\eat{Among these, the majority of studies have concentrated on items, products, or campaigns {\sl competing} with each other (see \textsection\ref{sec:related} for more details).}  
One exception 
is the \comic model by Lu et al. \cite{lu2015}, which studied the effect of {\sl complementary} products propagating through a network. A significant omission from the literature on IM and viral marketing is a study with item adoptions grounded in a sound economic footing.  
\eat{, where users adopt itemsets \sout{in a rational manner so as to} that maximize their own utility.} 

Adoption of items by users is a well-studied concept in economics \cite{myerson1981optimal, nisan2007}: item adoption by a user is driven by the \emph{utility} that the user can derive from the item (or itemset). Precisely, a user's utility for an item(set) is the difference between the \emph{valuation} that the user has for the item(set) and the \emph{price} she pays. A rich body of literature in combinatorial auctions (e.g., see~\cite{feige-vondrak-demand-2010, kapraov-etal-greedy-opt-soda-2013, korula-etal-online-swm-arxiv-2017}) studies the optimal allocation of goods to users, given the users' valuation for various sets of goods. 
\chgins{These studies are not concerned with the influence propagation in networks, whereby users' desire of items arises due to  
	the influence from their network neighbors who already adopted items, and then these users may in turn adopt
	the items if they could obtain positive utility from them and start influencing their neighbors about these items.
Considering such network propagation is important for applications such as viral marketing \cite{kempe03}.
}
%

\eat{ 
\weic{The above paragraph used to refer to ``network effect'', which is unclear and may have made the reviewers 
	thinking of network externality. I changed it to ``influence propagation'' to make it more explicit.}
} 

This paper takes the first step to combine viral marketing (influence maximization) with a framework of item adoption grounded in the economic principle of item utility. We propose a \chgins{novel and 
 powerful framework} for capturing the interaction between these two paradigms,  
and \chgins{study the} \emph{social welfare maximization} in this context, i.e., maximize the sum of utilities of itemsets adopted by users at the end of a campaign, in expectation. 
The utility of an itemset is defined to be the valuation of the itemset minus the 
	price of the itemset.
\chgins{Social welfare is well studied in combinatorial auctions, but it has not been well studied in the context of
network propagation and viral marketing.
}

\eat{ 	
\weic{I put ``novel'' in front of the framework, instead of in front of the social welfare maximization. 
The social welfare maximization problem is not that new, but the framework combining viral marketing and utility-based
adoption is new.	
}
} 

\eat{ 	
\weic{For the last sentence in the above paragraph, it is used to be at the beginning of the following paragraph
	with much longer discussion. I downplayed our study as the first one on social welfare in networks.}
} 
	
 
In this paper, we focus on a setting where the items are mutually complementary, by modeling user valuation for itemsets as a \emph{supermodular} function (definition in \textsection\ref{sec:related}). 
Supermodularity captures the intuition that between complementary items, the marginal value-gain of an item w.r.t. a set of items increases as the set grows. Many companies offer complementary products, e.g., Apple
	offers iPhone, and AirPod. The marginal value-gain of AirPod is higher for a user who has bought an iPhone, compared to a user who hasn't. 
\chgins{Complementary items have been well studied in the economics literature and supermodular function is a typical way
	for modeling their valuations (e.g., see \cite{topkis2011supermodularity,Carbaugh16}).} 
\laksRev{As a preview, our experiments show that complementary items are natural and that their valuation is indeed supermodular (Section~\ref{sec:real_data}).} 
We study adoptions of complementary items, by combining a basic stochastic diffusion model with the utility model for item adoption.

In practice, prices of items may be known, but our knowledge of users'  valuation for items may be uncertain. 
Thus, we further add a random noise to the utility function.
\eat{ 
We \emph{do not make any distributional assumptions about the noise}, except for the standard assumption that it is zero mean. 
} 
%
We formulate the optimization problem of finding the optimal allocation of items 
	to seed nodes \chgins{under item budget constraints} so as to maximize the expected social welfare. 
The task is NP-hard, but more challenging is our result that the expected social welfare
	is neither submodular nor supermodular under the reasonable assumption that price and noise are additive.
We show that  
	we can still design an efficient algorithm that achieves a $(1-1/e-\epsilon)$-approximation to the optimal expected social welfare, for any small $\epsilon>0$. 
\chgins{
While our main algorithm is still based on the greedy approach for solving submodular function maximization, its
	analysis is far from trivial, because the objective function is neither submodular nor supermodular.
As part of our proof strategy, we develop a novel {\em block accounting} method for reasoning about expected 
	social welfare for properly defined blocks of items.}

%

\eat{ 
\chgins{A remarkable property of our allocation algorithm is that it does not require valuations or prices as input, i.e., it is agnostic to this information. 
	In contrast, existing adoption maximization algorithms for complementary products \cite{lu2015arxiv}
		need as input, the adoption probabilities of items given previously adopted itemsets.}
} 

\chgins{
An important feature of our algorithm is that it does not require the valuations or prices of items as the input, and
	merely the fact that item valuation is supermodular while price and noise are additive
	is sufficient
	to guarantee the approximation ratio.
This means that we do not need to obtain the valuations or marginal valuations of items, which may not be \laksRev{straightforward} to
	get in practice.}

\eat{ 	
\weic{For the above paragraph, I removed comparison with the Com-IC paper, since it is not essential, and that paper
	has no valuation or price, and we have to talk about adoption probabilities, which would be confusing.}
}

\chgins{
To summarize, in this paper, we study the problem of optimal allocation of items to seeds subject to item budgets, 
	such that after network propagation the expected social welfare is maximized, and
	we make the following contributions:}


\eat{

\begin{figure}[ht]
\hspace{-2.9cm}
\begin{minipage}[b]{0.45\linewidth}
\includegraphics[width=.8\textwidth]{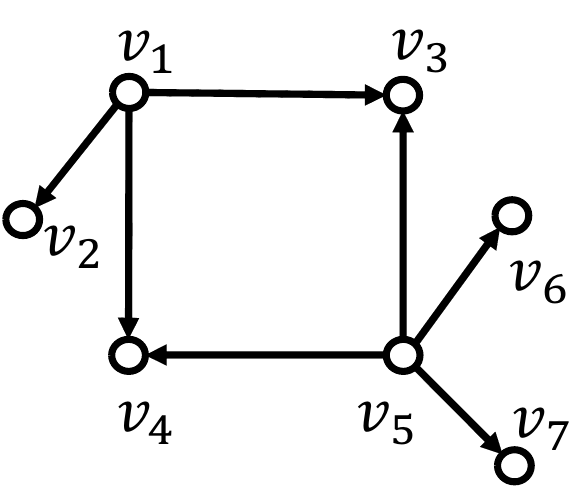}
\caption{An example network}
\label{fig:net}
\end{minipage}
\hspace{0.1cm}
\begin{minipage}[b]{0.45\linewidth}
\begin{tabular}{cc} \hline
  Itemset & Utility \\ \hline
  $\{\iaa\}$ & $1$ \\
  $\{\iab,\iac\}$ & $1$ \\
  $\{\iaa,\iab\}$ & $1$ \\
  $\{\iaa,\iac\}$ & $1$ \\
  $\{\iaa,\iab,\iac\}$ & $3$ \\ \hline
  \end{tabular}
  \caption{Item subsets having nonnegative utility}
\label{tab:util}
\end{minipage}
\end{figure}
}

%
\eat{ 
A byproduct of our algorithmic solution is an adaptation of the state-of-the-art
	single-item influence maximization algorithm IMM~\cite{tang15} to satisfy the
	\emph{prefix-preserving property} (definition in \textsection\ref{sec:PRIMM}) for multiple items with diverse budgets, which might be of independent interest in multi-item IM. 
}
\eat{ 
That is, assuming every item $i$ has budget $b_i$, the algorithm outputs an ordered set $S$
	of $\bmax = \max_i b_i$ seeds, such that with high probability, \emph{simultaneously} for all items $i$, 
	the prefix of $S$ with $b_i$ nodes is a $(1-1/e-\epsilon)$-approximate solution to the optimal
	solution with $b_i$ seeds, for any $\epsilon > 0$.
We are able to achieve this 
efficiently with the
	near-linear expected running time
	in the same order as the scalable approximation IMM algorithm on the largest budget $\bmax$, essentially
	independently of the number of items considered in the diffusion.
This prefix-preserving algorithm could be of 
interest in its own right 
in multi-item IM. 
} 
\eat{ 
Finally, we conduct extensive empirical evaluations of our algorithm together with a number of
	baselines, on both real and synthetic datasets.
Our empirical results demonstrate that our algorithm significantly outperforms all baselines in terms of one or both of the quality of social welfare achieved and the running time.

To summarize, we make the following contributions. 
} 
%
1. We incorporate utility-based item adoption with influence diffusion into a novel 
	multi-item diffusion model, \emph{Utility-driven IC} (\model) model. \laksRev{\model can support any mix of competing and complementary items.} In this paper, we study the social welfare maximization problem for
	mutually complementary items 
(\textsection\ref{sec:model}). 
	
2. We propose a greedy allocation algorithm, and  show
	that the algorithm achieves a $(1-1/e-\epsilon)$-approximation ratio, even though 
	the social welfare function is neither submodular nor supermodular (\textsection\ref{sec:prop} and \textsection\ref{sec:algo}). 
	\chgins{
	Our main technical contribution is the block accounting method, which distributes social welfare to properly
	defined item blocks.
	The analysis is highly nontrivial and may be of independent interest to other studies.}
	
3. We design a \emph{prefix-preserving} seed selection algorithm for multi-item IM that may be
	of independent interest, with
	running time and memory usage in the same order as the scalable approximation  algorithm IMM \cite{tang15} on the maximum budgeted item,
	regardless of the number of items 
(\textsection\ref{sec:algo}). 

4. We conduct detailed experiments comparing the performance of our algorithm with baselines on five large real networks,
 with both real and synthetic utility configurations. Our results show that our algorithm significantly dominates the baselines in terms of running time or expected social welfare or both 
(\textsection\ref{sec:exp}).  

\eat{ 
The omnipresence of social networks such as Facebook and Twitter has stimulated the viability of large-scale viral marketing. Viral marketing was first introduced in data mining community by Domingos and Richardson \cite{domingos01}; it is a method to promote
items or technologies by giving discounted or free samples to a selected group of influential individuals, with a hope that they will cause a large-scale item adoption by influencing other users via word-of-mouth effects.
A key computational problem for viral marketing -- \textit{influence maximization} (IM)~\cite{kempe03} -- arises, where the goal is to find a set of $k$ users (called seeds) such that activating them as initial adopters will maximize the expected number of total adoptions through influence cascades over a social network.

In this problem, we are given a stochastic diffusion model, which characterizes the dynamics of an influence cascade by specifying probabilistic rules on how adoptions propagate from one user to another in the network. Two widely used models are Independent Cascade (IC) and Linear Threshold (LT)~\cite{kempe03}.
Both the classic IC and LT models are single-entity models, assuming that there is only a single item in the network.
Extensions of the classic models have been proposed that consider more than one item being propagated in the network, e.g., \cite{BharathiKS07,borodin10,PathakBS10,lu2013}. However, almost all such models assume that the items in the network are purely competitive and a user adopts at most one of them. 

More recently, a more general model called \comic have been proposed \cite{lu2015arxiv}, which encodes more complex interaction between the items.
Instead of modeling competitions only, in \comic the two items can exhibit complementary or competitive relationships (called ``comparativeness'') to any degree possible.
However, all these works suffer from one key limitation; the end goal is just to maximize the expected number of adoptions of items. Thus they fail to optimize the satisfaction resultant from the adoptions. More precisely, given a choice of similar items, a user adopts the item that maximizes the ``utility'' for them. Such utility based adoption decision models are widely studied in economics \cite{myerson1981optimal,frittelli2000introduction,foldes2000valuation,hodges1989optimal}. The most popular approach models utility as the difference between user's valuation of the item and the price that the user pays for the item. 
However, those works in economics are not situated in the context of viral marketing. 
In presence of a social network, we lift the notion of utility to social welfare, where social welfare is defined to be the sum of users' utilities once the propagation ends. Such a model opens up a whole new optimization objective. Given a set of budgets corresponding to a set of items, find the right seeds for those items following the budget constraint, such that the expected social welfare is maximized. Till date, no prior work has studied the problem, in context of influence maximization.
\note[WeiC]{The reader may wonder why we want to care about the social welfare objective. The above
paragraph does not provide a motivating scenario about maximizing social welfare. This objective is
not immediately clear, because if items are competing, and budgets are paid by each competing company,
why will they care about social welfare? One scenario is the complementary products from the same
company, here social welfare may still be good for the company, although it may be a bit different from
profit. Another is for some social good optimization, but I have not thought about a good example yet.}

Another key difference in our model, from most of the existing models in IM, is that the seed users do not always adopt an item.  We argue that seed users are also rational entities. Thus assuming that they always adopt an item is incorrect. Instead, in our model, seed users also go through the same utility driven decision making process to decide whether they want to adopt an item offered to them. Of course, seed users enjoy certain benefits than other users. Such as the item may be offered at a discounted rate which in turn can increase the utility. Or special sign-on bonuses can be are offered exclusively to seeds. However, the core decision to whether to adopt the item is still based on utility. Consequently, a seed user has the option to reject an item if the final utility is low.

\note[WeiC]{I don't think that seed users not always adopting an item should be emphasized as a key difference. It is just a relatively minor change in the standard IC/LT model to accommodate probabilistic adoption, and there are other earlier work (e.g. general marketing strategies in KKT, continuous influence maximization of SIGMOD'16) that allow probabilist seed activation. This point is really a minor point and should not take a full paragraph to motivate and explain. Instead, incorporating utility-based decision process, so that it in general covers multi-item diffusions with both competing and complementary aspects should be emphasized. Logically, this goes even before the social welfare objective --- we first have the utility-based adoption models, and then we can talk about social welfare.}

\note[WLu]{Here, one could quibble that adoption can happen as long as utility is not negative ... doesn't need to be ``maximum``.  Unless the context is restricted to pure competitions.}
Since it's an important aspect it is important to study the value based influence maximization. No work has been done till date, in context of influence maximization, that targets maximizing the value the users achieve through adoptions. 

\note[Prithu]{Revamped this portion a bit after WLu's feedback. Let me know if this looks fine now.}

To further understand the necessity of maximizing social welfare and how the adoption maximization objective fails to address this, consider the example network shown in Fig.\ref{fig:welvsadopt}. Users are represented as nodes. An edge represents the connection between users, usually who follows whom kind of connection. The network has two disconnected components, where each component is a directed star graph. The internal node is connected with all the other leaves with edges directed from internal node to the leaves. 
Now suppose that we want to start a campaign for three items - iPhone X, apple watch and AirPower (a soon to be launched wireless charging mat). Consider that budget of AirPower is $2$, whereas for iPhone X and apple watch it is $1$. Thus we can select two seeds for AirPower but only one each for phone and watch. For the purpose of the example let's consider that seed nodes and non-seed nodes have the same utilities. The utilities of the different combinations of the three items are shown in table \ref{tab:utils}. Notice that since charger alone is not useful to a user, individual utility of AirPower is negative. Whereas when bundled with an iPhone or watch, the utility of AirPower becomes positive. Moreover buying all three items together also has a positive utility, because apple ecosystem enhances overall user satisfaction. We will formalize such complimentary buying later, but for now, we show given such settings how maximizing adoption fails to address welfare maximization problem.

\begin{table*}[t]
\centering
\begin{tabular}{ |c|c| } 
 \hline
 (AirPower) & -1 \\
 (Apple Watch) & 1 \\
 (iPhone X) & 1 \\
 (AirPower, iPhone X) & 2 \\
 (AirPower, Apple Watch) & 2 \\
 (iPhone X, Apple Watch) & 2 \\
 (AirPower, Apple Watch, iPhone X) & 5\\
  \hline
\end{tabular}
\caption{Utility Table}\label{tab:utils}
\end{table*}

\begin{table*}[t]
\centering
\begin{tabular}{ |c|c|c| } 
 \hline
Allocation & Adoption Count & Social Welfare \\
\hline
($c_1$, AirPower), ($c_1$, Apple Watch),($c_1$, iPhone X),($c_2$, AirPower) & 9 & 15 \\
\hline
($c_1$, AirPower), ($c_1$, Apple Watch),($c_2$, iPhone X),($c_2$, AirPower) & 10 & 10 \\
  \hline
\end{tabular}
\caption{Utility Table}\label{tab:utils}
\end{table*}

\begin{table*}[t]
\centering
\begin{tabular}{ |c|c| } 
 \hline
$\emptyset$ &0 \\
\hline
$i_1$ & 0 \\
\hline
$i_2$ & 0 \\
\hline
$i_1,i_2$ & 5\\
  \hline
\end{tabular}
\caption{Utility Table}\label{tab:utils}
\end{table*}

Notice that since the network is disconnected, by choosing one of the internal nodes as seed, the item can be propagated to only one of the components. Thus if maximizing adoption is the objective, then seeding node $c_1$ with iPhone X and AirPower and seeding node $c_2$ with apple watch and AirPower is the optimal choice. Because such an assignment results in a total of $2 \times 3 + 2 \times 2 = 10$, which is maximum. However such assignment results in social welfare of $2 \times 3 + 2 \times 2 =10$. However this is not optimal as an assignment that assigns all three items to $c_1$ and only AirPower to $c_2$ results in a higher social welfare of $5 \times 3 = 15 $.  Thus even though assigning watch in isolation to $c_2$ is detrimental for adoption, it is beneficial for social welfare.

\note[WeiC]{
I feel that the above example (Figure 1) is too long for the introduction. The main purpose of the example is to show adoption maximization is different from welfare maximization. I think this example could be put in a technical section. In general, in the introduction, we need to give some general argument that maximizing social welfare is useful, and I feel that people should understand that adoption is certainly different from social welfare. If we really want to give an example, then the description should be significantly shortened. 
In general, the current introduction talks too much on adoption vs. welfare. I think first incorporating utility into the propagation model should be discussed as the first contribution, whether or not our optimization objective is adoption or social welfare. 
}

Maximizing adoption of a single item can be thought of as a ``seller centric'' approach, where the seller of the item is directly benefited. On the other hand maximizing social welfare is more of a ``merchant centric'' approach. A merchant provides a common platform for all sellers and buyers. Hence if social welfare is maximized that implies the users are relatively happier with the purchases. This helps merchants to grow the user base. Works in economic theory have corroborated the importance of social welfare for a merchant [needs citation]. Ours is the first work to study this in the context of influence maximization and propose a scalable solution. 

\begin{figure}
\centering
\begin{small}
\includegraphics[width=.25\textwidth]{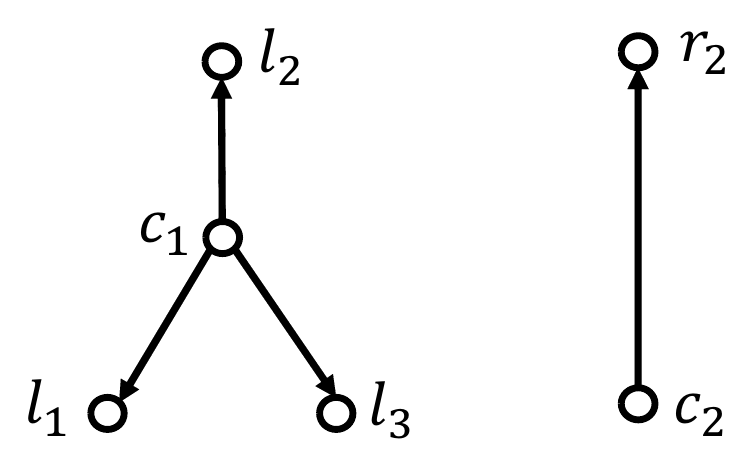}
\caption{An illustrative network to compare welfare and adoption}
\label{fig:welvsadopt}
\end{small}
\end{figure}

\eat{
One more bottleneck that existing models suffer from is the number of parameters, in order to encode any degree of competition and complementarity relationship (comparativeness in short), \comic \pink{defines a notion called} the Global Adoption Probabilities or GAPs.
The values of the GAPs represent to what extent \pink{a pair of items} complement or compete with each other.
\pink{By definition}, the total number of GAPs is exponential to the number of items in the network, as there can be an exponential number of item pairs and \comic requires one GAP parameter for each pair.
As a result, the parameter space of \comic will explode \pink{as the number of items increases to a large quantity.}

\note[WLU]{I changed the presentation a bit. ``Item combination'' can be misleading because GAPs are only defined on pairs of items, where combination could also mean 3 or more items together.}
}


Towards that we propose {\em Economically Postulated Independent Cascade} or {\em \model} for short.
Utility is a core concept in economics that models users' satisfaction for adopting items. \model draws inspiration from economics to decide the set of items user adopts. As described before utility is the value that users get from an item after paying the price for it. Since value is a latent intent describing the extent to which users desire for an item, it is very difficult to exactly quantify it. 
Hence we introduce a random noise parameter which models the uncertainty in users' valuation of the item.
\note[WLU]{This random noise idea is coming from an econ paper, we need to cite it.}
Thus, the utility is derived from value, price, and noise of items for users. Given a set, users adopt a (sub)set from the given item-set that garners the highest utility for them. 

Such a utility-driven model has multiple benefits.
Firstly the utility-based adoption decision is closely tied to the theory of economics more aptly models users decision making. Secondly, in \model we use a simple greedy algorithm to solve the welfare optimization problem for complimentary items. This helps to reduce the running time of algorithm by margins as compared to \comic. Because in \comic due to the presence of multiple GAP parameters choosing seed nodes is nontrivial. We study this explicitly in our experiments.
Thirdly, utility-based adoption decision organically gets rid of unnatural item relationships such as one way complementarity that exists in \comic. Two items are one-way complimentary when one of the two items compliments the other but the second competes with the first item. Such relationships between items are hard to find in real dataset.
Thus eliminating such unusual relations, \model helps to model influence maximization more realistically. Finally, as we showed utility based treatment allows us to maximize social welfare as an objective, that adoption maximization fails to achieve.

\note[WeiC]{
The logical flow between the above and the below paragraphs is problematic. 
The paragraph above is already talking about our greedy algorithm and experiments, but then the paragraph
	below starts explaining the EPIC model. We should first explain our EPIC model and
	the explain what we did on the EPIC model.
Then we summarize our contributions: (1) propose the use of the EPIC model for general
	multi-item diffusions; (2) propose the study of social welfare as objective functions;
	(3) design a greedy algorithm and theoretical analysis to show that, under the mutually
	complementary case (supermodular value functions), our algorithm achieves 1-1/e approximation ratio,
	despite that the objective function as a set function is neither sumodular nor supermodular;
	(4) comprehensive empiprical evaluation to demonstrate the effectiveness of our algorithm.
}

\model consists of two principal components. The first component is typical edge-level diffusion governing how the items propagate through influences. Once a user, modeled as a node in the social graph, adopts an item, it influences its neighboring nodes (i.e. connected users) based on the edge probabilities. If the influence is successful, then the item enters the desired set of the neighbor. In the next time step, the neighbor node adopts the subset of items from its desire set that maximizes utility. The utility is computed using value, price, and noise as stated before. Finally, we assume a progressive model of adoption, where an item once adopted cannot be unadopted later. Thus the adoption set for a user can only grow over time.

We then study the novel optimization problems under \model called Welfare Maximization (WelMax). Given a set of items, and their budgets that limit the maximum number of seeds that can be chosen for the item, WelMax aims to find a seed assignment of user item pairs such that at the end of the cascade, overall social welfare is maximized. Although our model is general enough to deal with any kind of value functions, in this work we focus on complimentary items, which, to our knowledge, is not studied in the context of influence maximization expect in \comic. To best of our knowledge, no work exists that addresses maximizing social welfare for multiple complimentary items.

We show that WelMax is NP-Hard. Moreover, for complimentary items, welfare is not even submodular with respect to seed allocation. However, we devise a new accounting method to show that even though welfare is not submodular, the greedy allocation strategy provides $1 - \frac{1}{e}$ - approximation of the optimal social welfare. This result is particularly interesting because preserving $1 - \frac{1}{e}$ approximation guaranty of a non-submodular function is non-trivial.

Lastly since utility driven treatment of users' adoption is entirely novel in the context of influence maximization, to learn values holistically from real data sources, we use the technique proposed by Albert et al. \cite{Albertbidding07}. \cite{Albertbidding07} uses bidding history on eBay to extract users' valuations of items. We show that our proposed algorithm considerably out-perform intuitive baselines both in terms of quality and scalability. To summarize we make the following contributions in this paper:
\begin{enumerate}
\item We are first to propose \model in section \ref{sec:model}, which draws its inspiration from well-studied utility based adoption from economics and applies in the context of influence maximization. Consequently, it opens up a novel objective of welfare maximization which has never been studied in context of viral marketing. We show that the problem is NP-hard.

\item Our analysis is section \ref{sec:prop} reveals that welfare is neither submodular nor supermodular for complimentary items, which makes designing efficient approximation algorithms non-trivial. However, we devise an intelligent proof strategy in section \ref{sec:algo} that shows a greedy allocation preserves the $1 - \frac{1}{e}$ approximation.

\item Finally we perform through experimental evaluations on both real and synthetic datasets to validate the efficacy and efficiency of our algorithm. To construct the real dataset consisting values of items, we use techniques of economics community. We substantiate the claim that our technique significantly outperforms any intuitive baseline.

\eat{ 	
\weic{We need to stress that our algorithm actually does not need valuation functions, prices, and
noise distributions as inputs. This is very important, and a very nice feature to make it more practical.
The succinct representation of valuation functions essentially goes away, since we never use the
	function in our algorithm.
 } 
} 

\end{enumerate}
}

\section{Background \& Related Work}\label{sec:related}
\subsection{Single Item Influence Maximization}

A social network is represented as a directed graph $G=(V,E,p)$ , $V$ being the set of nodes (users), $E$ the set of edges (connections), with $|V| = n$ and $|E| = m$. The function $p: E \to [0,1]$ specifies influence probabilities (or weights) between users. 
Two of the classic 
diffusion models are independent cascade (IC) and linear threshold (LT). 

We briefly review the IC model. Given a set $S\subset V$ of seeds, diffusion proceeds in discrete time steps. At $t=0$, only the seeds are active. At every time $t>0$, each node $u$ that became active at time $t-1$ makes one attempt at activating each of its inactive out-neighbors $v$, i.e., it tests if the edge $(u,v)$ is ``live'' or ``blocked''. The attempt succeeds (the edge $(u,v)$ is live) with probability $p_{uv} := p(u,v)$. The diffusion stops when no more nodes become active. 

We refer the reader to \cite{kempe03,infbook} for details of these models and their generalizations. 
The {\em influence spread} of a seed set $S$, denoted 
	$\sigma(S)$,  is 
the expected number of active nodes after the diffusion that starts from the seed set $S$ ends.

{\em Influence maximization} (IM) is the problem of finding, for a given number $k$ and a diffusion model, a set $S\subset V$ of $k$ seed nodes that generates the maximum influence spread $\sigma(S)$~\cite{kempe03}. 

Most existing studies on IM rely on the corresponding influence spread
	function $\sigma(S)$ being monotone and submodular.  
A set function $f: 2^U \to \mathbb{R}$ 
is \emph{monotone} if $f(S) \leq f(T)$ whenever $S\subseteq T\subseteq U$; \emph{submodular} if for any $S\subseteq T\subseteq U$ and any $x \in U \setminus T$, $f(S\cup\{x\}) - f(S) \geq f(T\cup\{x\}) - f(T)$; $f$ is \emph{supermodular} if the inequality above is reversed; and $f$ is \emph{modular} if it is both submodular and supermodular. 
Under both the IC and LT models, the IM problem is NP-hard \cite{kempe03} 
and computing $\sigma(S)$ exactly for any $S\subseteq V$ is \#P-hard~\cite{ChenWW10, ChenWW10b}. 
Since $\sigma(\cdot)$ is {\em monotone} and {\em submodular} for both IC and LT, a simple greedy seed selection algorithm together with Monte Carlo simulation for estimating the spread, achieves an $(1-1/e-\epsilon)$-approximation, for any $\epsilon > 0$~\cite{kempe03, submodular,kapraov-etal-greedy-opt-soda-2013}. 
While several heuristics for IM were proposed over the years \cite{ChenWW10,ChenWW10b,ChenWW10c,jung2012,kim2013}, they do not offer any guarantee on the influence spread achieved. 
Borgs et al. \cite{borgs14} proposed the notion of random reverse reachable sets (rr-sets) for spread estimation as an alternative to using MC simulations and paved the way for efficient approximation algorithms for IM. 
Tang et al.~\cite{tang14,tang15} leveraged rr-sets to propose scalable  approximation algorithms for IM called TIM and IMM, 
which are orders of magnitude faster than the classic greedy algorithm making use of MC simulations for estimating the spread \cite{kempe03}. 
Building on the notion of rr-sets, 
a family of scalable approximation algorithms such as TIM, IMM, and SSA, have been developed for IM ~\cite{tang14,tang15,Nguyen2016,Huang2017}.

Motivated by designing an influence oracle, that responds to queries to find seeds for any given budget, Cohen et al. \cite{cohen14} proposed an IM algorithm called SKIM that leverages bottom-$k$ sketches. A noteworthy property of SKIM is that it produces an ordering of the nodes such that any prefix of the ordering consisting of $k$ nodes is guaranteed to have a spread that is at least $(1-1/e-\epsilon)$ times the optimal spread for a seed budget of $k$. Thus, SKIM is essentially a  prefix-preserving algorithm in context of single item IM. However, as shown in \cite{cohen14}, SKIM does not dominate TIM in performance. Given that IMM is orders of magnitude faster than TIM, there is a natural motivation to build a prefix-preserving IM algorithm by adapting IMM to a multi-item context.

Influence maximization under non-submodular models has been studied in previous work~\cite{CLLR15,lu2015,ST17,LCSZ17}.
Most of them show hardness of approximation results~\cite{CLLR15,ST17,LCSZ17}.
In terms of approximation algorithms, Chen et al. rely on a low-rank assumption to
	provide an algorithm solving the non-submodular amphibious influence maximization problem
	with an approximation ratio of $(1-1/e-\epsilon)^3$~\cite{CLLR15}.
Lu et al. use the sandwich approximation to give a problem instance dependent approximation 
	ratio~\cite{lu2015}.
Schoenebeck and Tao provide a dynamic programming algorithm for influence maximization in
	the restricted one-way hierarchical blockmodel~\cite{ST17}.
Li et al. provide an approximation algorithm with approximation ratio $(1-\epsilon)^\ell(1-1/e)$,
	in a network when at most $\ell$ nodes are $\epsilon$-almost submodular and the rest of the nodes
	are submodular~\cite{LCSZ17}.
In contrast, our algorithm in this paper achieves the $(1-1/e-\epsilon)$ approximation ratio (same as
	the ratio for submodular maximization) for a non-submodular objective function, under
	a general network without further assumptions.

\subsection{Multi-item Influence Maximization}

More recently, multiple items have been considered in the context of viral marketing of non-competing items~\cite{dattaMS10,narayanam2012viral}. However their proposed solutions do not provide the typical $(1-1/e-\epsilon)$-approximation guarantee. Specifically
Datta et al.~\cite{dattaMS10} studied IM where propagations of items are assumed to be independent and provided a $1/3$-approximate algorithm. In~\cite{narayanam2012viral}, Narayanam et al. propose an extension of the LT model, where items are partitioned into two sets. A product can be adopted by a node only when it has already adopted a corresponding product in the other set. Such partition of itemsets, with strong dependencies on mutual adoptions of items in the two sets, represents a restricted special case of item adoptions in the real world. 
 
Competitive influence maximization is studied in~\cite{ChenNegOpi11,HeSCJ12,BudakAA11,PathakBS10,borodin10,BharathiKS07,CarnesNWZ07, lu2013} (see \cite{infbook} for a survey), 
where a user adopts at most one item from the set of items being propagated. The works mainly focus on the ``follower's perspective''~\cite{BharathiKS07,CarnesNWZ07, HeSCJ12,BudakAA11}, i.e., given competitor's seed placement, 
select seeds so as to maximize one's own spread, or minimize the competitor's spread. 
Lu et al. \cite{lu2013} 
focused on maximizing the total influence spread of all campaigners from a network host perspective, while ensuring fair allocation.

Lu et al.~\cite{lu2015} introduced a model called \comic capturing both competition and complementarity between a pair of items, leveraging the notion of a node level automaton (NLA). An NLA is a stochastic decision-making automaton governed by transition probabilities for a user adopting an item given what it has already adopted.   
Their model subsumes perfect complementarity and pure competition as special cases. 
However, their main study is confined to the diffusion of two items, and a straightforward extension to multiple items would need an exponential number of parameters in the number of items.
Moreover, their general parameter settings could lead to anomalies such as   one item complementing a second item but the second one competing with the first one, or being indifferent to it.

{\sl All of the above works on multiple item propagations focus on maximizing the expected number of item adoptions, which is not aligned with social welfare.}  

Myers and Leskovec analyzed the effects of different cascades on users and 
predicted the likelihood that a user will adopt an item, seeing the cascades in which the user participated~\cite{myers12}.
McAuley et al.\cite{mcauley15} proposed a method 
to learn complementary relationships between products from user reviews.
None of the works models the diffusion of complementary items, nor study the IM problem in this context.

\subsection{Combinatorial Auctions} 

Combinatorial auctions are widely studied and a survey is beyond the scope of this paper. Instead, we discuss a few key papers. In economics, adoption of items by users is modeled in terms of the utility that the user derives from the adoption~\cite{hirshleifer1978, myerson1981optimal, rasmusen1994, nisan2007}. A classic problem 
is given $m$ users and $n$ items and the utility function of users for various subsets of items, find an allocation of items to users such that the social welfare, i.e., the sum of utilities of users resulting from the allocation, is maximized. This intractable problem has been studied in both offline and online settings \cite{cramton-etal-ca-book-2007, feige-vondrak-demand-2010, kapraov-etal-greedy-opt-soda-2013, korula-etal-online-swm-arxiv-2017} and various approximation algorithms have been developed. 
All of them assume access to a value oracle or a demand oracle. A value oracle is a black box, which given a set of items as a query, returns the value of the itemset. A demand oracle is a black box, which given an assignment of prices to items, returns the itemset with maximum utility, i.e., value minus price.  
Also, the utility function in these settings is typically assumed to be sub-additive and as a result, this property extends to social welfare. Notably, these works do not consider the interaction of utility-maximizing item adoption with recursive propagation through a network. On the other hand, they consider more general settings where the utility functions are user-specific.

Inspired by the economics literature, we base item adoptions on item utility. Specifically, items have a price and a valuation and the difference is the utility. It is a well-accepted principle in economics and auction theory \cite{cramton-etal-ca-book-2007,snyder08} that users (agents), presented with a set of items, adopt a subset of items 
that maximizes their utility. 
It is this principle that we use in our framework to govern which users adopt what items.

The use of utility naturally leads to the notion of \emph{social welfare} and we study the problem of assigning seed nodes to various items in order to maximize expected social welfare, in a setting where items are complementary. 
To our knowledge, in the context of viral marketing, we are the first to study the problem of maximizing (expected) social welfare. 

\subsection{Welfare maximization on social networks}

There are a few studies related to welfare maximization on social networks, but they all have significant differences with
	our model and problem setting.
Sun et al. \cite{SunCLWSZL11} study participation maximization in the context of online discussion forums. 
An item in that context is a discussion topic, and adopting an item means posting or replying on the topic.
Item adoptions do propagate in the network, but 
	(a) item propagations are independent (i.e., valuation of itemsets is additive rather than supermodular or
	submodular), and
	(b) they have a budget on the number of items each seed node can be allocated with, rather than 
	on the number of seeds each item can be allocated to as studied in our model.
Bhattacharya et al. \cite{BhattacharyaDHS17} consider item allocations to nodes for welfare maximization in a network
	with network externalities, but the major differences with our  problem are:
	(a) they use network externalities to model social influence, i.e., a user's valuation of an item is affected
	by the number of her one- or two-hop neighbors in the network adopting the same item, but network externalities do
	not model the \emph{propagation} of influence and item adoptions, our main focus in modeling the viral marketing
	effect; 
	(b) they consider unit demand or bounded demand on each node, which means items are competing against one another
	on every node, while our study focuses on the case of complementary items rather than competing items, and item
	bundling is a key component in our solution;
	(c) they do not have budget on items so 
	an item could be allocated to any number of nodes, while
	we have a budget on the number of nodes that can be allocated to an item as seeds and we rely on propagation for
	more nodes to adopt items.
Despite these major differences, we will do an empirical comparison of our algorithm versus their algorithms to demonstrate
	that with propagation we can achieve the same social welfare with only a fraction of item budgets used in their
	solution. 
Abramowitz and Anshelevich~\cite{AbramowitzA18} study network formation with various constraints to maximize social
	welfare, but it has no item allocation, no item complementarity, and no influence propagation, and thus is further
	away from our work.
In summary, to our knowledge, our study is the only one addressing social welfare maximization in a network with 
	influence propagation, complementary items, and budget limits on items.

\section{\model Model}\label{sec:model}
\begin{table*}[t!]
	\scriptsize
	\centering
	\hspace*{-1mm}
  \begin{tabular}{|c|c|c|c|} \hline
  $G, V, E, n$ and $m$ & Graph, node set, edge set, number of nodes and number of edges  & $p : E \rightarrow [0,1]$ & Influence weight function  \\ \hline
  $\allitems$ & Universe of items & $\price$, $\val$, $\noise$ and $\util$ & Price, Value, Noise and Utility \\ \hline
  $\bvec$ & Budget vector & $\bmax$ & Maximum budget  \\ \hline 
  $\allalloc$ & Seed allocation, i.e. set of node-item pairs & $S$ & Seed nodes  \\ \hline
  $S_i^{\allalloc}$ & Seed nodes of item $i$ in allocation $\allalloc$ & $S^{\allalloc}$ & All seed nodes of allocation $\allalloc$  \\ \hline 
  ${\bf I}_v^{\allalloc}$ & Items allocated to seed node $v$ in allocation $\allalloc$ & $\awares(\user,t)$ and $\adopts(\user,t)$ & Desire and adoption set of $\user$ at time $t$ in allocation $\allalloc$   \\ \hline
  $\sigma$ and $\rho$ & Expected adoption and social welfare & $W$, $W^E$, $W^N$ & Possible world, edge and noise possible world   \\ \hline
  $Grd$ and $OPT$ & Greedy and optimal allocation & $B$ and $\allblocks$ & A block and a sequence of item disjoint blocks  \\ \hline
  $\bbude_i$ & Effective budget of block $B_i$ & $B^a_i$ and $a_i$ & Anchor block and anchor item of block $B_i$ \\ \hline
  \end{tabular}
  	\caption{Table of notations}
	 \label{tab:notations}
\end{table*}

In this section, we propose a novel model called \emph{utility driven independent cascade} model (\model for short) that combines the diffusion dynamics of the classic IC model with an item adoption framework where decisions are governed by utility. Table \ref{tab:notations} summarizes the notations used henceforth.
 
\subsection{Utility based adoption}

Utility is a widely studied concept in economics and is used to model item adoption decisions of users \cite{myerson1981optimal, feige-vondrak-demand-2010, nisan2007}. 
We next briefly review utility and provide the  specific formulation we use in this paper.
For general definitions related to utility, the reader is referred to \cite{nisan2007,feige-vondrak-demand-2010}

We let $\allitems$ denote a finite universe of items. The utility of a set of items $\itemset \subseteq \allitems$ for a user is the pay-off of $I$ to the user and 
	depends on the aggregate effect of three components: the price $\price$ that the user needs to pay, the valuation $\val$ that the user has for $\itemset$ and a random noise term $\noise$, used to model the uncertainty in our knowledge of the user's valuation on items, where $\price$, $\val$ and $\noise$ are all set functions over items.
For an item $i \in \allitems$, $\price(i) > 0$ denotes its price. We assume that price is additive, i.e., for an itemset $\itemset \subseteq \allitems$, $\price(I) = \sum_{i \in \itemset} \price(i)$. 
Notice that \model can handle any generic valuation function. In \textsection \ref{sec:complementary} we focus on complementary products. Hence we assume that $\val$ is supermodular (definition in \textsection \ref{sec:related}),
	meaning that the marginal value of an item with respect to an itemset $I$ 
	increases as $I$ grows. 
We also assume $\val$ is monotone since it is a natural property for valuations.  
For $i \in \allitems$, $\noise(i) \sim \mathcal{D}_i$ denotes the noise term associated with item $i$, 
where the noise may be drawn from any distribution $\mathcal{D}_i$ having a zero mean. Every item has an independent noise distribution. For a set of items $I \subseteq \allitems$, we assume the noise is additive, i.e., the noise of $I$, $\noise(I) := \sum_{i \in I} \noise(i)$. 
\chgins{Similar assumptions on additive noise are used in economics theory \cite{hirshleifer1978,Albertbidding07}.}

Finally, the utility of an itemset $\itemset$ is $\util(\itemset) = \val(\itemset) - \price(\itemset) + \noise(\itemset)$. Since noise is a random variable, utility is also random. 
Since noise is drawn from a zero mean distribution, $\mathbb{E}[\util(I)] = \val(\itemset) - \price(\itemset)$. 
We assume $\val(\emptyset) = 0$.

\subsection{Diffusion Dynamics}\label{sec:decomic} 

\subsubsection{Seed allocation}

Let $\vec{b} = (b_1, ..., b_{|\allitems|})$ be a vector of natural numbers representing the budgets associated with the items. An item's budget specifies the number of seed nodes that may be assigned to that item. We sometimes abuse notation and write $b_i \in \bvec$ to indicate that $b_i$ is one of the item budgets. We denote the maximum budget as $\bmax := max\{b_i\mid b_i\in \bvec\}$. We define an \emph{allocation} as a relation $\allalloc \subset V \times \allitems$ such that $\forall i\in\allitems: |\{(v,i)\mid v\in V\}| \le b_i$. In words, each item is assigned a set of nodes whose size is under the item's budget. We refer to the nodes $S_i^{\allalloc} := \{v \mid (v,i) \in \allalloc\}$ as the \emph{seed nodes} of $\allalloc$ for item $i$ and to the nodes $S^{\allalloc} := \bigcup_{i\in\allitems} S_i^{\allalloc}$ as the \emph{seed nodes} of $\allalloc$. We denote the set of items allocated to a node $v\in V$ as ${\bf I}_v^{\allalloc} := \{i\in\allitems \mid (v,i) \in \allalloc\}$. 

\subsubsection{Desire and adoption}

Every node maintains two sets of items -- desire set and adoption set. Desire set is the set of items that the node has been informed about (and thus potentially desires), via propagation or seeding. Adoption set is the subset of the desire set that the node adopts.  
At any time a node selects, from its desire set at that time, the subset of items that maximizes the utility, and adopts it. If there is a tie in the maximum utility between itemsets, then it is broken in favor of larger itemsets. We later show in Lemma \ref{lem:itemunion} of \textsection \ref{sec:prop} that breaking ties in this way results in a well-defined adoption behavior of the nodes. 
Following previous literature, we consider a progressive model: once a node desires an item, it remains in the node's desire set forever; similarly, once an item is adopted by a node, it cannot be unadopted later. 

For a node $u$, $\aware^{\allalloc}(u,t)$ denotes its desire set and $\adopt^{\allalloc}(u,t)$ denotes its adoption set at time $t$, pertinent to an allocation $\allalloc$. We omit the time argument $t$ to refer to the adoption (desire) set at the end of diffusion. 

We now present the diffusion process of \model.

\subsubsection{The diffusion model}

In the beginning of any diffusion, the noise terms of all items are sampled,  which are then used till the diffusion terminates. The diffusion then proceeds in discrete time steps, starting from $t=1$. 
Given an allocation $\allalloc$ at $t=1$, the seed nodes have their desire sets initialized : $\forall v \in S^{\allalloc}$, $\awares(v,1) = {\bf  I}_v^{\allalloc}$. 
Seed nodes then adopt the subset of items from the desire set that 
	maximizes the utility, breaking ties if needed 
	in favor of sets of larger cardinality. 
Thus, a seed node may adopt just a subset of items allocated to it.

Once a seed node $u'$ adopts an item $i$, it influences its out-neighbor $u$ 
	with probability $p_{u', u}$, and if
	it succeeds, then $i$ is added to the desire set of $u$ at time $t =2$. The rest of the diffusion process is described in Fig. \ref{fig:ecomicmodel}.

\begin{figure}[h]
\begin{framed}
{

    \begin{description}[style=unboxed,leftmargin=8pt]

\item[1.]
\textbf{Edge transition.} At every time step $t > 1$, for a node $u'$ that has adopted at least one new item at $t-1$, its outgoing edges are tested for transition.
For an untested edge $(u',\ua)$, flip a biased coin independently: $(u',\ua)$ is {\em live} w.p.\ $p_{u',\ua}$ and {\em blocked} w.p.\ $1-p_{u',\ua}$. Each edge is tested {\em at most once} in the entire diffusion process and its status is remembered for the duration of a diffusion process. 


Then for each node $\user$ that has at least one in-neighbor $u'$ (with a live edge $(u', \user)$) which adopted at least one item at $t-1$, $\ua$ is tested for possible item adoption (2-3 below).


\item[2.]
\textbf{Generating desire Set.}
The desire set of node $\user$ at time $t$, \\
$\awares(\user,t) = \awares(\user, t-1)   \cup_{u' \in N^-(\user)} (\adopts(u',t-1) )$, where $N^-(\user) = \{u' \mid (u', \user) \mbox{ is live}\}$ denotes the set of in-neighbors of $\user$ having a live edge connecting to $\user$. 

\item[3.]
\textbf{Node adoption.}
Node $\user$ determines the utilities for all subsets of items of the desire set  $\awares(\user,t)$. $\user$ then adopts a set $T^* \subseteq \awares(\user,t)$ such that  $T^* = \argmax_{T \in 2^{\awares(\user,t)}} \{\util(T) \mid T \supseteq \adopts(\user,t-1)\;\wedge\;\util(T)\ge 0\}$. $\adopts(\user, t)$ is set to $T^*$. 

    \end{description}

}
\end{framed}
\caption{Diffusion dynamics under \model model}
\label{fig:ecomicmodel}
\end{figure} 


\chgins{
We illustrate the diffusion under \model using an example shown in Figure \ref{fig:uic_diffusion}. The graph $G$ with edge probabilities and the utilities of the two items after sampling the noise terms, are shown on the left side. At time $t=1$, node $v_1$ is seeded with item $i_1$ and $v_3$ with $i_2$, hence they desire those items respectively. Since $i_1$ (resp. $i_2$) has a positive (resp. negative) individual utility, $v_1$ adopts $i_1$ (resp. $v_3$ does not adopt $i_2$). However 
$i_2$ remains in the desire set of $v_3$. Then at $t=2$, outgoing edges of $v_1$ are tested for transition: edge $(v_1,v_3)$ fails (shown as red dotted line), but edge $(v_1,v_2)$ succeeds (green solid line). 
Consequently $v_2$ desires and adopts $i_1$. Next at $t=3$, $v_2$'s outgoing edge $(v_2,v_3)$ is tested. As it succeeds, $v_3$ desires $i_1$. Since it already had $i_2$ in its desire set, it adopts the set $\{i_1,i_2\}$. Since there is no outgoing edge from $v_3$, the propagation ends. 
} 

\begin{figure*}
\begin{minipage}{.32\textwidth}
  \hspace{-35mm}
  \includegraphics[width=.75\linewidth]{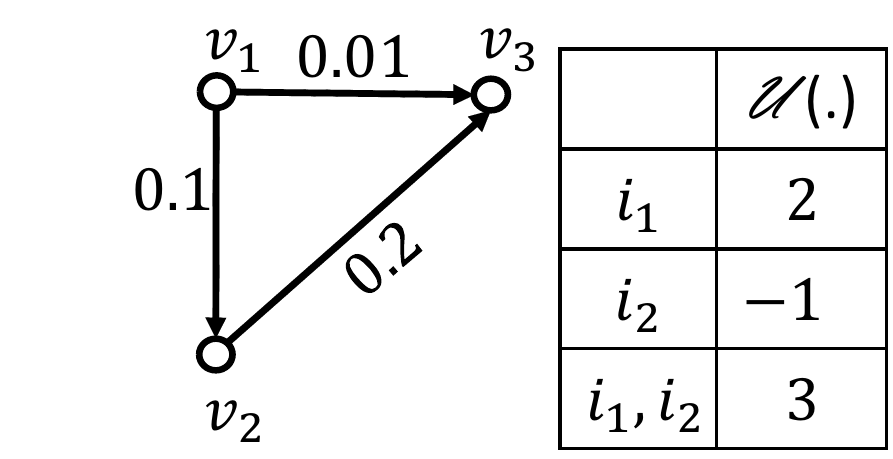}
\end{minipage}%
\begin{minipage}{.35\textwidth}
  \hspace{-40mm}
  \includegraphics[width=2.5\linewidth]{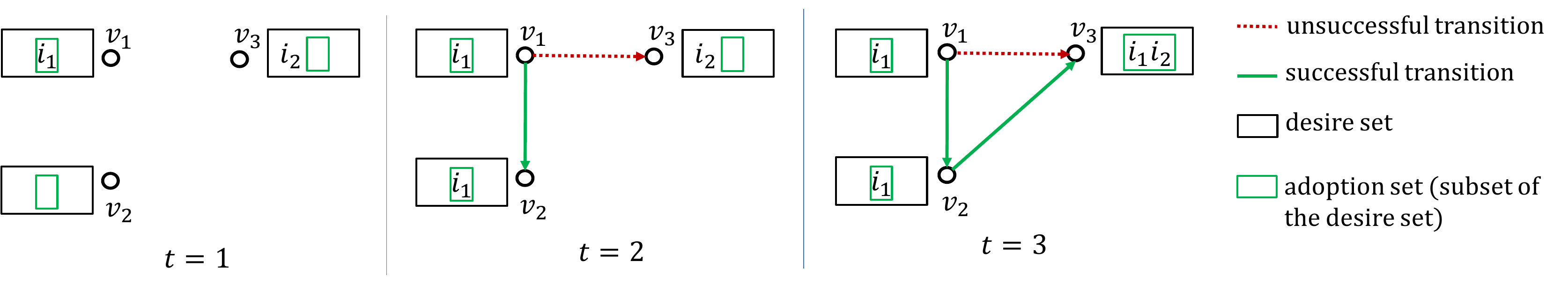}
\end{minipage}
 \captionof{figure}{Illustrating propagation of items under \model model; for simplicity, we assume noise is zero.}
  \label{fig:uic_diffusion}
\end{figure*} 

\subsection{Definition of Social welfare Maximization}\label{sec:sw} 

Let $G = (V,E,p)$ be a social network, $\allitems$ the universe of items under consideration. 
Here, we consider a novel utility-based objective called {\em social welfare}, which is the sum of all users' utilities of itemsets adopted by them after propagation converges. Formally, $\E[\util(\adopt^{\allalloc}(u))]$ is the expected utility that a user $\user$ enjoys for a seed allocation $\allalloc$ after propagation ends. Then the \emph{expected social welfare} (also known as ``consumer surplus'' in 
algorithmic game theory) for $\allalloc$,  is $\rho(\allalloc) = \sum_{u \in V}\E[\util(\adopt^{\allalloc}(\user))]$, where the expectation is over both the randomness of propagation and randomness of noise. 

\spara{Key features of \model} 

Our utility driven model has several benefits over existing models. 

Firstly, the seed users in our model are treated as rational users. Thus they also go through the same utility based decision making like every other user of the network.

Secondly, \comic cannot handle the arrival of a set of items together. It had to use arbitrary tie-breaking in a case when a node becomes aware more than one items simultaneously, to put an order in the adoption. In \model, we treat this by creating an explicit desire set for nodes first. The utility is a set function as opposed to the point probability of GAP. Therefore even if more than one items arrive at the same time instance, the utility can treat them as a set, without the need of enforcing an explicit order.

Third, the notion of utility opens up a whole new objective of social welfare maximization, where instead of maximizing just the adoption, the utility earned from the adoptions is aimed to be maximized. No other model reported to date, has been able to study social welfare maximization. \model is the first framework which enables the study of utility in IM context.

Fourth, for complimentary products under \model, a greedy allocation algorithm that preserves $1 - \frac{1}{e}$ approximation guaranty with respect to optimal social welfare, although social welfare is not submodular in seed size. This greedy algorithm is independent of the model parameters. Hence it can be easily extended to multiple items, whereas for \comic extending the algorithm beyond two items was difficult due to parameter explosion.

We define the problem of maximizing expected social welfare (\emph{WelMax}) as follows. We refer to $\val, \price, \noise$,  as the model parameters and denote them collectively as $\parameterset$. 

\begin{problem} \label{prob:welmax}
[WelMax]
Given $G = (V,E,p)$, the set of model parameters $\parameterset$, and budget vector $\bvec$, find a seed allocation $\allalloc^*$, such that $\forall i \in \allitems$, $|\alliseeds^{\allalloc^*}| \leq b_i$ and $\allalloc^*$ maximizes the expected social welfare, i.e., $\allalloc^* = \argmax_{\allalloc} \rho(\allalloc)$.
\end{problem}

Unfortunately, WelMax is NP-hard. 

\begin{proposition}
WelMax in the UIC model is NP-hard.
\end{proposition}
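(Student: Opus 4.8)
The plan is to prove hardness by showing that a restricted special case of WelMax coincides with the classical influence maximization (IM) problem under the IC model, which is already NP-hard \cite{kempe03}. Given an IM instance $(G,k)$ with $G=(V,E,p)$, I would build the WelMax instance on the same $G$ with a single item $i_1$ (so $\allitems=\{i_1\}$), budget vector $\bvec=(k)$, valuation $\val(\emptyset)=0$ and $\val(\{i_1\})=2$, additive price with $\price(i_1)=1$, and noise $\mathcal{D}_{i_1}$ the degenerate point mass at $0$ (which has zero mean, so it is admissible). On a one-element universe $\val$ is vacuously both monotone and supermodular, and $\price(i_1)>0$, so the instance respects all model assumptions, including the complementary-items restriction.

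Next I would verify that under this configuration the \model diffusion degenerates exactly into an IC diffusion of the item. The utility of holding $\{i_1\}$ is the deterministic value $\val(\{i_1\})-\price(i_1)+0 = 1 > 0 = \util(\emptyset)$, so in step~3 of Figure~\ref{fig:ecomicmodel} every node --- seed or not --- adopts $i_1$ as soon as $i_1$ enters its desire set, and there is never a tie to break. Hence a node's adoption set is nonempty precisely when it has been reached by the ``desire'' cascade; and since each edge $(u',u)$ is tested exactly once, immediately after $u'$ first adopts $i_1$ (equivalently, immediately after $u'$ is reached), the distribution of the set of reached nodes is identical to the distribution of the set of active nodes produced by the IC process seeded at $S^{\allalloc}=S_{i_1}^{\allalloc}$, up to an irrelevant one-step shift in the time indexing. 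Consequently $\rho(\allalloc)=\sum_{u\in V}\E[\util(\adopt^{\allalloc}(u))] = 1\cdot\sigma(S^{\allalloc})$, the IC influence spread of the seed set.

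It follows that an allocation $\allalloc^*$ is optimal for this WelMax instance if and only if $S^{\allalloc^*}$ is an optimal seed set of size at most $k$ for the IM instance, so the decision version of WelMax is NP-hard because that of IM is. I expect the only point that needs care is the argument that the desire cascade of \model faithfully reproduces the IC activation cascade --- in particular the ``each edge tested at most once, right after the source first adopts'' semantics and the guaranteed adoption at seeds --- but this is a routine consequence of the model definition. Alternatively one could reduce straight from Set Cover / Max-$k$-Cover using the same gadget Kempe et al.\ employ, but routing through IM is the shortest path.
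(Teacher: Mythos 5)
Your proof is correct and takes essentially the same route as the paper: reduce from single-item IM under IC by making the lone item's utility deterministically positive so that social welfare equals influence spread. (Your choice of $\val=2,\price=1$ is in fact slightly more careful than the paper's $\val=1,\price=0$, since the model stipulates $\price(i)>0$.)
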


\begin{proof}
It is easy to verify that 
Influence maximization under the IC model, an
	NP hard problem, is a special case
	of WelMax.

The result follows from the fact that the IM problem under the IC model is a special case of WelMax: let $\allitems = \{i\}$, set $\val(i) = 1$, $\price(i) = 0$ and set the noise term for item $i$ to $0$. This makes $\util(i) = 1$ so any influenced node will adopt $i$. Thus, the expected social welfare is simply the expected spread. We know maximizing expected spread under the IC model is NP-hard \cite{kempe03}.
\end{proof} 

\subsubsection{Function Types} 

Notice that the functions $\val$ and $\util$ are functions over sets of items, whereas $\sigma$ is a function over sets of network nodes, and $\rho$ is a function over allocations, which are sets of (node, item) pairs. When we speak of a certain property (e.g., submodularity) of a function of a given type, the property is meant w.r.t. the applicable type. E.g., $\sigma$ is 
monotone and submodular w.r.t. sets of nodes. 

\subsubsection{Design choices}

In the UIC model, the desire set of a user is triggered either by seeding or by the influence on a user as her peers adopt items. Thus following standard practice in IM models, we keep it progressive: a desire set never shrinks. On the other hand, the adoption decisions are driven by a standard assumption in economics  \cite{boadway1984welfare}, that users aim to maximize the utility when they adopt item(sets). \model inherits this assumption to govern adoption decisions of the users. In \model, we assume price is additive. There are different ways of pricing a bundle of items: additivity is a simple and natural pricing model in the absence of discounts \cite{chang1979pricing}. Further, we use supermodular value functions to model the effect of complementarity which follows the standard practice in the economics literature \cite{topkis2011supermodularity, milgrom1995complementarities}. 
\chgins{Finally, our way of modeling the noise can be viewed as reflecting the uncertainty in
	the population's reaction to an item.
One may further introduce personalized noise to model individual uncertainty,
	but this would make algorithm design and analysis more
	difficult. Our approximation bound would not hold when noise is personalized and when valuation is not supermodular.}
Although we make specific design choices in this work for simplicity and tractability of the model, the 
\model model can encompass any general form of value, price, and noise parameters and works for any triggering model \cite{kempe03}.


\section{\model For Complementary Products } \label{sec:complementary}

In this section, we focus on a setting where the items are mutually complementary, by modeling user valuation for itemsets as a \emph{supermodular} function. Recall that a function $f: 2^U \rightarrow \mathbb{R}$ is supermodular if for any subsets $S \subset T \subset U$ and item $x \in U \setminus T$, $f(S\cup\{x\}) - f(S) \le f(T\cup\{x\}) - f(T)$. 
Supermodularity captures the intuition that between complementary items, the marginal value-gain of an item w.r.t. a set of items increases as the set grows. Many companies offer complementary products, e.g., Apple
	offers iPhone, and AirPod. The marginal value-gain of AirPod is higher for a user who has bought an iPhone, compared to a user who hasn't. 
\chgins{Complementary items have been well studied in the economics literature and supermodular function is a typical way
	for modeling their valuations (e.g., see \cite{topkis2011supermodularity,Carbaugh16}).} 
\laksRev{As a preview, our experiments show that complementary items are natural and that their valuation is indeed supermodular.} 
We study adoptions of complementary items, by combining a basic stochastic diffusion model with the utility model for item adoption. The highlights of the section are as follows:

1. We propose a greedy allocation algorithm, and  show
	that the algorithm achieves a $(1-1/e-\epsilon)$-approximation ratio, even though 
	the social welfare function is neither submodular nor supermodular (\textsection\ref{sec:prop} and \textsection\ref{sec:algo}). 
	\chgins{
	Our main technical contribution is the block accounting method, which distributes social welfare to properly
	defined item blocks.
	The analysis is highly nontrivial and may be of independent interest to other studies.}
	
2. We design a \emph{prefix-preserving} seed selection algorithm for multi-item IM that may be
	of independent interest, with
	running time and memory usage in the same order as the scalable approximation  algorithm IMM \cite{tang15} on the maximum budgeted item,
	regardless of the number of items 
(\textsection\ref{sec:algo}). 

3. We conduct detailed experiments comparing the performance of our algorithm with baselines on five large real networks,
 with both real and synthetic utility configurations. Our results show that our algorithm significantly dominates the baselines in terms of running time or expected social welfare or both 
(\textsection\ref{sec:exp}).

\subsection{Properties Of \model Under Supermodular Valuations}\label{sec:prop}

Since WelMax is NP-hard, we explore properties of the welfare function --  monotonicity and submodularity, which can help us design efficient approximation strategies. We begin with an equivalent possible world model to help our analysis. 

\subsubsection{Possible world model}\label{sec:prop_pw}

Given an instance $\langle G, {\sf Param}\rangle$ of \model, where $G=(V,E,p)$, we define a \emph{possible world} associated with the instance, as a pair $W = (W^E, W^N)$, where $W^E$ is an  {\em edge possible world} (edge world), and $W^N$ is a {\em noise possible world} (noise world); $W^E$ is a sample graph drawn from the distribution associated with $G$ by sampling edges, and $W^N$ is a sample of noise terms for each item in $\allitems$, drawn from the corresponding item's noise distribution in {\sf Param}. 
 
As all the random terms are sampled, propagation and adoption in $W$ is fully deterministic. For nodes $u, v \in V$, we say $v$ is reachable from $u$ in $W$ if there is a directed path from $u$ to $v$ in the deterministic graph $W^E$. $\noise_W(i)$ denotes the sampled noise for item $i$ and $\util_W(I)$ denotes the (deterministic) utility of itemset $I$, in world $W$. For a node $u$ and an allocation $\allalloc$, we denote its desire and adoption sets at time $t$ in world $W$ as $\awarews(u,t)$ and $\adoptws(u,t)$ respectively. When only the noise terms are sampled, i.e., in a noise world $W^N$, the utilities are deterministic, but the propagation remains random. 

Given a possible world $W = (W^E, W^N)$ and an allocation $\allalloc$, a node $v \in V$ adopts a set of items as follows: (i) if $v$ is a seed node, then it desires ${\bf I}_v^{\allalloc}$ at time $t=1$ and adopts an itemset $\adoptws(v,1) := \argmax \{\util_W(I) \mid I\subseteq {\bf I}_v^{\allalloc}\}$; (ii) if $v$ is a non-seed node, and $t>1$, then it desires the itemset $\awarews(v,t) := (\bigcup_{u\in N^{-1}_W(v)}\adoptws(u,t-1)) \cup \awarews(v,t-1)$,
where $N^{-1}_W(v)$ denotes the in-neighbors of $v$ in the deterministic graph $W^E$, i.e., at time $t$, node $v$ desires items that it desired at $(t-1)$ as well as items any of its in-neighbors in $W^E$ adopted at $(t-1)$; node $v$  then adopts the itemset $\adoptws(v,t) := \argmax \{\util_W(I) \mid I \subseteq \awarews(v,t) \;\&\; \adoptws(v,t-1)\subseteq I\}$. If there is more than one itemset in $\awarews(v,t)$ with the same maximum utility, we assume that $v$ breaks ties in favor of the set with the larger cardinality.

$\val(\cdot)$ is supermodular while $\price(\cdot)$ and $\noise_W(\cdot)$ are additive and hence modular, so it immediately follows that $\util_W(\cdot)$ is supermodular with respect to sets of items.
Thus the expectation of utility w.r.t. edge worlds is supermodular. However, $\utilow$ is not monotone, because adding an item with a very high price may decrease the utility. 

We will show a basic property, which helps us showing that the adoption behavior of the nodes is well defined in $\model$. In 
any possible world, given a set of items that a node desires, there is a unique set of items that it adopts. Specifically, if there are multiple sets tied for utility, the node will adopt their union. For a set function $f:2^U\rightarrow R$, we define $f(T\mid S) = f(S\cup T) - f(S)$. 

We say that an itemset $A$ is a \emph{local maximum} w.r.t. the utility function $\util_W$, if the utility of $A$ is the maximum among all its subsets, i.e., $\util_W(A) = \max_{A'\subseteq A} \util_W(A')$. 
The following lemma is based on simple 
	algebraic manipulations on the definitions of supermodularity and local maximum.

\begin{restatable}{lemma}{lemitemunion}
	\label{lem:itemunion}
(Local maximum). Let $W$ be a possible world and $A, B \subseteq \allitems$ be any itemsets 
such that $A$ and $B$ are local maximum with respect to $\util_W$.
Then $(A\cup B)$ is also a local maximum with respect to $\util_W$, 
i.e., $\util_W(A\cup B) = \max_{C\subseteq A\cup B} \util_W(C)$.
\end{restatable}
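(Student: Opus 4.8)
The plan is to reduce the statement to the lattice form of supermodularity and then to absorb the items of $B$ into $A$ one ``chunk'' at a time, rather than trying to compare $\util_W(A\cup B)$ with an arbitrary $\util_W(C)$ in a single stroke. Concretely, since $\val(\cdot)$ is supermodular and $\price(\cdot),\noise_W(\cdot)$ are modular, $\util_W(\cdot)$ is supermodular, which I would use in the equivalent form $\util_W(X\cup Y)+\util_W(X\cap Y)\ge \util_W(X)+\util_W(Y)$ for all $X,Y\subseteq\allitems$ (either cited as the standard equivalence, or derived in one line from the increasing-marginal-returns definition $\util_W(\{x\}\mid S)\le\util_W(\{x\}\mid T)$ for $S\subseteq T$).

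Now fix an arbitrary $C\subseteq A\cup B$; the goal is $\util_W(A\cup B)\ge\util_W(C)$. \textbf{Step 1:} apply supermodularity to $X=C$, $Y=A$ to get $\util_W(C\cup A)+\util_W(C\cap A)\ge \util_W(C)+\util_W(A)$; since $C\cap A\subseteq A$ and $A$ is a local maximum, $\util_W(A)\ge\util_W(C\cap A)$, so rearranging yields $\util_W(C\cup A)\ge\util_W(C)$. \textbf{Step 2:} set $D:=C\cup A$, and observe $A\subseteq D\subseteq A\cup B$, so $D\cup B=A\cup B$ and $D\cap B\subseteq B$. Applying supermodularity to $X=D$, $Y=B$ gives $\util_W(A\cup B)+\util_W(D\cap B)\ge \util_W(D)+\util_W(B)$; since $B$ is a local maximum, $\util_W(B)\ge\util_W(D\cap B)$, hence $\util_W(A\cup B)\ge\util_W(D)=\util_W(C\cup A)\ge\util_W(C)$ using Step 1 in the last inequality. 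As $C\subseteq A\cup B$ was arbitrary, $A\cup B$ is a local maximum, which is the claim.

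The argument is short, and the only place where a naive attempt breaks down is the single obstacle worth highlighting: applying supermodularity directly to $C$ and $A\cup B$ is useless because their intersection is just $C$, so one must pass through the intermediate set $C\cup A$ and spend \emph{both} hypotheses — local maximality of $A$ in Step~1 and of $B$ in Step~2. A convenient by-product to record is that the family of local maxima of $\util_W$ is closed under union, and hence (by induction) under arbitrary finite unions; this is precisely what the subsequent reasoning in \textsection\ref{sec:prop} needs to conclude that the node adoption sets are well defined.
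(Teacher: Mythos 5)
Your proof is correct and is essentially the paper's own argument: the paper also fixes an arbitrary $C\subseteq A\cup B$ and runs the chain $\util_W(C)\le\util_W(C\cup B)\le\util_W(A\cup B)$ via two applications of supermodularity (written in marginal form $\util_W(\cdot\mid\cdot)$ rather than the lattice form) together with one use of each local-maximum hypothesis; you simply route through the intermediate set $C\cup A$ instead of $C\cup B$, which is the same proof up to swapping the symmetric roles of $A$ and $B$.
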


\begin{proof}
For any subset $C \subseteq A\cup B$, we have
	\begin{align} 
	\util_W(C) & = \util_W(C\setminus B \mid B \cap C) + \util_W(B \cap C)  \nonumber \\
	& \le \util_W(C\setminus B \mid B) + \util_W(B)   \label{eq:CB} \\
	& = \util_W(C \cup B) \nonumber \\
	& = \util_W(B \mid C\setminus B) + \util_W(C \setminus B) \nonumber \\
	& \le \util_W(B \mid A) + \util_W (A)  \label{eq:AB} \\
	& = \util_W(A \cup B). \nonumber 
	\end{align} 
	Inequality~\eqref{eq:CB} follows from applying supermodularity of $\util_W$
	on the first term, and applying local maximum of $B$ on the second term.
	Inequality~\eqref{eq:AB} follows applying supermodularity of $\util_W$
	on the first term, and applying local maximum of $A$ on the second term.
\end{proof}

An immediate consequence of Lemma~\ref{lem:itemunion} is that when two itemsets have the same largest utility, their union must also have the largest utility, and thus our tie-breaking rule is well-defined.
Another consequence is the following lemma.

\begin{restatable}{lemma}{lemlocalmaximum} 
\label{lem:adoptmax}
	For any node $u$ and any time $t$, the itemset adopted by $u$ at time $t$, 
	$\adoptws(u,t)$, must be
	a local maximum.
\end{restatable}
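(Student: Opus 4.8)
The plan is to prove this by induction on the time step $t$, leveraging Lemma~\ref{lem:itemunion} at the inductive step. For the base case $t=1$, observe that a seed node $u$ adopts $\adoptws(u,1) = \argmax\{\util_W(I) \mid I \subseteq {\bf I}_u^{\allalloc}\}$; by the very definition of this argmax, $\util_W(\adoptws(u,1)) = \max_{I \subseteq {\bf I}_u^{\allalloc}} \util_W(I) \ge \max_{I' \subseteq \adoptws(u,1)} \util_W(I')$, since $\adoptws(u,1) \subseteq {\bf I}_u^{\allalloc}$. Hence $\adoptws(u,1)$ is a local maximum. (Non-seed nodes have empty adoption sets at $t=1$, and the empty set is trivially a local maximum since $\util_W(\emptyset) = \val(\emptyset) - \price(\emptyset) + \noise_W(\emptyset) = 0$ and it has no proper subsets.)

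For the inductive step, assume that for every node $v$ and every time $s \le t-1$, the set $\adoptws(v,s)$ is a local maximum w.r.t. $\util_W$. Fix a node $u$ at time $t$. The adopted set is $\adoptws(u,t) = \argmax\{\util_W(I) \mid I \subseteq \awarews(u,t) \;\&\; \adoptws(u,t-1) \subseteq I\}$, where $\awarews(u,t) = \adoptws(u,t-1) \cup \bigcup_{v \in N^{-1}_W(u)} \adoptws(v,t-1)$. By the inductive hypothesis, $\adoptws(u,t-1)$ and each $\adoptws(v,t-1)$ is a local maximum; applying Lemma~\ref{lem:itemunion} repeatedly (it is closed under pairwise union, hence under any finite union), the set $A := \adoptws(u,t-1) \cup \bigcup_{v} \adoptws(v,t-1) = \awarews(u,t)$ is itself a local maximum, i.e., $\util_W(A) = \max_{C \subseteq A} \util_W(C)$. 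This means the \emph{unconstrained} maximizer over subsets of $\awarews(u,t)$ is attained at $A = \awarews(u,t)$ itself, so a fortiori $A$ satisfies the progressivity constraint $\adoptws(u,t-1) \subseteq A$ (since $\adoptws(u,t-1) \subseteq \awarews(u,t)$). Therefore $\adoptws(u,t) = \awarews(u,t) = A$, which we just argued is a local maximum, completing the induction.

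The only subtle point — and the part worth stating carefully — is why the progressivity constraint $\adoptws(u,t-1) \subseteq I$ does not interfere: one might worry that the constrained argmax differs from the unconstrained one and thus might not be a local maximum. The resolution is exactly the observation above: once we know $\awarews(u,t)$ is itself a local maximum, the full desire set is an unconstrained maximizer, it trivially contains $\adoptws(u,t-1)$, and so it is also the constrained maximizer; our tie-breaking rule (favoring larger cardinality) then pins the adopted set to be precisely $\awarews(u,t)$. Thus no conflict arises, and in fact the proof yields the slightly stronger structural fact that $\adoptws(u,t) = \awarews(u,t)$ for non-seed nodes at $t>1$ — though for the lemma as stated we only need that it is a local maximum. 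I expect the main obstacle to be purely expository: being precise that Lemma~\ref{lem:itemunion}, stated for two sets, extends to arbitrary finite unions (an immediate induction), and handling the boundary cases (empty adoption sets, seed versus non-seed nodes) cleanly.
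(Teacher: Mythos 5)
Your base case is fine, but the inductive step has a genuine gap: you assert that the desire set $\awarews(u,t)$ equals $\adoptws(u,t-1)\cup\bigcup_{v\in N^{-1}_W(u)}\adoptws(v,t-1)$ and is therefore a union of local maxima. That misstates the model: the recurrence is $\awarews(u,t)=\awarews(u,t-1)\cup\bigcup_{v}\adoptws(v,t-1)$, i.e., it carries forward the previous \emph{desire} set, not the previous adoption set. For a seed node, $\awarews(u,1)={\bf I}_u^{\allalloc}$ persists forever, and the allocated set need not be a local maximum, so neither is the desire set at any later time. Concretely, take $a\to b$ with $b$ seeded with $i_2$ where $\util_W(i_2)=-9$, $a$ seeded with $i_1$ where $\util_W(i_1)=1$, and $\util_W(\{i_1,i_2\})=-7$ (consistent with supermodular $\val$ and additive price). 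At $t=2$ node $b$ desires $\{i_1,i_2\}$, which is not a local maximum, and adopts only $\{i_1\}$. Your argument would conclude $b$ adopts its full desire set $\{i_1,i_2\}$, which is false; the ``stronger structural fact'' $\adoptws(u,t)=\awarews(u,t)$ you derive is therefore also false. (Your claim does happen to hold for non-seed nodes, since every item in their desire set entered via an in-neighbor's adoption, but the lemma is stated for all nodes.)

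The paper avoids this by never reasoning about the desire set as a whole. It argues by contradiction: if $\adoptws(u,t)$ is not a local maximum, pick a local-maximum subset $C\subset\adoptws(u,t)$ with $\util_W(C)>\util_W(\adoptws(u,t))$; by Lemma~\ref{lem:itemunion} and the inductive hypothesis on $\adoptws(u,t-1)$, the set $C\cup\adoptws(u,t-1)$ is a local maximum, lies inside the desire set, satisfies the progressivity constraint, and has utility at least $\util_W(C)>\util_W(\adoptws(u,t))$ --- contradicting the definition of the constrained argmax. Your use of Lemma~\ref{lem:itemunion} is in the right spirit, but it must be applied to a carefully chosen competitor set inside the adopted set, not to the entire desire set.
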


\begin{proof}
We prove by an induction on $t$.
The base case of $t=1$ is true because by the model, node $u$ adopts the local maximum among
	all subsets of items allocated to it.
For the induction step, suppose for a contradiction that $\adoptws(u,t)$ is not a local maximum
	but $\adoptws(u,t-1)$ is a local maximum.
Then there must exist a $C\subset \adoptws(u,t)$ that is a local maximum and
	$\util_W(C) > \util_W(\adoptws(u,t))$.
By Lemma~\ref{lem:itemunion}, $C \cup \adoptws(u,t-1)$ is also a local maximum, and thus
	$C \cup \adoptws(u,t-1)$ cannot be $\adoptws(u,t)$.
But since $\util_W(C \cup \adoptws(u,t-1)) \geq \util_W(C) > \util_W(\adoptws(u,t))$, $u$
	should adopt $C \cup \adoptws(u,t-1)$ instead of $\adoptws(u,t)$, a contradiction.
\end{proof}

Our next result shows that in any given possible world, adoption of items propagates through reachability. 
Reachability is a key property to be used later in Lemmas \ref{lem:greedyasw} and \ref{lem:allasw} while establishing the approximation guarantee of our algorithm.

\begin{restatable}{lemma}{lemreachable}
	\label{lem:reachable}
	(Reachability). For any item $i$ and any possible world $W$, if a node $u$ adopts $i$ under allocation $\allalloc$, then all nodes that are reachable from $u$ in the world $W$ also adopt $i$.
\end{restatable}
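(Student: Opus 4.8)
The plan is to prove the Reachability lemma by induction on time steps, tracking how adoptions propagate along live edges. The key observation is that once a node $u$ adopts an item $i$ at some time $t$, the progressive nature of the model guarantees $i$ stays in $u$'s adoption set forever; then at time $t+1$ every out-neighbor $w$ of $u$ with a live edge $(u,w)$ sees $i$ enter its desire set, and I must argue $w$ is then forced to adopt $i$ (not merely desire it). This last point is where the supermodularity of $\util_W$ and Lemma~\ref{lem:adoptmax} do the real work.

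More precisely, I would first fix the possible world $W$ and the allocation $\allalloc$, and let $u$ be a node that adopts $i$, say first at time $t_0$. It suffices to show: (a) if $u$ adopts $i$ at time $t$ and $(u,w)$ is live, then $w$ adopts $i$ at some time $\le t+1$; the lemma then follows by a straightforward induction along any directed path in $W^E$ from $u$ to a reachable node $v$. For (a), at time $t+1$ we have $i \in \awarews(w,t+1)$ by the desire-set update rule. Now $w$ adopts $\adoptws(w,t+1) = \argmax\{\util_W(I) \mid I \subseteq \awarews(w,t+1),\ \adoptws(w,t)\subseteq I\}$, which by Lemma~\ref{lem:adoptmax} is a local maximum. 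Suppose for contradiction $i \notin \adoptws(w,t+1)$. I want to show that adding $i$ to $\adoptws(w,t+1)$ does not decrease utility, contradicting maximality (and the tie-breaking rule favoring larger sets).

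The crux is: why is $\util_W(\adoptws(w,t+1) \cup \{i\}) \ge \util_W(\adoptws(w,t+1))$? Here I would use that $\{i\}$ — or rather, the itemset $\adoptws(u,t)$ that $u$ adopted, which contains $i$ — is a local maximum (Lemma~\ref{lem:adoptmax}), and invoke Lemma~\ref{lem:itemunion}: the union $\adoptws(w,t+1) \cup \adoptws(u,t)$ is a local maximum, hence has utility at least that of any subset, in particular at least $\util_W(\adoptws(w,t+1))$. Since $\adoptws(u,t) \subseteq \awarews(w,t+1)$ (all of $u$'s adopted items flow to $w$ along the live edge), the set $\adoptws(w,t+1) \cup \adoptws(u,t)$ is a feasible candidate for $w$'s adoption at time $t+1$ (it contains $\adoptws(w,t)$ and is a subset of $w$'s desire set), it contains $i$, and it has utility $\ge \util_W(\adoptws(w,t+1))$. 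By the tie-breaking rule in favor of larger cardinality, $w$ must adopt a set containing it, so $i \in \adoptws(w,t+1)$ — contradiction. The progressive model then keeps $i$ adopted by $w$ thereafter.

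**The main obstacle** I anticipate is being careful that the itemset witnessing $w$'s adoption of $i$ is genuinely \emph{admissible} as an adoption set — i.e., it must simultaneously contain $w$'s previous adoption set $\adoptws(w,t)$, be contained in $w$'s current desire set, and be a local maximum — and that the tie-breaking rule actually forces $i$ in. Lemma~\ref{lem:itemunion} handles the local-maximum closure and Lemma~\ref{lem:adoptmax} supplies the inputs, so the argument is essentially a bookkeeping exercise once those are in place; the only subtlety is making sure the inductive hypothesis along the path is stated for the right quantity (adoption, persisting over time) rather than mere desire.
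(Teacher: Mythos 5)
Your proposal is correct and follows essentially the same route as the paper's proof: take the predecessor's adopted set and the current node's adopted set (both local maxima by Lemma~\ref{lem:adoptmax}), apply Lemma~\ref{lem:itemunion} to conclude their union is a local maximum contained in the desire set, and derive a contradiction with maximality plus the larger-cardinality tie-breaking rule. The only cosmetic difference is that you phrase the path argument as an explicit induction, whereas the paper picks the first node on the path that fails to adopt $i$; these are interchangeable.
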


\begin{proof}
Consider a possible world $W$ and a node $u$ that adopts item $i$. Consider any node $v$ reachable from $u$ in $W$ that does not adopt $i$. Let $(u, v_1, ..., v_k, v)$ be a path in $W^E$. Assume w.l.o.g. that $v$ is the first node on the path that does not adopt $i$. $\adoptws(v_k,t)$ and $\adoptws(v,t+1)$ respectively are the itemsets adopted by $v_k$ at time $t$ and by $v$ at time $t+1$. 
	Let $J = \adoptws(v_k,t) \cup \adoptws(v,t+1)$. Clearly $i \in J$ and $J \subset \awarews(v,t+1)$, desire set of $v$ at $t+1$.
	We know that both $\adoptws(v_k,t)$ and $\adoptws(v,t+1)$ are local maximums 
		by Lemma~\ref{lem:adoptmax}. 
	Then by Lemma \ref{lem:itemunion}, $J$ is also a local maximum, hence $\utilw(J) \geq \utilw(\adoptws(v,t+1))$, as $\adoptws(v,t+1) \subset J$. Also, $|J|> |\adoptws(v,t+1)|$, as $J$ contains at least one more item $i$. Thus as per our diffusion model $v$ at time $t$ should adopt the larger cardinality set $J$. Hence $i$ is adopted by $v$.
\end{proof}

The \emph{social welfare} of an allocation $\allalloc$ in a possible world $W = (W^E, W^N)$ is defined as the sum of utilities of itemsets adopted by nodes, i.e., $\rho_W(\allalloc) := \sum_{v\in V} \util(\adoptws(v))$. 
The \emph{expected social welfare} of an allocation $\allalloc$ is $\rho(\allalloc) := \mathbb{E}_{W^E}[\mathbb{E}_{W^N}[\rho_W(\allalloc)]] = \mathbb{E}_{W^N}[\mathbb{E}_{W^E}[\rho_W(\allalloc)]]$. 
It is straightforward to show that the expected social welfare of allocation $\allalloc$ defined in \textsection\ref{sec:sw} is equivalent to the above definition. 

We now proceed to investigate the properties of social welfare.

\subsubsection{Properties of social welfare} 

The following theorem summarizes the property of social welfare function. The key intuition is that in each
possible world, the social welfare is monotone, a result proved by induction on the propagation time. However
it is not submodular because the valuation is supermodular, and it is not supermodular because the propagation
	based on IC model would have submodular influence coverage.

\begin{restatable}{theorem}{thmnotsubsuper}
Expected social welfare is monotone with respect to the sets of node-item allocation pairs.
However it is neither submodular nor supermodular.
\end{restatable}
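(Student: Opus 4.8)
The plan is to prove the three assertions separately: (1) monotonicity of $\rho$, (2) failure of submodularity, and (3) failure of supermodularity. For monotonicity, I would work in a fixed possible world $W = (W^E, W^N)$ and show that $\rho_W(\allalloc)$ is monotone with respect to the allocation relation $\allalloc \subseteq V \times \allitems$ (adding a node-item pair never decreases the world's social welfare); monotonicity of the expectation $\rho(\allalloc) = \mathbb{E}_{W}[\rho_W(\allalloc)]$ then follows by linearity. The key technical step is to fix $W$ and compare the diffusion under $\allalloc$ versus $\allalloc' = \allalloc \cup \{(v,i)\}$, arguing by induction on the propagation time $t$ that for every node $u$ we have $\adoptws[\allalloc](u,t) \subseteq \adoptws[\allalloc'](u,t)$ — i.e., a larger allocation pointwise dominates in adoption sets. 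The base case at $t=1$ follows from Lemma~\ref{lem:itemunion} (adding $i$ to $v$'s desire set can only enlarge the local-maximum adoption set, via the union rule), and the inductive step combines the monotone growth of desire sets under containment with Lemma~\ref{lem:adoptmax} (adopted sets are local maxima) and the union/tie-breaking argument again. Once adoption sets are pointwise monotone, I still need $\util_W(\adoptws[\allalloc](u)) \le \util_W(\adoptws[\allalloc'](u))$ for each node; this is exactly the local-maximum property — if $S \subseteq T$ and both are local maxima of $\util_W$, then $\util_W(S) \le \util_W(T)$ — which is immediate from the definition of local maximum. Summing over $u \in V$ and taking expectation gives $\rho(\allalloc) \le \rho(\allalloc')$.

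For the non-submodularity and non-supermodularity claims, I would exhibit small explicit counterexamples (a few nodes, two items) — this is the cleanest route and matches the intuition sketched in the theorem's preamble. To defeat submodularity, I would use the complementarity of items: take a configuration where allocating item $i_2$ to a seed node $v$ has (essentially) zero marginal welfare when $i_1$ is absent from the picture, but large marginal welfare once $i_1$ has also been seeded and has propagated, because the bundle $\{i_1,i_2\}$ has supermodular (strictly superadditive) value while singletons have low or negative utility. Concretely, with a utility table like $\util(\{i_1\}) = \util(\{i_2\}) = 0$ and $\util(\{i_1,i_2\})$ large, one checks that the marginal of adding a seed for $i_2$ is larger in the presence of more seeds (for $i_1$), violating the diminishing-returns inequality. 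To defeat supermodularity, I would instead isolate the IC-propagation layer: make a single item ($i_1$, say, with $\util(\{i_1\})=1$) so that $\rho$ reduces to influence spread $\sigma$, which is submodular and not supermodular under IC (Kempe et al.); a standard two-seeds-overlapping-reach example gives strictly decreasing marginal spread. One must be a little careful that these one-item or degenerate-utility configurations are legal instances of \model (prices positive, valuation monotone and supermodular, noise zero-mean — all satisfiable by taking noise $\equiv 0$ and tiny positive prices), but that is routine.

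The main obstacle is the monotonicity proof's inductive step, specifically handling the interaction of the progressive constraint ($\adoptws(u,t-1) \subseteq \adoptws(u,t)$, i.e., the $\argmax$ is taken over supersets of the previous adoption set) with the pointwise-domination induction hypothesis across two different allocations running at possibly different ``speeds.'' The subtlety is that $\adoptws[\allalloc'](u,t)$ is constrained to contain $\adoptws[\allalloc'](u,t-1)$, which by hypothesis contains $\adoptws[\allalloc](u,t-1)$, but we must also ensure it contains the candidate $\adoptws[\allalloc](u,t)$ — this again needs the union-of-local-maxima lemma: $\adoptws[\allalloc](u,t) \cup \adoptws[\allalloc'](u,t-1)$ is a local maximum, lies inside $\awarews[\allalloc'](u,t)$ (since desire sets are pointwise monotone by the IH applied to in-neighbors), contains $\adoptws[\allalloc'](u,t-1)$, and has utility at least that of the actual choice, so by the tie-breaking-toward-larger-sets rule the actual $\adoptws[\allalloc'](u,t)$ must contain it, hence contains $\adoptws[\allalloc](u,t)$. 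I would state and carefully discharge this step; everything else (base case, expectation, the two counterexamples) is comparatively mechanical.
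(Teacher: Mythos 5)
Your proposal is correct and follows essentially the same route as the paper: monotonicity is proved pointwise in a fixed possible world by induction on propagation time (using the union-of-local-maxima lemma and the tie-breaking rule, then taking expectations), and the two modularity failures are defeated by exactly the counterexamples you describe — a single node with two items whose singleton utilities are non-positive but whose bundle utility is large (killing submodularity), and a single positive-utility item on a two-node path so that $\rho$ degenerates to IC spread (killing supermodularity). The only cosmetic difference is that the paper's inductive step invokes the Reachability lemma rather than re-running your union-of-local-maxima argument at each time step, but the substance is the same.
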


\begin{proof} 

To prove monotonicity, we show by induction on propagation time that the social welfare in any world $W$ is monotone. The result follows upon taking expectation. Consider allocations $\allalloc \subseteq \allalloc'$ and any node $v$.  
	
	\noindent 
	\underline{Base Case}: At $t=1$, desire happens by seeding. By assumption, ${\bf I}_v^{\allalloc} \subseteq {\bf I}_v^{\allalloc'}$. Thus, $\aware_W^{\allalloc}(v,1)\subseteq \aware_W^{\allalloc'}(v,1)$, where $\aware_W^{\allalloc}(v,1)$ denotes the desire set of $v$ in world $W$ under allocation $\allalloc$. Suppose $J := \adopt_W^{\allalloc}(v,1) \setminus  \adopt_W^{\allalloc'}(v,1)$ is non-empty. From the semantics of adoption of itemsets, we have $\util_W(J\mid \adopt_W^{\allalloc}(v,1) \setminus J) \ge 0$. Now, $\adopt_W^{\allalloc}(v,1) \setminus J \subseteq \adopt_W^{\allalloc'}(v,1)$. By supermodularity of utility, $\util_W(J\mid \adopt_W^{\allalloc'}(v,1)) \ge 0$.
	Since $J \subseteq \adopt_W^{\allalloc}(v,1) \subseteq \aware_W^{\allalloc}(v,1)
	\subseteq \aware_W^{\allalloc'}(v,1)$,
	by the semantics of itemset adoption, the set $J \cup \adopt_W^{\allalloc'}(v,1)$ will be adopted by $v$ at time $1$, a contradiction to the assumption that
	$\adopt_W^{\allalloc'}(v,1)$ is the adopted itemset by $v$ at time $1$.
	
	\noindent
	\underline{Induction}: By Lemma~\ref{lem:reachable}, we know that once a node adopts an item, all nodes reachable from it in $W^E$ also adopt that item. Furthermore, reachability is monotone in seed sets. From this, it follows that $\adopt_W^{\allalloc}(v,\tau+1) \subseteq \adopt_W^{\allalloc'}(v,\tau+1)$. Define $\adopt_W^{\allalloc}(v) := \bigcup_t \adopt_W^{\allalloc}(v,t)$. By definition, an adopted itemset has a non-negative utility, so we have $\rho_W(\allalloc) = \sum_{v\in V} \util_W(\adopt_W^{\allalloc}(v))\le \sum_{v\in V} \util_W(\adopt_W^{\allalloc'}(v)) = \rho_W(\allalloc')$. This shows that the social welfare in any possible world is monotone, as was to be shown.
	
	For submodularity and supermodularity, we give counterexamples. 
	Consider a network with single node $u$ and two items $i_1$ and $i_2$. Let $\price(i_1) > \val(i_1)$ and $\price(i_2) > \val(i_2)$. However $\val(\{i_1,  i_2\}) > \price(i_1) + \price(i_2)$. Assume that noise terms are bounded random variables, i.e., $|\noise(i_j)| \le |\val(i_j) - \price(i_j)|$, $j=1,2$. 
	Thus expected individual utility of $i_1$ or $i_2$ is negative, but when they are offered together, the expected utility is positive. Now consider two seed allocations $\allalloc = \emptyset$ and $\allalloc' = \{(u,i_1)\}$. Let the additional allocation pair be $(u,i_2)$. Now 
	$\rho(\allalloc \cup \{(u,i_2)\}) - \rho(\allalloc) = 0 - 0 =0$: for $\allalloc$, no items are adopted and for $\allalloc \cup \{(u, i_2)\}$ the noise $\noise(i_2)$ cannot affect adoption decision in any possible world, so $i_2$ will not be adopted by $u$ in any world. 
	
	However, $\rho(\allalloc' \cup \{(u,i_2)\}) - \rho(\allalloc') > 0$, as under allocation $\allalloc'$, $i_1$ is not adopted by $u$ in any world, while under allocation $\allalloc' \cup \{(u,i_2)\}$, $u$ will adopt $\{i_1, i_2\}$ in every world, resulting in positive social welfare and breaking submodularity.

	For supermodularity, consider a network consisting of two nodes $v_1$ and $v_2$ with a single directed edge from $v_1$ to $v_2$, with probability 1. Let there be one item $i$ whose deterministic utility is positive, i.e., $\val(i) > \price(i)$. Again, assume that the noise term $\noise(i)$ is a bounded random variable, i.e., $|\noise(i)| \le |\val(i) - \price(i)|$. 
	Now consider two seed allocations $\allalloc = \emptyset$ and $\allalloc' = \{(v_1, i)\}$. Let the additional pair be $(v_2, i)$. Under allocation $\allalloc'$, both nodes $v_1$ and $v_2$ will adopt $i$ in every possible world. Hence adding the additional pair $(v_2, i)$ does not change item adoption in any world and consequently the expected social welfare is unchanged. Thus we have,
	
	\begin{align*}
	\rho(\allalloc \cup \{(v_2, i)\}) - \rho(\allalloc) &= \mathbb{E}[\util(i_1)] >0  \\
	&= \rho(\allalloc' \cup \{(v_2, i)\}) - \rho(\allalloc')
	\end{align*} 
	which breaks supermodularity. 
\end{proof}

The node level adoption exhibits supermodularity because the utility function is supermodular, but the propagation behavior is governed by reachability (Lemma~\ref{lem:reachable}), and thus exhibits  submodularity. Therefore, the combined propagation and adoption behavior in \model exhibits a complicated behavior that is neither submodular nor supermodular. In the next section, we will show that surprisingly, despite such complicated behavior, we can still design a greedy algorithm that achieves a  $(1-1/e-\epsilon)$-approximation to optimal expected social welfare.

\subsection{Approximation Algorithm}\label{sec:algo}
\eat{ 
\note[Laks]{suggested org of this section. \\ 
- main result \\ 
  + greedy algorithm \\ 
  + main theorem \\ 
- social welfare accounting technique \\ 
  + technique \\ 
  + proof \\
- preserving prefix property \\ 
  + primm algorithm 
  + proof of correctness and running time complexity. \\ 
- putting it all together 
  + summary of how pieces fit together} 
} 

\subsubsection{Greedy algorithm overview}\label{sec:overview} 

Given that the welfare function is neither submodular nor supermodular, designing an approximation algorithm for WelMax is challenging. Nevertheless, in this section we show that for any given $\epsilon>0$ and number $\ell\ge 1$, a $(1-\frac{1}{e} - \epsilon)$-approximation to the optimal social welfare can be achieved with probability at least $1 - \frac{1}{|V|^{\ell}}$, using a simple greedy algorithm. To the best of our knowledge, this is the first instance in the context of viral marketing where an {\sl efficient approximation algorithm is proposed for a non-submodular objective, at the same level as submodular objectives}. We first present our algorithm and then analyze its correctness and efficiency.

Our algorithm, called $\algo$ (for bundle greedy) and shown in Algorithm \ref{alg:greedy_allocation}, is based on a greedy allocation of seed nodes to items. Given a graph $G$, the universe of items $\allitems$, item  budget vector $\bvec$, $\epsilon$, and $\ell$, $\algo$ first selects (line~\ref{alg:PRIMM}) the top-$\bmax$ seed nodes $\greedSeeds := S_{\bmax}$ for the IC model (disregarding item utilities), where $\bmax = max\{b_i \mid b_i\in \bvec \}$. Then, (line~\ref{alg:assign}) for each item $i$ with budget $b_i$, it assigns the top-$b_i$ nodes from $\greedSeeds$ to $i$. We will show that this allocation achieves a $(1-\frac{1}{e} - \epsilon)$-approximation to the optimal expected social welfare. 
For this to work, the seed selection 
algorithm must ensure that the $\bmax$ seeds selected, $S_{\bmax}$,  satisfy a \emph{prefix-preserving} property (definition in \textsection\ref{sec:PRIMM}). \chgins{That is, intuitively, for every budget $b_i \in \bvec$, the top-$b_i$ seeds among $\greedSeeds$ must provide a $(1-\frac{1}{e} - \epsilon)$-approximation to the optimal expected spread under budget $b_i$. This property is ensured by invoking the $\PRIMM$ algorithm (Algorithm~\ref{alg:IMM_adopted}) in line~\ref{alg:PRIMM} of Algorithm~\ref{alg:greedy_allocation}.} 
The following is the main result for the $\algo$ algorithm.

\begin{theorem}\label{thm:main}
	Let $\greedAlloc$ be the greedy allocation  generated by $\algo$, 
	and $\optAlloc$	be the optimal allocation. 
	Given $\epsilon>0$ and $\ell>0$, with probability at least $1 - \frac{1}{|V|^{\ell}}$, 
	we have 
	\begin{equation} \label{eq:approxsw}
	\sw(\greedAlloc) \geq (1 - \frac{1}{e} - \epsilon ) \cdot \sw(\optAlloc).
	\end{equation}
The running time is 
$O((\bmax+\ell+ \log_n |\bvec|)(m+n)\log n / \epsilon^2)$.
\end{theorem}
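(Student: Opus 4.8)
The statement has two parts---the approximation bound~\eqref{eq:approxsw} and the running-time bound---which I would handle separately. The running time is the easier half: $\algo$ issues a single call to $\PRIMM$ (line~\ref{alg:PRIMM}) to obtain the ordered seed list $\greedSeeds = S_{\bmax}$ and then performs $|\bvec|$ constant-time prefix extractions (line~\ref{alg:assign}), so its total cost equals that of $\PRIMM$. Since $\PRIMM$ is an adaptation of IMM~\cite{tang15} that, run at the largest budget $\bmax$, returns $\bmax$ seeds whose relevant length-$b$ prefixes are \emph{simultaneously} $(1-1/e-\epsilon)$-approximate IC seed sets, the IMM-style RR-set analysis gives $O((\bmax+\ell)(m+n)\log n/\epsilon^2)$; the union bound needed so that the prefix property holds for all $|\bvec|$ budgets at once inflates the confidence parameter by an additive $\log_n|\bvec|$, yielding $O((\bmax+\ell+\log_n|\bvec|)(m+n)\log n/\epsilon^2)$. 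I would push the RR-set sampling and stopping-condition bookkeeping into the analysis of $\PRIMM$ in \textsection\ref{sec:PRIMM} and use its guarantee as a black box here.

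For~\eqref{eq:approxsw} I would first condition on the high-probability event ($\ge 1-|V|^{-\ell}$) that $\greedSeeds$ is prefix-preserving; after that everything is deterministic in the seeds and random only over the possible world $W=(W^E,W^N)$, so it suffices to show $\mathbb{E}_W[\rho_W(\greedAlloc)] \ge (1-1/e-\epsilon)\,\mathbb{E}_W[\rho_W(\optAlloc)]$. The engine is the \emph{block accounting} method. Relative to $\optAlloc$ I would decompose the propagating items into a sequence $\allblocks = B_1,\dots,B_t$ of \emph{item-disjoint} blocks, each $B_j$ equipped with an anchor item $a_j$ and an effective budget $\bbude_j$, and I would define an accounting that assigns to $B_j$ a share of $\rho_W(\optAlloc)$ bounded above by $\util_W(B_j)$ times the number of nodes reachable in $W^E$ from the $\bbude_j$ optimal seeds of $a_j$, while these shares together lower-bound $\rho_W(\optAlloc)$. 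This two-sided statement is Lemma~\ref{lem:allasw}; its proof rests on Lemma~\ref{lem:reachable} (adoption propagates along reachability) together with the local-maximum Lemmas~\ref{lem:itemunion} and~\ref{lem:adoptmax}, which ensure that carving a supermodular bundle into item-disjoint blocks never undercounts adopted utility.

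On the greedy side, fix a block $B_j$. By the prefix-preserving property (with $\PRIMM$ designed to cover precisely the effective budgets the accounting needs), the top-$\bbude_j$ nodes of $\greedSeeds$ reach, in expectation over $W^E$, at least $(1-1/e-\epsilon)$ times as many nodes as the $\bbude_j$ optimal seeds of $a_j$. Because (by construction) $B_j$ is a local maximum and every item of $B_j$ is present on each of those top-$\bbude_j$ seeds under $\greedAlloc$, Lemma~\ref{lem:adoptmax} forces those seeds to adopt all of $B_j$, and Lemma~\ref{lem:reachable} then makes every node they reach adopt $B_j$, contributing at least $\util_W(B_j)$ to welfare; hence the greedy welfare charged to $B_j$ is at least $(1-1/e-\epsilon)$ times $B_j$'s optimal charge---this is Lemma~\ref{lem:greedyasw}. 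Summing over $j=1,\dots,t$, using the lower bound from Lemma~\ref{lem:allasw} on the OPT side and the item-disjointness of the blocks so that their greedy contributions add without double counting (invoking monotonicity of $\rho_W$, the already-proved theorem, to discard harmless extra allocations), and finally taking $\mathbb{E}_W$, yields $\rho(\greedAlloc)\ge(1-1/e-\epsilon)\rho(\optAlloc)$.

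The main obstacle is getting the block decomposition and Lemma~\ref{lem:allasw} right. One must charge every unit of $\rho_W(\optAlloc)$ to exactly one item-disjoint block so that (i) the charge of a block is dominated by a \emph{single} IC-spread quantity---so the $(1-1/e-\epsilon)$ spread approximation can be applied block-by-block---and simultaneously (ii) that same charge is realizable by the greedy allocation, which allocates seeds crudely item-by-item (top-$b_i$ to item $i$) rather than bundle-by-bundle, all while $\rho$ is neither submodular nor supermodular. Designing a decomposition that is at once tight against $\optAlloc$ and attainable by $\greedAlloc$, and proving the block charges both lower-bound $\rho_W(\optAlloc)$ and are dominated by $\rho_W(\greedAlloc)$, is where essentially all the difficulty of the theorem lies.
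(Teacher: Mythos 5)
Your overall architecture matches the paper's: fix the noise world, decompose the (surviving) items into a budget-ordered sequence of disjoint blocks with anchor items and effective budgets, compare greedy and optimal welfare block-by-block via the prefix-preserving spread guarantee, and inherit the running time from $\PRIMM$. The running-time half of your argument is essentially the paper's. However, there is a genuine gap in your accounting: you charge each block $B_j$ with its \emph{standalone} utility $\util_W(B_j)$, whereas the correct per-block quantity is the \emph{marginal} utility $\Delta_j = \utilow(B_j \mid B_1\cup\cdots\cup B_{j-1})$ (Eq.~\eqref{eq:Delta}). Because the items are complementary, these differ badly: in Example~\ref{ex:bgen} the block $B_2=\{i_2\}$ has $\utilow(i_2)=-1$ but $\Delta_2 = 3$. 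With standalone utilities the charges neither sum to $\utilow(\Istar)$ nor are individually nonnegative, so both sides of your comparison collapse. Relatedly, your claim that ``by construction $B_j$ is a local maximum'' and that this alone forces the top-$\bbude_j$ greedy seeds to adopt $B_j$ is false ($B_2$ above is not a local maximum); the correct statement is Lemma~\ref{lem:prefixfullblocks}: an effective seed adopts the \emph{union of the prefix of full blocks} allocated to it, which is why $B_j$ always arrives bundled with $B_1,\dots,B_{j-1}$ and contributes exactly $\Delta_j$ per adopting node. The precedence order $\prec$ on subsets (tied to the budget ordering) is what makes this prefix structure and the anchor argument (Lemma~\ref{lem:anchor}) go through, and your sketch does not supply a substitute for it.

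Two smaller points. First, you repeatedly describe the OPT-side bound as a ``lower bound''; Lemma~\ref{lem:allasw} is an \emph{upper} bound $\sww(\allalloc)\le\sum_i\sigma(\allseeds_{a_i})\Delta_i$ on the welfare of an arbitrary allocation --- the inequality must point that way for the per-block domination $\sigma(\greedESeeds_{B_i})\ge(1-1/e-\epsilon)\sigma(\allseeds_{a_i})$ to transfer to welfare. Second, you say the decomposition is built ``relative to $\optAlloc$''; in the paper the blocks depend only on the noise world (through $\Iow$ and the budget ordering), not on any allocation --- this allocation-independence is what lets the same $\Delta_i$ and effective budgets appear in both Lemma~\ref{lem:greedyasw} and Lemma~\ref{lem:allasw} and hence cancel in the final ratio.
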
 

%
We note that our $\algo$ algorithm has the interesting property that it does not need the valuation functions,
	prices, and the distributions of noises as input, and thus works for all possible utility settings. 
It reflects the power of bundling --- as long as we know that all items are mutually complementary, then bundling them together as much as possible would always provide a good solution in terms of social welfare, no matter what the actual utilities. This is in stark contrast with the algorithmic solution
	in \cite{lu2015} for the complementary setting. Further, known algorithms for social welfare maximization in the combinatorial auction literature typically assume a value oracle (e.g., see~\cite{feige-vondrak-demand-2010, kapraov-etal-greedy-opt-soda-2013, korula-etal-online-swm-arxiv-2017}), which given a query as an itemset, returns the utility of the itemset. Works on IM for complementary items \cite{lu2015}, require the knowledge of adoption probabilities of every item given already adopted item subsets.
However, such an oracle can be quite expensive to realize in practice for non-additive utility functions, since there are exponentially many itemsets. 
In \textsection\ref{sec:blocks}, we show the approximation guarantee of our algorithm
	through the novel block accounting method, then
	in \textsection\ref{sec:PRIMM} we describe the prefix preserving influence maximization algorithm $\PRIMM$.
Algorithm~\ref{alg:IMM_adopted} is described and its correctness and running time complexity are established in \textsection\ref{sec:PRIMM}.

\begin{algorithm}[t!]
{ 
\DontPrintSemicolon
\caption{  $\algo(\allitems, \bvec, G, \epsilon, \ell)$} \label{alg:greedy_allocation}
$\greedAlloc \leftarrow \emptyset$; \\ 
 \eat{ 
\chgdel{Let $\greedSeeds := S^{Grd}_{\bmax}$ be the top-$\bmax$ seeds selected using a 
	prefix-preserving influence maximization algorithm under the IC model over graph $G$
	and budget vector $\bvec$, 
	i.e.,\\}
} 
	 $\greedSeeds \leftarrow \PRIMM(\bvec, G, \epsilon, \ell)$  \label{alg:PRIMM} \\
\For{$i \in \allitems$}  {
	Assign item $i$ to the first $b_i$ nodes of the ranked set $\greedSeeds$, i.e., \label{alg:assign} 
 $\greedSeeds_i \leftarrow \mbox{ top } b_i \mbox{ nodes from } \greedSeeds$  \;
	$\greedAlloc \leftarrow \greedAlloc \cup (\greedSeeds_i \times \{i\})$ \;}	
\textbf{return} $\greedAlloc$ as the final allocation\;  
}
\end{algorithm}


\subsubsection{Block accounting to analyze $\algo$} \label{sec:blocks} 

The analysis of the algorithm is highly non-trivial, because it needs to consider all possible seed allocations, propagation scenarios, with budgets possibly being non-uniform among items.
Our main idea is a ``block'' based accounting method: we break the set of items into a sequence of ``atomic'' blocks, such that each block has non-negative marginal utility given previous blocks, and it can be counted as an atomic unit in the diffusion process. 
Then we account for each block's contribution to the social welfare during a  propagation, and argue that for every block, the contribution of the block achieved by the greedy allocation is always at least $(1-1/e - \epsilon)$ times the contribution under any allocation. In \textsection 4.2.2.1 we first introduce the block generation process. Then using block based accounting, in \textsection 4.2.2.2 we establish the welfare produced by $\algo$, and later in \textsection 4.2.2.3, show an upper bound on the welfare produced by any arbitrary allocation.
The technical subtlety includes properly defining the blocks, showing why each block can be accounted for as an atomic unit
	separately, dealing with partial item propagation within blocks, etc.

In the rest of the analysis, we fix the noise world $W^N$, and prove that
$\sw_{\ow}(\greedAlloc) \geq (1 - \frac{1}{e} - \epsilon ) \cdot \sw_{\ow}(\optAlloc)$, where $\sw_{\ow}$ denotes the expected social welfare under the fixed noise world $\ow$.
We could then simply take another expectation over the distribution of $\ow$ to obtain  Inequality~\eqref{eq:approxsw}. 
Let $\utilow$ be the utility function under the noise possible world $\ow$.

Given $W^N$, let $\Iow \subseteq \allitems$ be the subset of items that gives the largest utility in $\ow$, with ties broken in favor of larger sets. 
By Lemma~\ref{lem:itemunion}, $\Iow$ is unique.
This implies that the marginal utility of any (non-empty) subset of $\allitems \setminus \Iow$ given $\Iow$ is strictly negative. 
Further recall that $\utilow$ is supermodular. 
Hence the marginal utility of any subset of $\allitems \setminus \Iow$ given any subset of $\Iow$ is strictly negative, which means no items in $\allitems \setminus \Iow$ can ever be adopted by any user under the noise world $\ow$.
Thus, once we fix $\ow$, we can safely remove all items in $\allitems \setminus \Iow$ from consideration.
In the rest of \textsection\ref{sec:blocks}, for simplicity we use $\Istar$ as a shorthand
	for $\Iow$.

\spara{4.2.2.1 Block generation process}
We divide items in $\Istar$ into a sequence of disjoint blocks such that each block has a non-negative marginal utility w.r.t. the union of all its preceding blocks.
We also need to carefully arrange items according to their budgets for later accounting analysis.
 We next discuss how the blocks are generated.

Let $\Istar = \{i_1, ..., i_{|\Istar|}\}$. We order the items in non-increasing order of their budgets, i.e., $b_1 \geq b_2 \geq \cdots \ge b_{|\Istar|}$.

{\color{black}
Figure \ref{alg:blockset-ucomic} shows the process of generating the blocks. Note that this {\sl block generation process is solely used for our accounting analysis and 
	is not part of our seed allocation algorithm. Hence it has no impact on the running time whatsoever
}  
Given $\Istar$ and $\ow$, we first generate a global sequence $\mathcal{I}$ of all non-empty subsets of $\Istar$, following a precedence order $\prec$ (Step {\bf 2}), explained next. 

For any two distinct subsets $S, S' \subseteq \Istar$, arrange items in each of $S, S'$ in decreasing order of item indices. Compare items in $S, S'$, starting from the highest indexed items of $S$ and $S'$. If they match then compare the second highest indexed items and so on until one of the following rules applies: \\ 
1. One of $S$ or $S'$ exhausts. If say $S$ exhausts first, then $S \prec S'$. \\
2. The current pair of items in $S$ and $S'$ do not match. Then $S \prec S'$, if the current item of $S$ has a lower index than the current item of $S'$.

We illustrate this step using the following example. 
}

\begin{figure}[t]
	
	\begin{framed}
		{

			\begin{description}[style=unboxed,leftmargin=8pt]
				\item[1.]
				Input for the process contains $\Istar$ and $\ow$.\\
				\item[2.]
				Generate the $2^{|\Istar|} -1$ non-empty subsets of $\Istar$ \\ 
				Sort the subsets following the precendence order $\prec$. Put the sorted subsets in sequence $\mathcal{I}$\\ 
				$\allblocks \leftarrow \emptyset$; $B \leftarrow$ the first entry in $\mathcal{I}$\\ 
				\item[3.]
				Repeat the following steps until $\mathcal{I}$ is 
empty \\
				\ \ (1) \emph{If} $\utilow(B \mid \bigcup \allblocks) \geq 0$ then, \\
				\ \ \ \ $\allblocks \leftarrow \allblocks \oplus B$ i.e., append $B$ at the end of sequence $\allblocks$ \\  
				\ \ \ \ remove all sets $B'$ from $\mathcal{I}$ with  $B' \cap B\ne \emptyset$ \\
				\ \ \ \ $B \leftarrow$ the first entry in $\mathcal{I}$ \\
				
				\ \ (2) \emph{Else}
				\ \ \ \ $B \leftarrow$ the next entry in $\mathcal{I}$ after $B$ \\	
				\item[4.]
				$\allblocks$ is the final sequence of blocks  
			\end{description}
			}
	\end{framed}
	\caption{The block generation process}
	\label{alg:blockset-ucomic}
\end{figure} 

\begin{example}[Generation of $\mathcal{I}$] \label{ex:sequence} 
{\em  
Suppose we have three items $\Istar$ $=$ $\{i_1, i_2, i_3\}$ with $b_1$ $\geq$ $b_2$ $\geq$ $b_3$, then we order the subsets in the following way: 
$\mathcal{I}$ $=$ $(\{i_1\}$, $\{i_2\}$, $\{i_1, i_2\}$, $\{i_3\}$, $\{i_1, i_3\}$, $\{i_2, i_3\}$, $\{i_1, i_2, i_3\})$. Between subsets $\{i_3\}$ and $\{i_1,i_3\}$, $\{i_3\}$ is ordered first 
	according to rule $1$, 
	whereas between $\{i_1,i_2\}$ and $\{i_3\}$, $\{i_1,i_2\}$ is ordered first according to rule $2$. \qed 
} 
\end{example} 

The sequence $\mathcal{I}$ has the following useful property:

\begin{property} \label{prop:subset}
	For any subsets $S$ and $T$ in the sequence $\mathcal{I}$, if (a) $T$ is a proper subset of $S$, or (b) the highest index among all items in $T$
		is strictly lower than the highest index among all items in $S$, then 
		$T$ appears before $S$ in $\mathcal{I}$.
\end{property}

From $\mathcal{I}$, blocks are selected following an iterative process, as shown in Step {\bf 3} of Figure~\ref{alg:blockset-ucomic}. 
We scan through this sequence, with the purpose of generating a sequence $\allblocks$ of disjoint blocks.
For each subset $B$ being scanned, if its marginal utility given all previously selected blocks is
	non-negative, i.e., $\utilow(B \mid \bigcup \allblocks) \geq 0$, where $\allblocks$ is the currently selected
	sequence of blocks, and $\bigcup \allblocks$ is the union of all items in these selected blocks, 
	then we append $B$ to the end of selected sequence $\allblocks$, i.e.,  $\allblocks = \allblocks \oplus B$, where $\oplus$ denotes ``append''. 
After selecting $B$, we remove all subsets in $\mathcal{I}$ that overlap with $B$, and restart the scan from the beginning of the remaining sequence.
If $\utilow(B \mid \bigcup \allblocks) < 0$, then we skip this set and go to the next one.

Example \ref{ex:bgen} illustrates the process.

\begin{example}[Block generation] \label{ex:bgen} 
{\em 
Continuing from Example~\ref{ex:sequence}, assume the following utility assignments for noise world $\ow$:
\begin{align*}
\utilow(i_1) = \utilow(i_2) = \utilow(i_3) = \utilow(i_1,i_2) = -1 \\
\utilow(i_1,i_3) = \utilow(i_2,i_3) = 1 ; \ \utilow(i_1,i_2,i_3) = 4
\end{align*}
Then as per the block generation process, $\{i_1,i_3\}$ will be chosen as the first block $B_1$, since it is the 
	first block in $\mathcal{I}$ with non-negative marginal utility w.r.t. $\emptyset$.
Once $B_1$ is chosen all itemsets containing $i_1$ or $i_3$ are deleted from $\mathcal{I}$, thus only $\{i_2\}$ remains in $\mathcal{I}$. Since $\utilow(\{i_2\} \mid \{i_1,i_3\}) = \utilow(i_1,i_2,i_3) - \utilow(i_1,i_3) = 4 -1 > 0$, $\{i_2\}$ is chosen as $B_2$ and the process terminates with 
$\allblocks = (\{i_1,i_3\}, \{i_2\})$. \qed }
\end{example}

By the fact that $\Istar$ is a local maximum, it is easy to see that the blocks generated form a partition of $\Istar$.
Let $\LL{\allblocks = } \{B_1, B_2, \ldots, B_t\}$ be the sequence of blocks generated, where $t$ is
	the number of blocks in the block partition.
We define the marginal gain of each block $B_i$ as
\begin{equation}
\Delta_i = \utilow(B_i \mid \cup_{j=1}^{i-1} B_j). \label{eq:Delta}
\end{equation}
We have the following properties regarding the marginal gains.

\begin{property}\label{prop:block_utility}
 $\forall i$ $\in$ $[\numblocks]$, $\Delta_i$ $\ge$ $0$ , and $\utilow(\Istar)$ $=$ $\sum_{i=1}^{\numblocks} \Delta_i$.
\end{property}

Let $A \subseteq \Istar$ be an arbitrary subset of items. 
We partition $A$ based on block partition $\allblocks$: Define $A_i = A \cap B_i, \forall i \in [\numblocks]$. 
If $A_i = B_i$, we call $A_i$ a full block, if $A_i = \emptyset$, then it is an empty block, otherwise, we call it a partial block. 
Define $\Delta^A_i = \utilow(A_i \mid A_1 \cup \ldots \cup A_{i-1})$. 
By Property \ref{prop:subset} and the fact that $B_i$ is the first block in $\mathcal{I}$ with non-negative marginal utility \LL{w.r.t. $\bigcup_{j=1}^{i-1} B_j$,} it follows that

\begin{property}\label{prop:block_utility2}
$\forall i$ $\in$ $[\numblocks]$, $\Delta^A_i$ $\le$ $\Delta_i$, and $\utilow(A)$ $=$ $\sum_{i=1}^t \Delta^A_i$.
\end{property}

Using this property, we devise our accounting where each $A_i$ contributes $\Delta^A_i$ 
	in its social welfare.

\spara{4.2.2.2 Social welfare under greedy allocation}
We are now ready to analyze the social welfare of our greedy allocation (Algorithm \ref{alg:greedy_allocation}) using block accounting. 
We first show that, before the propagation starts, each seed node would adopt exactly the prefix of full blocks allocated until
	the first non-full block, and then show that all these adopted
	full blocks will propagate together, so we can exactly account for the contribution of each block
	to the expected social welfare.
The following lemma gives the exact statement of the first part.

\begin{lemma} \label{lem:prefixfullblocks}
	Under the greedy allocation, 
 	suppose that at a seed node $v$, $A_i$ is the first non-full block assigned to $v$, 
 	then before the propagation starts, $v$ adopts exactly $B_1 \cup ... \cup B_{i-1}$.
 	
\end{lemma}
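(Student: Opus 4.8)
The plan is to analyze what the seed node $v$ does at time $t=1$, when its desire set is exactly $\mathbf{I}_v^{\greedAlloc}$. Suppose $v$ is ranked $r$-th in the ordered seed list $\greedSeeds$ returned by $\PRIMM$; then $v$ is allocated item $i'$ iff $b_{i'}\ge r$. Restricting attention to $\Istar$ (items outside $\Istar$ have strictly negative marginal utility w.r.t.\ every subset of $\Istar$, hence never appear in any adopted set), write $A := \mathbf{I}_v^{\greedAlloc}\cap\Istar = \{\,i'\in\Istar : b_{i'}\ge r\,\}$ and $A_j := A\cap B_j$. For each block, $A_j = B_j$ exactly when $e_j := \min_{i'\in B_j} b_{i'}\ge r$. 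Since $A_i$ is the first non-full block, $e_1,\dots,e_{i-1}\ge r > e_i$. Put $P := B_1\cup\dots\cup B_{i-1}$, so $P\subseteq A$; the goal is to show that the $\utilow$-maximizing subset of $A$ (ties broken toward larger sets) is exactly $P$.

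Two easy observations come first. By telescoping~\eqref{eq:Delta} and Property~\ref{prop:block_utility}, $\utilow(P)=\sum_{j=1}^{i-1}\Delta_j\ge 0$, so adopting $P$ is feasible; and $P$ is a local maximum of $\utilow$, since for any $C\subseteq P$ Property~\ref{prop:block_utility2} gives $\utilow(C)=\sum_{j=1}^{i-1}\Delta^C_j\le\sum_{j=1}^{i-1}\Delta_j=\utilow(P)$. The crux is the claim that \emph{for every non-empty $D\subseteq A\setminus P$ we have $\utilow(D\mid P)<0$}. To prove it, let $\mu_i$ be the largest item index occurring in $B_i$, so that $e_i = b_{\mu_i}$. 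Every $i'\in D$ lies in $A$, so $b_{i'}\ge r > e_i = b_{\mu_i}$, and because budgets are arranged in non-increasing order of index this forces $i'$ to have index strictly below $\mu_i$. Hence the largest index occurring in $D$ is strictly below that of $B_i$, so by case (b) of Property~\ref{prop:subset} $D$ precedes $B_i$ in the sequence $\mathcal{I}$. Also $D\subseteq\bigcup_{j\ge i}B_j$ is disjoint from $P$, so $D$ is still present in the sequence at the moment $B_i$ is selected by the block-generation process (Step~\textbf{3} of Figure~\ref{alg:blockset-ucomic}); since $B_i$ is the \emph{first} surviving set at that point with non-negative marginal utility w.r.t.\ $\bigcup_{j<i}B_j=P$, the set $D$ (scanned earlier, and $\ne B_i$ because $D\cap B_i\subseteq A_i\subsetneq B_i$) must have been rejected, i.e.\ $\utilow(D\mid P)<0$.

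Now combine these. Take any $C\subseteq A$. If $C\subseteq P$, then $\utilow(C)\le\utilow(P)$ by the local-maximum property. If $C\not\subseteq P$, write $C=(C\cap P)\cup D$ with $D:=C\setminus P\ne\emptyset$; supermodularity of $\utilow$ together with $C\cap P\subseteq P$ gives $\utilow(D\mid C\cap P)\le\utilow(D\mid P)<0$, while $\utilow(C\cap P)\le\utilow(P)$, so $\utilow(C)<\utilow(P)$. Therefore $\utilow(P)=\max_{C\subseteq A}\utilow(C)$ and every maximizer is a subset of $P$; since $P$ is itself a maximizer, the union of all maximizers equals $P$. Because the maximizers are closed under union (a consequence of Lemma~\ref{lem:itemunion}), the tie-breaking rule makes $v$ adopt exactly $P=B_1\cup\dots\cup B_{i-1}$ at $t=1$, i.e.\ before propagation starts.

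The step I expect to be the real obstacle is the crux claim. Effective budgets $e_1,e_2,\dots$ need not be monotone --- Example~\ref{ex:bgen} already exhibits $e_1<e_2$ --- so a block $B_j$ with $j>i$ can still be \emph{fully} allocated to $v$, and such a block may have non-negative $\Delta_j$; hence one cannot simply argue that ``$v$ receives a prefix of the blocks and adopts it'', nor reduce to a per-block sign analysis. The resolution is the index bound: any item $v$ receives outside $P$ has index below $\mu_i$ because its budget exceeds that of the smallest-budget item of $B_i$, which places the \emph{entire} set $D$ before $B_i$ in $\mathcal{I}$ and thereby lets the block-generation rule certify $\utilow(D\mid P)<0$ in one shot. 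Getting this index bound right --- via the budget ordering of $\Istar$ and the definition of $e_i$ --- is the delicate point.
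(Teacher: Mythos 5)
Your proof is correct and follows essentially the same route as the paper: decompose the items allocated to $v$ relative to the block partition, use the precedence order $\prec$ and the block-generation process to show that any allocated items outside $B_1\cup\cdots\cup B_{i-1}$ would contribute negative marginal utility (since such a set precedes $B_i$ in $\mathcal{I}$ yet was rejected), and invoke Property~\ref{prop:block_utility2} plus supermodularity to conclude that $B_1\cup\cdots\cup B_{i-1}$ is the unique maximal utility-maximizing subset. Your handling of the ``ordered after $B_i$'' case is slightly slicker than the paper's (you rule it out outright via the budget-to-index bound rather than deriving a separate contradiction), but the underlying argument is the same.
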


\begin{proof}
	This proof relies on the supermodularity of $\utilow$, the block generation process, the
		greedy allocation procedure, and Property~\ref{prop:block_utility2}.
	Let $M$ be the set of items adopted by $v$ before the propagation starts, and
	let $M_1 = M \cap (B_1 \cup ... \cup B_{i-1})$ and $M_2 = M \setminus M_1$.
	Since $A_i$ is a partial block, we know that $M_2 \ne B_i$.
	We first show that $M_2 = \emptyset$ and then 
		$M_1 = B_1 \cup ... \cup B_{i-1}$.
	
	Suppose, for a contradiction, that $M_2 \ne \emptyset$.
	We know that $\utilow(M_2 \mid M_1) \ge 0$, and by supermodularity
		$\utilow(M_2 \mid B_1 \cup ... \cup B_{i-1}) \ge 0$.
	If $M_2$ is ordered before $B_i$ in sequence $\mathcal{I}$, then $M_2$ should be selected
		instead of $B_i$, a contradiction.
	If $M_2$ is ordered after $B_i$ in $\mathcal{I}$, by the block generation process we can  
		conclude that all items in $B_i$ have budgets no less than the minimum budget for items
		in $M_2$, which by greedy allocation implies that all items in $B_i$ should be allocated
		to $v$, contradicting the fact $A_i$ is a partial block.
	\LL{Thus $M_2 = \emptyset$ and $M = M_1 \subseteq B_1 \cup ... \cup B_{i-1}$.}	
	
	Next, by Property~\ref{prop:block_utility2},\\
	 $$\LL{\utilow(M) = \sum_{j=1}^{i-1} \Delta^M_i
		\leq} \sum_{j=1}^{i-1} \Delta_i = \utilow(B_1 \cup ... \cup B_{i-1})$$ Thus
		$v$ should adopt $B_1 \cup ... \cup B_{i-1}$ instead of $M$.
\end{proof}

\spara{Effective budget of blocks} 
For a block $B_i$, we define its {\em effective budget} $\bbude_i = \min_{j \in B_1 \cup \cdots \cup B_i} b_j$.  
	In $\algo$ (Algorithm \ref{alg:greedy_allocation}), the first $\bbude_i$ seed nodes of $\greedSeeds$ are assigned all the full blocks $\{B_1 \cup ... \cup B_i\}$. By Lemma \ref{lem:prefixfullblocks}, only those nodes actually adopt the block $B_i$ 
	before the propagation starts. 
	Such seed nodes are called \emph{effective seed nodes} of block $B_i$ and denoted as $\greedESeeds_{B_i}$. Thus in summary, under the greedy allocation, before the propagation starts, 
	all seed nodes in $\greedESeeds_{B_i}$ adopt $B_i$ together with $B_1, \ldots, B_{i-1}$, and
	none of the seed nodes outside $\greedESeeds_{B_i}$ adopts 
	any items in $B_i, B_{i+1}, \ldots, B_t$. 
	
As established, the nodes in $\greedESeeds_{B_i}$ always adopt $B_i$ together with $B_1, \ldots, B_{i-1}$ and without considering the effect of propagation, no other seed nodes outside the set $\greedESeeds_{B_i}$ adopts $B_i$ or any other blocks $B_{i+1}, \ldots, B_t$. $B_i$ is not adopted because at least one of the previous $B_1, ..., B_{i-1}$ blocks is not allocated to those nodes. Also since $B_i$ is not adopted, none of the subsequent blocks can be adopted. We illustrate this using an example next.

\begin{example}[Block budgets] \label{ex:bbud} 
{\em 
Revisit the blocks shown in Example \ref{ex:bgen}. Let us assume that $b_1 > b_2 > b_3$. Recall that $B_1 = \{i_1,i_3\}$ and $B_2=\{i_2\}$. 
Let $S_2,S_3$ be the top $b_2,b_3$ nodes in the greedy allocation respectively, and
		$S_3 \subset S_2$.
Then under the greedy allocation, $B_2$ as a full block will be allocated to nodes in $S_2$. 
The effective budget of $B_2$ is $\bbude_2 = \min_{j \in B_1 \cup B_2} b_j = b_3$.
The effective seed set of $B_2$ is $\greedESeeds_{B_2} = S_3$, since nodes in $S_3$ are allocated
	both $B_1$ and $B_2$ and will adopt both $B_1$ and $B_2$ according to Lemma~\ref{lem:prefixfullblocks} (can also be verified by checking the utility settings
	given in Example \ref{ex:bgen} manually).
For nodes in \LL{$S_2 \setminus S_3$,} even though they are allocated the full block $B_2$, 
	they are only allocated a partial block $A_1 = \{i_1\}$, and thus by Lemma~\ref{lem:prefixfullblocks} they will not adopt $B_2$ or $A_1$.
\qed
}
\end{example}

We are now ready to show the social welfare of the allocation made by $\algo$.

\begin{restatable}{lemma}{lemgreedyasw}
	\label{lem:greedyasw} 
	Let $\greedAlloc$ be the greedy allocation obtained using Algorithm \ref{alg:greedy_allocation}. Then the expected social welfare of $\greedAlloc$ in $\ow$ is $\sww(\greedAlloc) = \sum_{i\in [t]} \sigma(\greedESeeds_{B_i}) \cdot \Delta_i$, where $\greedESeeds_{B_i}$ are the effective seed nodes of block $B_i$ under allocation $\greedAlloc$, $\sigma(\cdot)$ is the expected
	spread function under the IC model, and $\Delta_i$ is as defined in Eq.~\eqref{eq:Delta}.
\end{restatable}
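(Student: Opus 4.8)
The plan is to reason in a fixed edge world on top of the fixed noise world $\ow$, where $\Istar$, the blocks $B_1,\dots,B_t$, and the marginal gains $\Delta_i$ are all determined. Fix an edge world $W^E$ and write $W=(W^E,\ow)$. For each $i\in[t]$ let $R_i^{W}$ be the set of nodes reachable from the effective seed set $\greedESeeds_{B_i}$ along directed paths of $W^E$ (including $\greedESeeds_{B_i}$ itself); since the effective budgets satisfy $\bbude_1\ge\cdots\ge\bbude_t$ we have $\greedESeeds_{B_1}\supseteq\cdots\supseteq\greedESeeds_{B_t}$ and hence $R_1^{W}\supseteq\cdots\supseteq R_t^{W}$. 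For a node $u$ put $k_W(u)=\max\{i: u\in R_i^{W}\}$, with $k_W(u)=0$ if $u$ belongs to no $R_i^{W}$. The heart of the proof is the claim that, in $W$ under $\greedAlloc$, node $u$ adopts exactly the block prefix $B_1\cup\cdots\cup B_{k_W(u)}$. Granting it, telescoping the definition $\Delta_i=\utilow(B_i\mid\bigcup_{j<i}B_j)$ gives $\utilow(B_1\cup\cdots\cup B_{k_W(u)})=\sum_{i\le k_W(u)}\Delta_i$, so
\[
\rho_W(\greedAlloc)=\sum_{u\in V}\ \sum_{i\le k_W(u)}\Delta_i=\sum_{i=1}^{t}\Delta_i\cdot\bigl|\{u:u\in R_i^{W}\}\bigr|=\sum_{i=1}^{t}\Delta_i\cdot\bigl|R_i^{W}\bigr|.
\]
Taking expectation over the edge world and using $\E_{W^E}\bigl[|R_i^{W}|\bigr]=\sigma(\greedESeeds_{B_i})$ (the standard live-edge/possible-world equivalence for IC spread) yields $\sww(\greedAlloc)=\sum_{i\in[t]}\sigma(\greedESeeds_{B_i})\,\Delta_i$.

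For the easy direction of the claim --- that $u$ adopts \emph{at least} $B_1\cup\cdots\cup B_{k_W(u)}$ --- note that by the definition of the effective seed set together with Lemma~\ref{lem:prefixfullblocks}, every node in $\greedESeeds_{B_i}$ adopts the full prefix $B_1\cup\cdots\cup B_i$ already before propagation starts. Applying the reachability Lemma~\ref{lem:reachable} to each item of $B_1\cup\cdots\cup B_i$ separately, every node reachable from $\greedESeeds_{B_i}$ in $W^E$ adopts all of $B_1\cup\cdots\cup B_i$; taking $i=k_W(u)$ gives this half.

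For the matching upper bound I would first show, by induction on propagation time, that at every step the adopted set of every node is a \emph{block prefix} $B_1\cup\cdots\cup B_m$. The base case is Lemma~\ref{lem:prefixfullblocks} (and $\emptyset$ for non-seeds). In the inductive step, $u$'s desire set is ${\bf I}^{\greedAlloc}_u$ (which is $\emptyset$ when $u$ is not a seed) unioned with a block prefix --- the union of the adopted sets of $u$'s in-neighbours, which are block prefixes by the hypothesis --- and the newly adopted set is the \emph{largest} block prefix contained in this desire set. This uses supermodularity of $\utilow$, Property~\ref{prop:block_utility2}, the observation (immediate from Property~\ref{prop:subset} and the block-generation rule) that any non-empty proper sub-block $A_i\subsetneq B_i$ has strictly negative marginal utility given $B_1\cup\cdots\cup B_{i-1}$, and the larger-cardinality tie-break: together these imply that extending a block prefix by any partial block strictly lowers utility, so no partial block is ever adopted. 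At the end of the propagation the desire set of $u$ equals ${\bf I}^{\greedAlloc}_u\cup(B_1\cup\cdots\cup B_{k_W(u)})$ (this also comes out of the same induction, using transitivity of reachability through seeds), so $u$ adopts the largest block prefix inside it; by the lower bound this prefix already contains $B_1\cup\cdots\cup B_{k_W(u)}$, and to see it is not strictly longer: if $u$ adopted $B_1\cup\cdots\cup B_m$ with $m>k_W(u)$ then $B_{k_W(u)+1}\subseteq{\bf I}^{\greedAlloc}_u$, so $u$'s seed rank $r$ satisfies $r\le\min_{j\in B_{k_W(u)+1}}b_j$, while $u\notin\greedESeeds_{B_{k_W(u)+1}}$ forces $r>\bbude_{k_W(u)+1}=\min_{j\in B_1\cup\cdots\cup B_{k_W(u)+1}}b_j$; the only way to satisfy both is $\bbude_{k_W(u)+1}=\bbude_{k_W(u)}$, which makes $\greedESeeds_{B_{k_W(u)+1}}=\greedESeeds_{B_{k_W(u)}}$ and hence $R^{W}_{k_W(u)+1}=R^{W}_{k_W(u)}\ni u$ --- contradicting the maximality of $k_W(u)$.

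I expect the main obstacle to be exactly this upper bound. Lemma~\ref{lem:reachable} only says adoptions spread, not that nothing else gets adopted, so keeping the adopted sets pinned to block prefixes through propagation genuinely needs the supermodular utility structure and the precise block ordering; moreover seed nodes whose allocated item sets ${\bf I}^{\greedAlloc}_u$ are themselves \emph{not} block prefixes (which is possible, since item budgets need not respect block boundaries) must be handled carefully via Lemma~\ref{lem:prefixfullblocks} and the strict negativity of partial-block marginals. Tying the adopted prefix length to \emph{exactly} $k_W(u)$ is the other delicate point, and it rests on the monotonicity of the effective budgets $\bbude_i$ and the nesting $\greedESeeds_{B_1}\supseteq\cdots\supseteq\greedESeeds_{B_t}$. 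Once the characterization is established, the summation and the single remaining expectation over $W^E$ are routine linearity.
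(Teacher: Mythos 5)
Your proposal is correct and follows essentially the same route as the paper's proof: the Reachability Lemma gives that everything reachable from $\greedESeeds_{B_i}$ adopts $B_1\cup\cdots\cup B_i$, the block accounting (Property~\ref{prop:block_utility}) converts each adopted full block into a $\Delta_i$ contribution, and the live-edge identity $\E_{W^E}[|R_i^W|]=\sigma(\greedESeeds_{B_i})$ finishes the computation. The only difference is that you spell out in detail the claim the paper merely asserts (``the only item adoptions are disjoint unions of full blocks'' and that exactly the nodes in $R_i^W$ adopt $B_i$), which is a welcome elaboration rather than a departure.
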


\begin{proof}
To account for the effect of propagation, we use the Reachability Lemma (Lemma \ref{lem:reachable}). By that lemma, nodes reachable from $\greedESeeds_{B_i}$ adopt all the blocks $B_1, ..., B_i$. For a full block $B_i$ only the effective seeds of $B_i$ and nodes reachable from them adopt $B_i$. Thus the expected number of nodes that are reached by block $B_i$ and consequently adopt $B_i$, is $\sigma(\greedESeeds_{B_i})$. \LL{From Property \ref{prop:block_utility}, adoption of every such $B_i$ contributes $\Delta_i$ to the overall social welfare. Moreover, the only item adoptions are disjoint union of full blocks. Hence $\sww(\greedAlloc) = \sum_{i\in [t]} \sigma(\greedESeeds_{B_i}) \cdot \Delta_i$.}
\end{proof}

\spara{4.2.2.3 Social welfare under an arbitrary allocation}

Unlike greedy, in an arbitrary allocation, for the effective seed nodes, we cannot conclude that a block $B_i$ is offered with all previous full blocks $B_1, \ldots, B_{i-1}$. Thus our accounting method needs to be adjusted.
Our idea is to define the key concept of an {\em anchor item} $a_i$ for every block $B_i$, which appears in $B_1\cup \cdots \cup B_i$.
We want to show that only when $B_i$ is co-adopted with $a_i$ by any node, $B_i$ could contribute positive marginal social welfare (Lemma~\ref{lem:anchor}),
	and in this case its marginal contribution is upper bounded by $\Delta_i$ (Property~\ref{prop:block_utility2}).
Hence we only need to track the diffusion of the anchor item $a_i$ to account for the marginal contribution of $B_i$.
Finally by showing that the budget
	of $a_i$ is exactly the effective budget $\bbude_i$ $=$ $|\greedESeeds_{B_i}|$ of $B_i$, we conclude that 
	$\sigma(\allseeds_{a_i})$ $\le$ $(1-1/e-\epsilon)$ $\sigma(\greedESeeds_{B_i})$ by the prefix preserving property explained in \textsection\ref{sec:overview}.

We define the budget of a block to be the minimum budget of any item in the block. Then the 
\emph{anchor block} $B^a_i$, of a block $B_i$ is the block from $B_1, \ldots, B_{i}$ that has the minimum budget. 
In case of a tie, the block having highest index is chosen as the anchor block. 
Notice that anchor item $a_i$ is the highest indexed and consequently minimum budgeted item in its corresponding anchor block $B^a_i$. 
Notice that, by definition, if block $B_j$ is the anchor block of block $B_i$ with $j< i$, then
	block $B_j$ is also the anchor block for all blocks $B_j, B_{j+1}, \ldots, B_i$.
Moreover, the effective budget $\bbude_i$ of a block $B_i$, is the budget of its anchor item $a_i$, i.e., the minimum budget of all items in $B_1 \cup \cdots \cup B_i$.
We illustrate the concept of anchor block and item using the example below.

\begin{example}[Anchor block and item] \label{ex:anc} 
{\em 
Anchor block of block $B_2$ in Example \ref{ex:bbud}, is $B^a_2 = B_1$. Its corresponding anchor item $a_2$ is the highest indexed item of block $B^a_2$, i.e., $i_3$. Block $B_1$'s anchor block is the block itself and consequently its anchor item $a_1$ is again $i_3$. \qed } 
\end{example}

\begin{restatable}{lemma}{lemanchor}
	\label{lem:anchor}
Let $a_i$ be the anchor item of $B_i$, and suppose \LL{$a_i$} appears in $B_j$, $j\le i$.
During the diffusion process from an arbitrary seed allocation $\allalloc$, 
let $A$ be the set of items in $B_{j} \cup \ldots \cup B_i$
that have been adopted by $v$ by time $t$.
If $a_i \notin A$ and $A \ne \emptyset$, then 
$\utilow(A \mid B_1 \ldots, B_{j-1}) < 0$.  
\end{restatable}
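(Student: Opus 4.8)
The plan is to derive a contradiction from the block generation process (Figure~\ref{alg:blockset-ucomic}): if $\utilow(A\mid B_1\cup\cdots\cup B_{j-1})\ge 0$, then $A$ is a perfectly legitimate candidate that the process should have picked \emph{before} it ever reached $B_j$, which is impossible. The technical ingredient that makes this work is a purely combinatorial fact about item indices, which I would prove first: \emph{$a_i$ is the unique item with the largest index in $B_j\cup B_{j+1}\cup\cdots\cup B_i$.}

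To establish this index fact I would unpack the anchor definition. Since $a_i$ appears in $B_j$, the hypothesis forces $B_j$ to be the anchor block $B^a_i$, i.e.\ the block of minimum budget among $B_1,\dots,B_i$ with ties broken toward the largest block index, and $a_i$ is its highest-indexed item; because items are indexed in non-increasing budget order, $a_i$ is simultaneously the cheapest item of $B_j$, so the budget of block $B_j$ equals $b_{a_i}$. Now take any later block $B_\ell$ with $j<\ell\le i$: its budget is at least $b_{a_i}$, and it cannot equal $b_{a_i}$, for otherwise the tie-breaking rule (higher block index wins) would select $B_\ell$ rather than $B_j$ as the anchor of $B_i$. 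Hence every item of $B_\ell$ has budget strictly above $b_{a_i}$ and therefore index strictly below that of $a_i$. Since every item of $B_j$ other than $a_i$ also has index below that of $a_i$, the claimed uniqueness follows.

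With the index fact in hand the rest is short. Because $A\subseteq B_j\cup\cdots\cup B_i$ is non-empty and does not contain $a_i$, every item of $A$ has index strictly below the highest index occurring in $B_j$, so by Property~\ref{prop:subset}(b) the set $A$ appears before $B_j$ in the global sequence $\mathcal{I}$. Moreover, since the blocks are pairwise disjoint and $A$ lies inside $B_j\cup\cdots\cup B_i$, $A$ is disjoint from $B_1\cup\cdots\cup B_{j-1}$, so none of the earlier selections deletes $A$ from $\mathcal{I}$; thus at the stage where the process chooses its $j$-th block (with $B_1,\dots,B_{j-1}$ fixed and every set overlapping them removed) $A$ is still present and still precedes $B_j$. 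But $B_j$ is chosen precisely because it is the \emph{first} surviving set whose marginal utility given $B_1\cup\cdots\cup B_{j-1}$ is non-negative, so if $\utilow(A\mid B_1\cup\cdots\cup B_{j-1})\ge 0$ the process would have taken $A$ as the $j$-th block instead of $B_j$ — a contradiction. Therefore $\utilow(A\mid B_1\cup\cdots\cup B_{j-1})<0$.

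The step I expect to be the real obstacle is the index bookkeeping of the second paragraph: one must handle ties in item budgets and the anchor tie-breaking rule carefully to be sure that $a_i$ — and not some item living in $B_{j+1},\dots,B_i$ — is the \emph{strict} maximum-index element of $B_j\cup\cdots\cup B_i$. Once that is nailed down, the invocation of Property~\ref{prop:subset}(b) and of the ``first non-negative candidate'' characterization of block selection is routine. A minor point worth double-checking along the way is that sets skipped for having negative marginal utility are never actually removed from $\mathcal{I}$ (only sets overlapping a chosen block are), so that $A$ genuinely remains a live candidate when $B_j$ is selected.
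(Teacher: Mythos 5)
Your proof is correct and follows essentially the same route as the paper's: both argue by contradiction that every item of $A$ has index strictly below that of $a_i$ (using the anchor tie-breaking to handle the later blocks $B_{j+1},\ldots,B_i$ and the definition of $a_i$ to handle $A\cap B_j$), invoke Property~\ref{prop:subset} to place $A$ before $B_j$ in $\mathcal{I}$, and conclude that the block generation process would have selected $A$ instead of $B_j$. Your explicit observations that $a_i$ is the strict index-maximum of $B_j\cup\cdots\cup B_i$ and that $A$ is never deleted from $\mathcal{I}$ (being disjoint from $B_1,\ldots,B_{j-1}$) are just careful elaborations of steps the paper leaves implicit.
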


\begin{proof}
	Suppose that $\utilow(A \mid B_1 \cup \cdots \cup B_{j-1}) \ge 0$.
	By the definition of the anchor item, we know that all items in $A \setminus B_j $ have strictly larger budget than the budget
	of $a_i$, otherwise one of items in $A\setminus B_j $ should be the anchor item for $B_i$. 
	This means all items in $A \setminus B_j $ have index strictly lower than $a_i$.
	Notice $a_i \notin A$, and thus all items in $A \cap B_j$ also have index strictly lower than $a_i$.
	Then by Property \ref{prop:subset}, $A$ should appear before $B_j$ in sequence $\mathcal{I}$.
	Since $\utilow(A \mid B_1 \ldots, B_{j-1}) \ge 0$, the block generation process should select $A$ as the
	$j$-th block instead of the current $B_j$, a contradiction. 
\end{proof}

Using the above result, we establish the following lemma, which upper bounds the welfare produced by an arbitrary allocation.

\begin{restatable}{lemma}{lemallasw}
	\label{lem:allasw}
	For any arbitrary seed allocation $\allalloc$, the expected social welfare in $\ow$ is $\sww(\allalloc) \leq \sum_{i\in [t]} \sigma(\allseeds_{a_i}) \cdot \Delta_i$, where $\allseeds_{a_i}$ is the seed set assigned to the anchor
	item $a_i$ of block $B_i$, and $\Delta_i$ is as defined in Eq.~\eqref{eq:Delta}. 
\end{restatable}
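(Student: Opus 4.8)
The plan is to charge the expected social welfare of $\allalloc$ to the blocks group by group, using each group's anchor item as a proxy for the propagation of that whole group. First I would fix a realized edge world $W^E$ and a node $v$; recall that in the noise world $\ow$ no item outside $\Istar$ is ever adopted, so $\adoptws(v)\subseteq\Istar$, and writing $A_i=\adoptws(v)\cap B_i$, Property~\ref{prop:block_utility2} gives $\utilow(\adoptws(v))=\sum_{i=1}^{t}\Delta^{A}_i$ with $\Delta^{A}_i\le\Delta_i$. Using the structural fact about anchor blocks (if $B_j$ is the anchor block of $B_i$ with $j<i$, then $B_j$ is the anchor block of every one of $B_j,\ldots,B_i$), the blocks partition into consecutive \emph{groups}: a group $G=\{B_j,\ldots,B_k\}$ consists of blocks sharing the anchor block $B_j$, hence also sharing the anchor item $a_G=a_j=\cdots=a_k\in B_j$. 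Telescoping the marginal gains over $G$, the contribution of $G$ to $v$'s social welfare is $\sum_{i=j}^{k}\Delta^{A}_i=\utilow(Q\mid P)$, where $P=\adoptws(v)\cap(B_1\cup\cdots\cup B_{j-1})$ and $Q=\adoptws(v)\cap(B_j\cup\cdots\cup B_k)$.

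The heart of the argument is to bound this group contribution: by $\sum_{i=j}^{k}\Delta_i$ in general (immediate from $\Delta^{A}_i\le\Delta_i$), and by $0$ whenever $a_G\notin\adoptws(v)$. For the second bound I would invoke Lemma~\ref{lem:anchor} for node $v$ at the end of the diffusion with block index $k$: the set denoted ``$A$'' there is exactly $Q$, so if $Q\ne\emptyset$ and $a_G\notin Q$ then $\utilow(Q\mid B_1\cup\cdots\cup B_{j-1})<0$; since $P\subseteq B_1\cup\cdots\cup B_{j-1}$ and the blocks are pairwise disjoint, supermodularity of $\utilow$ gives $\utilow(Q\mid P)\le\utilow(Q\mid B_1\cup\cdots\cup B_{j-1})<0$, while if $Q=\emptyset$ the contribution is $0$. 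Consequently, for every node $v$, $\utilow(\adoptws(v))\le\sum_{G:\,a_G\in\adoptws(v)}\sum_{i\in G}\Delta_i$.

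Next I would sum over all nodes and take the expectation over $W^E$: for a fixed group $G$ this produces $\mathbb{E}_{W^E}[\#\{v:a_G\in\adoptws(v)\}]\cdot\sum_{i\in G}\Delta_i$, so it remains to show $\mathbb{E}_{W^E}[\#\{v:a_G\in\adoptws(v)\}]\le\sigma(\allseeds_{a_G})$. Since an item can only enter the system through a seed allocated to it, every node adopting $a_G$ is reachable in $W^E$ from some seed in $\allseeds_{a_G}$ --- trace the live-edge path along which $a_G$ arrived at $v$ back to a seed assigned $a_G$ (and if $\allseeds_{a_G}=\emptyset$, no node adopts $a_G$). Hence $\{v:a_G\in\adoptws(v)\}$ is contained in the set of nodes reachable from $\allseeds_{a_G}$, whose expected cardinality under $W^E$ is exactly the IC spread $\sigma(\allseeds_{a_G})$. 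Combining, and using that each block $B_i$ lies in a unique group $G$ with $a_i=a_G$, we obtain $\sww(\allalloc)\le\sum_G\sigma(\allseeds_{a_G})\sum_{i\in G}\Delta_i=\sum_{i\in[t]}\sigma(\allseeds_{a_i})\cdot\Delta_i$, as claimed.

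The hardest part will be the group-level negativity step: one cannot argue block by block that $\Delta^{A}_i\le 0$ when $a_i\notin\adoptws(v)$, because a partial block sitting inside a group may still carry positive marginal utility; only the telescoped sum over an entire anchor group is guaranteed non-positive, via Lemma~\ref{lem:anchor} applied at the \emph{last} block of the group together with the supermodularity step that weakens the conditioning set from $B_1\cup\cdots\cup B_{j-1}$ down to the smaller adopted set $P$. A secondary subtlety is the propagation-side inequality, which relies on the \emph{converse} of the Reachability Lemma --- that adopters of $a_G$ are reachable \emph{from} $\allseeds_{a_G}$ --- which I would establish by tracing the diffusion of $a_G$ back to a seed.
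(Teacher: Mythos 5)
Your proposal is correct and follows essentially the same route as the paper's proof: decompose each node's welfare into per-block marginal gains via Property~\ref{prop:block_utility2}, use Lemma~\ref{lem:anchor} to zero out the telescoped contribution of all blocks sharing an anchor whenever the anchor item is not adopted, relax the remaining gains to $\Delta_i$, and bound the number of adopters of $a_i$ by the nodes reachable from $\allseeds_{a_i}$ in $W^E$, whose expectation is $\sigma(\allseeds_{a_i})$. Your explicit supermodularity step weakening the conditioning set from $B_1\cup\cdots\cup B_{j-1}$ down to the actually adopted prefix $P$ is in fact a small precision improvement over the paper, which states that comparison as an equality.
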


\begin{proof}
	For an edge possible world $W^E$, 
	suppose that after the diffusion process under $W^E$, every node $v$ adopts item set $A_v$.
	Let $A_{v,i} = A_v \cap B_i$ for all $i \in [t]$, and
	$\Delta^{A_v}_i = \utilow(A_{v,i} \mid A_{v,1} \cup \ldots \cup A_{v,i-1})$.
	Thus, we have

	\begin{align}
	\sww(\allalloc) & = \E_{W^E} \left[\sum_{v\in V} \utilow(A_v) \right] 
	= \E_{W^E} \left[\sum_{v\in V} \sum_{i\in [t]} \Delta^{A_v}_i \right]  \nonumber \\
	&	= \sum_{i\in [t]} \E_{W^E} \left[ \sum_{v\in V} \Delta^{A_v}_i \right], \label{eq:welfareEquiv2}
	\end{align}
	where the expectation is taken over the randomness of the edge possible worlds, and thus we use
	subscript $W^E$ under the expectation sign to make it explicit.
	By switching the summation signs and the expectation sign in the last equality above, 
	we show that the expected social welfare can be
	accounted as the summation among all blocks $B_i$  of the expected marginal gain of block $B_i$ on all nodes.
	We next bound $\E_{W^E} \left[\sum_{v\in V} \Delta^{A_v}_i \right]$ for each block $B_i$.
	
	Under the edge possible world $W^E$, for each $v\in V$, there are three possible cases for $A_{v,i}$.
	In the first case, $A_{v,i} = \emptyset$. In this case, $\Delta^{A_v}_i = 0$, so we do not need to count 
	the marginal gain $\Delta^{A_v}_i$.
	In the second case, $A_{v,i}$ is not empty but
	it does not co-occur with block $B_i$'s anchor $a_i$, that is $a_i \not\in A_v$, and $A_{v,i} \ne \emptyset$.
	In this case, Let $A' = A \cap (B_j \cup \ldots \cup B_i)$, where $B_j$ is the anchor block of $B_i$.
	Then $A'$ is not empty and we know $\utilow(A' \mid B_1 \cup \ldots \cup B_{j-1}) < 0$.
	Since we have $\utilow(A' \mid B_1 \cup \ldots \cup B_{j-1})  
	= \sum_{j'=j}^i \Delta^{A_v}_{j'}$. Thus the cumulative marginal gain of $\Delta^{A_v}_{j'}$ with $j\le j' \le i$ is negative, so 
	we can relax them to $0$, effectively not counting the marginal gain of $\Delta^{A_v}_i$ either.
	
	Finally, $A_{v,i}$ is non-empty and co-occur with its anchor $a_i$, 
	i.e. $a_i \in A$ and $A_{v,i} \ne \emptyset$.
	Since $A_v$ is a partial block, $\Delta^{A_v}_i \le \Delta_i$, we relax
	$\Delta^{A_v}_i$ to $\Delta_i$.
	This relaxation occurs only on nodes that adopt $a_i$.
	A node $v$ could adopt $a_i$ only when there is a path in $W^E$ 
	from a seed node that adopts $a_i$ to node $v$. 
	As defined in the lemma, $S_{a_i}$ is the set of seed nodes of $a_i$.
	Let $\Gamma(S_{a_i}, W^E)$ be the set of nodes that are reachable from $S_{a_i}$ in $W^E$.
	Then, there are at most $| \Gamma(S_{a_i}, W^E)|$ nodes at which we relax $\Delta^{A_v}_i$ to $\Delta_i$ 
	for block $B_i$. Hence,
	
	\begin{align} \label{eq:marginalRelax2}
	\sum_{v\in V} \Delta^{A_v}_i \le | \Gamma(S_{a_i}, W^E)| \cdot \Delta_i.
	\end{align}
	
	Furthermore, notice that $\E_{W^E}[| \Gamma(S_{a_i}, W^E)| ]$ $=$ $\sigma(S_{a_i})$, by the live-edge representation of
	the IC model.
	Therefore, together with Eq.~\eqref{eq:welfareEquiv2} and~\eqref{eq:marginalRelax2}, we have
	\begin{align*}
	\sww(\allalloc) & \le \sum_{i\in [t]} \E_{W^E} \left[  | \Gamma(S_{a_i}, W^E)| \cdot \Delta_i \right] \\ 
	& =  \sum_{i\in [t]} \sigma(S_{a_i}) \cdot \Delta_i.
	\end{align*}
	This concludes the proof of the lemma.
\end{proof}

Notice in Lemma~\ref{lem:allasw}, $|\allseeds_{a_i}| \leq \bbude_i$, whereas in Lemma~\ref{lem:greedyasw} $|\greedESeeds_{B_i}| = \bbude_i$. Hence
the combination of Lemma~\ref{lem:greedyasw} and Lemma~\ref{lem:allasw}, together with the fact that $\greedESeeds_{B_i}$ is a $(1-1/e-\epsilon)$-approximation of
	the optimal solution with $\bbude_i$ seeds (by the prefix-preserving property),
	leads to the approximation guarantee of $\algo$ (Eq. (\ref{eq:approxsw}) of Theorem~\ref{thm:main}), which we prove next.

\begin{restatable}{theorem}{thmnonuniformbudget} {\sc (Correctness of $\algo$)\ \ }\label{thm:nonuniformbudget}
	Let $\greedAlloc$ be the greedy allocation and $\allalloc$ be any arbitrary allocation.
	Given $\epsilon>0$ and $\ell>0$, the expected social welfare $\sw(\greedAlloc) \geq (1 - \frac{1}{e} - \epsilon ) \cdot \sw(\allalloc)$ with at least $1 - \frac{1}{|V|^{\ell}}$ probability.
\end{restatable}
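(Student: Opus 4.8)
The plan is to fix the noise world $\ow$, establish the approximation guarantee worldwise on a single high-probability event produced by $\PRIMM$, and then take expectation over $\ow$. First I would invoke the two block-accounting lemmas already in hand: Lemma~\ref{lem:greedyasw} gives $\sww(\greedAlloc) = \sum_{i\in[t]} \sigma(\greedESeeds_{B_i})\cdot\Delta_i$, and Lemma~\ref{lem:allasw} gives $\sww(\allalloc)\le\sum_{i\in[t]}\sigma(\allseeds_{a_i})\cdot\Delta_i$, where every $\Delta_i\ge 0$ by Property~\ref{prop:block_utility}. Hence it suffices to compare the two sums block by block, i.e.\ to show $\sigma(\allseeds_{a_i}) \le \tfrac{1}{1-1/e-\epsilon}\,\sigma(\greedESeeds_{B_i})$ for each block $B_i$, and then sum against the nonnegative weights $\Delta_i$.

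For the block-by-block comparison I would use three facts. (i) The effective budget of $B_i$ equals the budget of its anchor item, $\bbude_i = b_{a_i}$, and $\bbude_i$ is itself one of the item budgets in $\bvec$ (it is the smallest budget among items in $B_1\cup\cdots\cup B_i$); (ii) in the greedy allocation $\greedESeeds_{B_i}$ is exactly the top-$\bbude_i$ prefix of the ranked seed set $\greedSeeds$ returned by $\PRIMM$, while $|\allseeds_{a_i}|\le b_{a_i}=\bbude_i$; (iii) $\sigma(\cdot)$ is monotone, so writing $\sigma^{\star}(k)$ for the optimal IC spread achievable with $k$ seeds, we get $\sigma(\allseeds_{a_i}) \le \sigma^{\star}(|\allseeds_{a_i}|)\le\sigma^{\star}(\bbude_i)$. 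Now let $\mathcal{E}$ be the event that the ranking produced by $\PRIMM$ is prefix preserving, i.e.\ that \emph{simultaneously} for every budget $b\in\bvec$ its top-$b$ prefix is a $(1-1/e-\epsilon)$-approximation of $\sigma^{\star}(b)$; by the guarantee of $\PRIMM$ established in \textsection\ref{sec:PRIMM}, $\Pr[\mathcal{E}]\ge 1-1/|V|^{\ell}$. On $\mathcal{E}$, since $\bbude_i\in\bvec$ we have $\sigma(\greedESeeds_{B_i}) \ge (1-1/e-\epsilon)\,\sigma^{\star}(\bbude_i)\ge(1-1/e-\epsilon)\,\sigma(\allseeds_{a_i})$, which is precisely the per-block bound. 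Substituting into the two lemmas and using $\Delta_i\ge0$ yields $\sww(\greedAlloc)\ge(1-1/e-\epsilon)\,\sww(\allalloc)$ for \emph{every} noise world $\ow$, on the event $\mathcal{E}$.

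Finally I would lift this from a fixed noise world to the full expectation. The key observation is that the entire block machinery --- the partition $\allblocks=\{B_1,\dots,B_t\}$, the marginal gains $\Delta_i$, the effective budgets $\bbude_i$, and the anchor items $a_i$ --- depends on $\ow$, but the event $\mathcal{E}$ does \emph{not}: $\PRIMM$ runs on $G$ alone and is independent of the sampled noise. Hence, on $\mathcal{E}$, the inequality $\sww(\greedAlloc)\ge(1-1/e-\epsilon)\,\sww(\allalloc)$ holds pointwise in $\ow$, and taking $\E_{\ow}[\cdot]$ of both sides gives $\rho(\greedAlloc)=\E_{\ow}[\sww(\greedAlloc)] \ge (1-1/e-\epsilon)\,\E_{\ow}[\sww(\allalloc)] = (1-1/e-\epsilon)\,\rho(\allalloc)$, with probability at least $1-1/|V|^{\ell}$ over the randomness of $\PRIMM$. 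I expect the main obstacle to be essentially bookkeeping rather than a deep argument: one must verify that every $\bbude_i$ genuinely lies in $\bvec$ so that the single $\PRIMM$ event $\mathcal{E}$ simultaneously covers all $t$ blocks (so the failure probability is not multiplied by $t$), and must be careful that $\mathcal{E}$ is independent of $\ow$ so that conditioning on it and then taking the noise expectation is legitimate. The approximation algebra itself --- the per-block comparison and the summation against nonnegative $\Delta_i$ --- is routine once those two points are settled.
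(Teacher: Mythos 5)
Your proposal is correct and follows essentially the same route as the paper's proof: fix the noise world, apply Lemma~\ref{lem:greedyasw} and Lemma~\ref{lem:allasw}, compare block by block via the prefix-preserving guarantee of $\PRIMM$ (using $|\allseeds_{a_i}|\le\bbude_i=|\greedESeeds_{B_i}|$ and $\Delta_i\ge 0$), and then take the expectation over $\ow$. The two bookkeeping points you flag --- that each $\bbude_i$ is itself an item budget in $\bvec$ so one $\PRIMM$ event covers all blocks, and that this event is independent of the noise world --- are exactly the (implicit) hinges of the paper's argument, and you resolve them correctly.
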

\begin{proof}
	From Lemma \ref{lem:greedyasw} , we have for a possible world $\ow = (W^E, W^N)$,
	$\sww(\greedAlloc) =\sum_{i\in [\numblocks]} \sigma(\greedESeeds_{B_i}) \cdot \Delta_i $,
	where the size of $\greedESeeds_{B_i}$ is the effective budget of $B_i$.
	
	For an arbitrary allocation $\allalloc$, 
	since $a_i$ is the anchor item of $B_i$, by its definition 
	we know that $|\allseeds_{a_i}| = |\greedESeeds_{B_i}|$. 
	By the correctness of the prefix-preserve influence maximization algorithm we use in line~\ref{alg:PRIMM} 
	(Definition~\ref{def:PRIMM}, to be instantiated in \textsection\ref{sec:PRIMM}), we have that 
	with probability at least $1 - \frac{1}{|V|^{\ell}}$, 
	$\sigma(\greedESeeds_{B_i}) \geq (1 - \frac{1}{e} - \epsilon) \sigma(\allseeds_{a_i})$, for all blocks $B_i$'s
	and their corresponding anchors $a_i$'s.
	
	Let the distribution of world $\ow$ be $\dow$. Then, together with Lemma~\ref{lem:allasw},
	we have that with probability at least $1 - \frac{1}{|V|^{\ell}}$,
	\begin{align*}
	\sw(\greedAlloc) &= \mathbb{E}_{\ow \sim \dow} [\sww(\greedAlloc)] \\
	&= \mathbb{E}_{\ow \sim \dow} \left[\sum_{i\in [\numblocks]} \sigma(\greedESeeds_{B_i}) \cdot \Delta_i \right]\\
	&\geq \mathbb{E}_{\ow \sim \dow} \left[\sum_{i\in [\numblocks]}(1 - \frac{1}{e} - \epsilon) \sigma(\allseeds_{a_i}) \cdot \Delta_i \right]\\
	& \geq (1 - \frac{1}{e} - \epsilon) \mathbb{E}_{\ow \sim \dow} [ \sww(\allalloc)]  \\
	& = (1 - \frac{1}{e} - \epsilon) \sw(\allalloc).
	\end{align*}
	Therefore, the theorem holds.
\end{proof}	
	
In the following section, we explain the component $\PRIMM$ that provides the prefix preserving property.

\subsubsection{Item-wise prefix preserving IMM} \label{sec:PRIMM}

We first formally define the prefix-preserving property.

\begin{definition}{\sc (Prefix-Preserving Property).\ } \label{def:PRIMM}
	\fixedspaceword[.75]{Given $G$ $=$ $(V,E,p)$ and budget vector $\bvec$, an influence maximization algorithm $\mathbb{A}$
	is prefix-preserving w.r.t. $\bvec$, 
	if for any $\epsilon$ $>$ $0$ and $\ell$ $>$ $0$, 
	$\mathbb{A}$  returns an ordered set $\greedSeeds_{\bmax}$ of size $\bmax$, 
	such that  with probability  at least $1$ $-$ $\frac{1}{|V|^{\ell}}$, 
	for every $b_i$ $\in$ $\bvec$, the top-$b_i$ nodes of $\greedSeeds_{\bmax}$, denoted $\greedSeeds_{b_i}$, satisfies $\sigma(\greedSeeds_{b_i})$ $\geq$ $(1-\frac{1}{e} - \epsilon)$ $\OPT_{b_i}$, 
	where 
	$\OPT_{b_i}$ is the optimal expected spread of $b_i$ nodes.}
\end{definition}

%

Unfortunately, state-of-the-art IM algorithms such as IMM \cite{tang15}, SSA \cite{Nguyen2016}, and OPIM \cite{xiaokui-opim-sigmod-2018} are not prefix-preserving out-of-the-box. 
In this section, we present a non-trivial extension of IMM \cite{tang15}, called $\PRIMM$ (PRefix preserving IM Algorithm) (Algorithm \ref{alg:IMM_adopted}), to make it prefix-preserving.
The classical models of influence propagation assume a single item and 
IMM is one of the state of the art algorithms for influence maximization. For a single item, as well as for multiple items with uniform budgets, the prefix property is trivial. In the presence of multiple items with non-uniform budgets, an algorithm that returns a seed set of high quality with only a probabilistic guarantee need \emph{not} satisfy the prefix preserving property (Definition \ref{def:PRIMM}). We present $\PRIMM$ (\em{PR}efix \em{IMM}), shown in Algorithm \ref{alg:IMM_adopted}, which is a  prefix-preserving extension of IMM for multiple items. Notice that $\nodeselect(\mathcal{R},k)$ is the standard greedy algorithm for finding a seed set of size $k$ by solving max $k$-cover on the set of RR sets  $\mathcal{R}$. For more details, the reader is referred to \cite{tang15}. 
The $\nodeselect$ algorithm used in $\PRIMM$ is same as Alg $1$ of IMM, which we donot repeat for brevity.  

\begin{algorithm}[t!]
{ 
\DontPrintSemicolon
\caption{$\PRIMM$ $(\bvec, G, \epsilon, \ell)$} \label{alg:IMM_adopted}
Initialize $\mathcal{R} = \emptyset$, $s = 1$, $n = |V|$, $i = 1$, $\epsilon^\prime = \sqrt{2} \cdot \epsilon$, $\budgetSwitch = \false$; \\
$\ell = \ell  + \log 2 / \log n $, $\ell^\prime = \log_n(n^{\ell} \cdot |\bvec|)$; \\ \label{lin:ellb}
\While{$i \leq \log_2(n) - 1$ and $s \leq |\bvec|$} {
	$k = b_s$, $LB =1$; \\
	$x = \frac{n}{2^i}$; $\theta_i = \lambda^\prime_k/x$, where $\lambda^\prime_k$ is defined in Eq. \eqref{eq:lambdap}; \\
	\While{$|\mathcal{R}| \leq \theta_i$} {                                  \label{lin:genr1}
		Generate an RR set for a randomly selected node $v$ of $G$ and insert in $\mathcal{R}$; \\ \label{lin:genr2}
	}
	\If{$\budgetSwitch$}{
		$S_k = $ the first $k$ nodes in the ordered set $S_{b_{s-1}}$ returned from
		the previous call to $\nodeselect$
	}
	\Else {
		$S_k =\nodeselect(\mathcal{R}, k)$
	}
	\If{$n \cdot F_{\mathcal{R}}(S_k) \geq (1 + \epsilon^\prime)\cdot x$} { \label{lin:cov}
		$LB = n \cdot F_{\mathcal{R}(S_k)} / (1+\epsilon^\prime) $; \\
		$\theta_k = \lambda^\ast_k / LB$, where $\lambda^\ast_k$ is defined in Eq. \eqref{eq:lambdaa}; \\ \label{lin:suc1}
		\While{$|\mathcal{R}|<\theta_k$} {
			Generate an RR set for a randomly selected node $v$ of $G$ and insert in $\mathcal{R}$; \\    \label{lin:suc2}
		}
		$s = s + 1 $; $\budgetSwitch = \true$ \\	
	}
	\Else{ $i= i+1$; $\budgetSwitch = \false$ \\} \label{lin:fail}
	
}
\If{$s \leq |\bvec|$}{
	$\theta_k = \lambda^\ast_{b_s}/LB$; \\ \label{lin:final}	
}	
$\mathcal{R} = \emptyset$; \\
\While{$|\mathcal{R}|<\theta_k$} {   \label{lin:weinote}
		Generate an RR set for a randomly selected node $v$ of $G$ and insert in $\mathcal{R}$; \\
}
$S_{\bmax} = \nodeselect(\mathcal{R}, \bmax)$; \\ \label{lin:nodes}
\textbf{return} $S_{\bmax}$ as the final seed set;
}
\end{algorithm}

	State-of-the-art IM algorithms including IMM use reverse influence sampling (RIS) approach~\cite{borgs14} governed by reverse-reachable (RR) sets.
	An RR set is a random set of nodes sampled from the graph by (a) first selecting a node $v$ uniformly at random from the graph, and
	(b) then simulating the reverse propagation of the model (e.g., IC model) and adding all visited nodes into the RR set.
	The main property of a random RR set $R$ is that: influence spread $\sigma(S) = n\cdot \mathbb{E}[\mathbb{I}\{S\cap R \ne \emptyset\}]$ for any seed set $S$, where
	$\mathbb{I}$ is the indicator function.
	After finding large enough number of RR sets, the original influence maximization problem is turned into a $k$-max coverage
	problem -- finding the set of $k$ nodes that covers the most number of RR sets, where a set $S$ covers an RR set $R$ if $S \cap R \ne \emptyset$.
	All RIS algorithms use the same well-known coverage procedure, denoted as $\nodeselect(\mathcal{R},k)$ in~\cite{tang15}, and thus we omit its
	description here.
	These algorithms mainly differ in estimating the number of RR sets needed for the approximation guarantee.
	The number of RR sets generated by these algorithms is in general not monotone with the budget $k$, making them not prefix preserving.
	Our $\PRIMM$ algorithm carefully addresses this issue, even with nonuniform item budgets, while keeping the efficiency of the algorithm.

$\PRIMM$ ingests four inputs, namely the budget vector $\bvec$, graph $G$, $\epsilon$ and $\ell$, with $\bvec$ sorted in non-increasing order as stated in Definition~\ref{def:PRIMM}.
Given $\ell$, for a budget $k$, IMM generates a set of RR sets $\mathcal{R}$, such that $|\mathcal{R}|\geq \lambda^\ast_k / \OPT_k$ with probability at least $1 - 1/n^{\ell}$. $\PRIMM$ derives a number $\ell^\prime > \ell$ as a function of $\ell$ (Algorithm \ref{alg:IMM_adopted}, line \ref{lin:ellb}), the details of which we provide in Lemma \ref{lem:primmcorrect}. Before that, we briefly describe $\PRIMM$. Extending the bounding technique of \cite{tang15}, for each budget $k$, we set 
\begin{equation}\label{eq:lambdap}
\lambda^\prime_k = \frac{(2+\frac{2}{3}\epsilon^\prime) \cdot (\log {n \choose k} + \ell^\prime \cdot \log \ n+\log\log_{2} \ n )\cdot n}{\epsilon^{\prime 2}},
\end{equation} 
\vspace*{-1mm}
\begin{gather}\label{eq:lambdaa}
\lambda^\ast_k = 2n \cdot ((1 - 1/e)\cdot \alpha + \beta_k)^2 \cdot \epsilon^{-2}, 
\end{gather} 
where, $\alpha = \sqrt{\ell' \log n + \log 2}$ is a constant independent of $k$, and 
$\beta_k = \sqrt{(1 - 1/e) \cdot (\log \tbinom{n}{k}+\ell^\prime \log \ n+ \log 2)}$.
Note that we use $\log$ without a base to represent the natural logarithm.

The basic idea of $\PRIMM$ is to generate enough RR sets such that for any budget $k \in \bvec$, $|\mathcal{R}| \geq \lambda^\ast_k / \OPT_k$, with probability at least $1 - 1/n^{\ell^\prime}$. Since $\OPT_k$ is unknown, we rely on a good lower bound of $\OPT_k$, i.e., $LB_k$, as proposed in IMM \cite{tang15}. Specifically $\PRIMM$ starts from the highest budget, i.e., $b_1$. For a given budget $k \in \bvec$ and $i$ it samples enough RR sets into $\mathcal{R}$ first (lines \ref{lin:genr1}-\ref{lin:genr2}) and then checks the coverage condition on the sampled set of RR sets (line \ref{lin:cov}). Note if $\mathcal{R}$ already had enough number of RR sets (generated at a previous budget), then it skips RR set generation and moves directly to coverage check. If the coverage condition succeeds, then a good $LB$ for the budget $k$ is determined. 
It uses the $LB$ to find the required number of RR sets (lines \ref{lin:suc1}-\ref{lin:suc2}) for $k$ and moves to the next budget.
It then reuses the prefix of the ordered seed set found for budget $k$ 
	as the seed set found for the new budget, avoiding a redundant call to the
	$\nodeselect$ procedure.
This is fine because $\nodeselect$ is a deterministic greedy procedure in finding
	seed nodes, and the last call to $\nodeselect$ before the budget switch, is using
	the same RR set collection $\mathcal{R}$ with a larger budget, and thus it already found
	all the seed nodes for the new budget.
If the coverage condition fails,
	 it increments $i$ to sample more RR sets for the current budget $k$ (line \ref{lin:fail}). 

If for any budget, all possible $i$ values are tested, $\PRIMM$ breaks the for-loop and generates RR sets (for that budget) using $LB=1$ (line \ref{lin:final}), which is the lowest possible value of $LB$. Further, since budgets are sorted in non-increasing order and $\lambda^\ast_k$ is monotone in $k$ (Eq. \eqref{eq:lambdaa}), there cannot be any remaining budget $k^\prime$, where $k^\prime \le k$, for which $\lambda^\ast_{k^\prime} /LB$ (line  \ref{lin:final}) is higher. Hence the RR set generation process terminates.

Lastly, after determining $|\mathcal{R}|$, those many RR sets are generated from scratch (line \ref{lin:weinote}) on which the final $\nodeselect$ is invoked. This addresses a recently found issue of the original IMM algorithm \cite{chen2018issue}.
	$\PRIMM$ then returns the top-$\bmax$ seeds obtained from $\nodeselect$ (line \ref{lin:nodes}).

The correctness and the running time of the $\PRIMM$ algorithm mainly follow 
	the  proof of the IMM algorithm~\cite{tang15,chen2018issue}. We first show the correctness and towards that we prove that the following lemma holds.

\begin{lemma}\label{lem:primm}
	Let $\mathcal{R}$ be the final set of RR sets generated by $\PRIMM$ at the end and let $k\in \bvec$ be any budget. Then $|\mathcal{R}|  \geq  \lambda^\ast_k / \OPT_k$ holds with probability at least $1 - 1/n^{\ell^\prime}$.
\end{lemma}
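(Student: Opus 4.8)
The plan is to adapt the IMM analysis~\cite{tang15} and its correction~\cite{chen2018issue} to the multi-budget sweep of $\PRIMM$. Fix a budget $k\in\bvec$; the goal is to show that the collection $\mathcal{R}$ regenerated at line~\ref{lin:weinote} satisfies $|\mathcal{R}|\ge\lambda^*_k/\OPT_k$ except with probability at most $1/n^{\ell'}$. Compared with IMM, the two new features are: (i) the sampling/coverage search is run over the whole sorted vector $\bvec$, carrying $\mathcal{R}$ and the search index $i$ from one budget to the next; and (ii) after a budget switch, $S_k$ is taken as the length-$k$ prefix of the previously computed greedy seed set instead of being recomputed by $\nodeselect$. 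Feature (ii) is harmless for this lemma: $\nodeselect$ is the deterministic greedy max-cover procedure, so for any fixed $\mathcal{R}$ that prefix is a fixed node set with $|S_k|=k$, hence $\sigma(S_k)\le\OPT_k$, which is all the coverage analysis uses.

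The engine is the standard RR-set concentration bound: $n\cdot F_{\mathcal{R}}(S)$ is an unbiased estimator of $\sigma(S)$ for any fixed $S$, with Chernoff/martingale-type tails. The quantity $\lambda'_k$ of Eq.~\eqref{eq:lambdap} is calibrated so that, once $|\mathcal{R}|\ge\theta_i=\lambda'_k/x$, the over-estimation event $nF_{\mathcal{R}}(S_k)\ge(1+\epsilon')\sigma(S_k)$ has probability at most $1/\bigl(\binom{n}{k}\,n^{\ell'}\log_2 n\bigr)$; inside $\lambda'_k$ the summands $\log\binom{n}{k}$, $\ell'\log n$, and $\log\log_2 n$ pay, respectively, for a union bound over all size-$k$ node sets (which covers every possible realization of the data-dependent set $S_k$), the target failure exponent $n^{-\ell'}$, and a union bound over the at most $\log_2 n-1$ values of $i$ examined while $k$ is the active budget. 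Taking these unions, with probability at least $1-1/n^{\ell'}$ the good event $\mathcal{E}_k$ holds: whenever the coverage test at line~\ref{lin:cov} succeeds while $k$ is active, the resulting $LB=nF_{\mathcal{R}}(S_k)/(1+\epsilon')\le\sigma(S_k)\le\OPT_k$.

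Condition on $\mathcal{E}_k$ and trace which value of the variable $\theta_k$ survives to line~\ref{lin:weinote}. Case 1: some coverage test succeeds while $k$ is active; then the last such success sets $\theta_k=\lambda^*_k/LB\ge\lambda^*_k/\OPT_k$, and the following while-loop drives $|\mathcal{R}|$ up to $\theta_k$; since $\mathcal{R}$ is never shrunk during the sweep, $|\mathcal{R}|\ge\lambda^*_k/\OPT_k$ persists to the end of the sweep. Case 2: the sweep exits because $i$ hits $\log_2 n-1$ while some budget $b_s$ is active; then line~\ref{lin:final} sets $\theta_k=\lambda^*_{b_s}$ with $LB=1$, and since $\OPT_{k'}\ge k'$ for every budget one has $\lambda^*_{k'}/\OPT_{k'}\le\lambda^*_{k'}/k'\le\lambda^*_{1}\le\lambda^*_{b_s}$ by the monotonicity of $\lambda^*$ in the budget (Eq.~\eqref{eq:lambdaa}), so this single $\theta_k$ is adequate for every $k\in\bvec$ at once. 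The reconciliation of the two regimes uses that $i$ is monotone non-decreasing over the whole run (it is never reset at a budget switch) and that $LB\ge x$ at every successful test, which, combined with the monotonicity of $\lambda^*$ and of the optimal spread in the budget and the bound $\OPT_k\ge k$, shows that the $\theta_k$ left standing at line~\ref{lin:weinote} is no smaller than $\lambda^*_k/\OPT_k$; a final union bound over the $|\bvec|$ budgets together with $\ell'=\log_n(n^\ell|\bvec|)$ then yields the $1-1/n^\ell$ guarantee used in Lemma~\ref{lem:primmcorrect}. The main obstacle is exactly this last step: because $\PRIMM$ discards $\mathcal{R}$ and rebuilds it once, to a size governed by whichever budget was active last, one must pin down which $\theta_k$ survives and argue — via the monotonicity relations above rather than the particular $\OPT$ values — that it is simultaneously large enough for every budget in $\bvec$.
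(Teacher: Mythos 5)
Your probabilistic core is essentially the paper's: calibrating $\lambda'_k$ so that the coverage estimate does not over-estimate, union-bounded over the $\binom{n}{k}$ candidate seed sets and the at most $\log_2 n$ values of $i$ examined for a budget, yields $LB \le \OPT_k$ with probability at least $1-1/n^{\ell'}$ per budget --- this is exactly the paper's invocation of Lemmas 6 and 7 of IMM --- and your observation that reusing the prefix of the previous $\nodeselect$ output is harmless for this lemma is correct.

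The gap is in your final bookkeeping step. Your Case-2 chain $\lambda^\ast_{k'}/\OPT_{k'}\le\lambda^\ast_{k'}/k'\le\lambda^\ast_1\le\lambda^\ast_{b_s}$ is fine when the sweep exits with $LB=1$ at line~\ref{lin:final}. But in Case 1, where every budget passes the coverage test, the value of $\theta_k$ left standing at line~\ref{lin:weinote} is $\lambda^\ast_{b_{|\bvec|}}/LB_{b_{|\bvec|}}$ for the \emph{smallest} budget, with $LB_{b_{|\bvec|}}$ close to $\OPT_{b_{|\bvec|}}$, and the monotonicity relations you cite do not make this dominate $\lambda^\ast_{b_1}/\OPT_{b_1}$: $\lambda^\ast_k$ grows essentially linearly in $k$ (through $\log\binom{n}{k}$) while $\OPT_k$ can be nearly flat in $k$ (e.g., a hub node already reaching most of the graph gives $\OPT_{b_{|\bvec|}}\approx\OPT_{b_1}$), so the surviving $\theta_k$ can be a factor of order $b_1/b_{|\bvec|}$ too small for the largest budget. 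Hence the step ``the $\theta_k$ left standing is no smaller than $\lambda^\ast_k/\OPT_k$ for every $k$'' does not follow from, and is not saved by, the monotonicity of $\lambda^\ast$, of $\OPT$, and of $\OPT_k\ge k$. The paper's accounting avoids this by reasoning about the accumulated collection $\mathcal{R}$, which is never discarded during the sweep: at the moment budget $b_s$ is finished one has $|\mathcal{R}|\ge\lambda^\ast_{b_s}/LB_{b_s}$, and since $\mathcal{R}$ only grows, these bounds hold for all $s$ simultaneously at the end of the loop; the regeneration must then reproduce at least that many RR sets (equivalently, $\theta_k$ must be read as the running maximum of the per-budget thresholds), which is how the paper's concluding sentence ``$|\mathcal{R}|\ge\lambda^\ast_k/LB_k$ for any $k\in\bvec$'' is to be understood. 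You already state the ``never shrunk during the sweep'' fact --- you need to carry that quantity through to the regenerated set rather than trying to show that the last scalar assignment to $\theta_k$ suffices on its own.
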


\begin{proof}
	Given $x \in [1,n]$, $\epsilon^\prime$ and $\delta_3 \in (0,1)$ and a budget $k$. Let $S_k$ be the seed set of size $k$ obtained by invoking $\nodeselect(\mathcal{R},k)$, where,
	\begin{equation}\label{eq:rrsize}
	|\mathcal{R}| \geq \frac{(2+\frac{2}{3}\epsilon^\prime)\cdot (\log {n \choose k} + \log (1/\delta_3))}{\epsilon^\prime} \cdot \frac{n}{x}.
	\end{equation} 
	Then, from Lemma $6$ of \cite{tang15}, if $\OPT_k < x$, then $n \cdot F_{\mathcal{R}}(S_k) < (1+\epsilon^\prime) \cdot x$ with probability at least $(1-\delta_3)$. Now let $j = \ceil{\log_2 \frac{n}{\OPT_k}}$. By union bound, we can infer that $\PRIMM$ has probability at most $(j-1) / (n^{\ell^\prime} \cdot \log_2 n)$ to satisfy the coverage condition of line \ref{lin:cov} for the budget $k$. Then by Lemma $7$ of \cite{tang15} and the union bound, $\PRIMM$ will satisfy $LB_k \leq \OPT_k$ with probability at least $1 - n^{\ell^\prime}$. We know that for any $k \in \bvec$, $|\mathcal{R}| \geq \lambda^\ast_k / LB_k$, hence the lemma follows.
\end{proof}

We are now ready to prove the correctness of $\PRIMM$.

\begin{lemma} \label{lem:primmcorrect}
	$\PRIMM$ returns a prefix preserving $(1 - 1/e - \epsilon)$-approximate solution $S_{\bmax}$ to the optimal expected spread, with probability at least $1 - 1/n^\ell$. 
\end{lemma}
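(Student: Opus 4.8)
The plan is to reduce the statement to Lemma~\ref{lem:primm} together with one structural fact about the greedy coverage routine $\nodeselect$. Lemma~\ref{lem:primm} already guarantees, for each fixed budget $k\in\bvec$, that the final RR-set collection $\mathcal{R}$ --- the one regenerated from scratch in line~\ref{lin:weinote} with the predetermined size $\theta_k$ --- satisfies $|\mathcal{R}|\ge\lambda^\ast_k/\OPT_k$ with probability at least $1-1/n^{\ell'}$. First I would union-bound this over all $|\bvec|$ budgets so that $|\mathcal{R}|\ge\lambda^\ast_k/\OPT_k$ holds \emph{simultaneously} for every $k\in\bvec$; since line~\ref{lin:ellb} sets $\ell'=\log_n(n^\ell|\bvec|)$ after first replacing $\ell$ by $\ell+\log 2/\log n$ (which scales $n^\ell$ by $2$), we have $|\bvec|/n^{\ell'}=1/(2n^\ell)$, so this ``all budgets sufficiently sampled'' event holds with probability at least $1-1/(2n^\ell)$. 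Call it $\mathcal{E}_1$.

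Conditioned on $\mathcal{E}_1$, I would invoke the standard IMM martingale-concentration argument (\cite{tang15}, with the correction of~\cite{chen2018issue}): whenever $|\mathcal{R}|\ge\lambda^\ast_k/\OPT_k$, the output $\nodeselect(\mathcal{R},k)$ satisfies $\sigma(\nodeselect(\mathcal{R},k))\ge(1-1/e-\epsilon)\OPT_k$ with probability at least $1-1/n^{\ell'}$ --- the constants $\alpha,\beta_k$ inside $\lambda^\ast_k$ in Eq.~\eqref{eq:lambdaa} are chosen precisely to deliver this bound, and a collection larger than $\lambda^\ast_k/\OPT_k$ only helps. A union bound over the $|\bvec|$ budgets makes this hold for all $k\in\bvec$ at once with probability at least $1-|\bvec|/n^{\ell'}=1-1/(2n^\ell)$; intersecting with $\mathcal{E}_1$ gives a good event $\mathcal{E}$ of probability at least $1-1/n^\ell$ on which $\sigma(\nodeselect(\mathcal{R},b_i))\ge(1-1/e-\epsilon)\OPT_{b_i}$ for every $b_i\in\bvec$.

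It remains to upgrade this to the \emph{prefix} statement on $\mathcal{E}$, and this is the one genuinely new observation. $\PRIMM$ produces its output with a single call $\greedSeeds_{\bmax}=\nodeselect(\mathcal{R},\bmax)$ on the one fixed collection $\mathcal{R}$. Because $\nodeselect$ is the deterministic greedy maximum-coverage procedure that appends one node at a time in a fixed order, its output for budget $\bmax$ has, as its first $b_i$ nodes, exactly $\nodeselect(\mathcal{R},b_i)$ for every $b_i\le\bmax$, in particular for every $b_i\in\bvec$. Hence on $\mathcal{E}$, the top-$b_i$ nodes $\greedSeeds_{b_i}$ of $\greedSeeds_{\bmax}$ obey $\sigma(\greedSeeds_{b_i})=\sigma(\nodeselect(\mathcal{R},b_i))\ge(1-1/e-\epsilon)\OPT_{b_i}$ simultaneously for all $b_i\in\bvec$, which is exactly Definition~\ref{def:PRIMM}, and the lemma follows. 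I expect the only delicate part to be the probability bookkeeping above --- two families of randomized failures (undercounting RR sets, and greedy-coverage deviation), each ranging over all $|\bvec|$ budgets --- and checking that the $\log 2/\log n$ inflation of $\ell$ and the $|\bvec|$ folded into $\ell'$ in line~\ref{lin:ellb} are calibrated exactly so that the two union bounds sum to $1/n^\ell$; the monotonicity of $\lambda^\ast_k$ in $k$ and the non-increasing ordering of $\bvec$, already used in the discussion preceding the lemma, are what let me treat the final $\theta_k$ as dominating $\lambda^\ast_{b_i}/\OPT_{b_i}$ for all $b_i$, so nothing beyond invoking Lemma~\ref{lem:primm} is needed there.
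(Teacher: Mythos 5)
Your proposal is correct and follows essentially the same route as the paper's proof: invoke Lemma~\ref{lem:primm}, union-bound the two failure sources over all $|\bvec|$ budgets using the calibration of $\ell'$ and the $\log 2/\log n$ inflation of $\ell$ (which is exactly the paper's ``$1-2/n^\ell$ then boost to $1-1/n^\ell$'' bookkeeping, just split as $1/(2n^\ell)+1/(2n^\ell)$), rely on the from-scratch regeneration of $\mathcal{R}$ per \cite{chen2018issue}, and use the determinism of the greedy $\nodeselect$ to get the prefix property. Your explicit statement that the first $b_i$ nodes of $\nodeselect(\mathcal{R},\bmax)$ equal $\nodeselect(\mathcal{R},b_i)$ is only left implicit in the paper, but it is the same argument.
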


\begin{proof}
	We know from Lemma \ref{lem:primm} that the RR set sampling for any budget can result in the coverage condition (Algorithm~\ref{alg:IMM_adopted}, line~\ref{lin:cov}) failing with probability at most $1/n^{\ell^\prime}$. By applying union bound over all the budgets, we have that the failure probability of the coverage condition in $\PRIMM$ is at most $\sum_{k \in \bvec} 1/n^{\ell^\prime} = |\bvec| \cdot 1/n^{\ell^\prime}$. By setting $\ell^\prime = \log_n(n^{\ell} \cdot |\bvec|)$, we bound this failure probability to at most $1/n^\ell$. {\color{black} Thus $\ell^\prime$ is used for computing $\alpha$ and $\beta_k$ in Eq. (\ref{eq:lambdaa}). Further once $\theta_k$ is determined, we generate those many RR set from scratch. This follows the fix proposed in \cite{chen2018issue} for a bug in Theorem $1$ of \cite{tang15}. Without the fix, the top $S_{\bmax}$ nodes returned by the last call to $\nodeselect$ (line \ref{lin:nodes}), cannot be shown to have a $(1-1/e-\epsilon)$-approximate solution with probability at least $1-1/n^{\ell}$. For every budget $b_i\in\bvec$, we can then choose the prefix of top-$b_i$ nodes of $S_{\bmax}$ and use that as a solution $S_{b_i}$ for that budget, with the guarantee that with probability at least $1-1/n^{\ell}$ each $S_{b_i}$ is a $(1-1/e-\epsilon)$-approximate solution to $\OPT_{b_i}$.}
 
	By union bound, $\PRIMM$ returns a $(1 - 1/e - \epsilon)$-approximate prefix preserving solution with probability at least $1 - 2/n^{\ell}$.
	
	Finally by increasing $\ell$ to  $\ell + \log 2/ \log n$ in line \ref{lin:ellb}, we raise $\PRIMM$'s probability of success to $1 - 1/n^\ell$. 
\end{proof}

\noindent
{\bf Running time\ \ }

The running time of $\PRIMM$ essentially involves two parts: the time needed to generate the set of RR sets $\mathcal{R}$ and the total time of all $\nodeselect$ invocations. From Lemma $9$ of \cite{tang15}, we have for any budget $k$, the set of RR sets generated for that budget $\mathcal{R}_k$ satisfies, 

\begin{align*}
\mathbb{E}[|\mathcal{R}_k|] &\leq \frac{3 max\{ \lambda^\ast_k, \lambda^\prime_k\} \cdot (1+\epsilon^\prime)^2}{(1 - 1/e)\cdot \OPT_k}.
\end{align*}

Since $\lambda^\prime_k$ and $\lambda^\ast_k$ are both monotone in $k$ (Eq. (\eqref{eq:lambdap}) and (\eqref{eq:lambdaa})), we know their maximums are achieved for $k = \bmax$. 
 
Further let $\OPT_{min} := \OPT_{b_{|{\bf I}|}}$ be the minimum expected spread, i.e., minimum value of $\OPT$, across all budgets, then for any $\mathcal{R}_k$,
\begin{align*}
\mathbb{E}[|\mathcal{R}_k|] &\leq \frac{3 max\{ \lambda^\ast_{\bmax}, \lambda^\prime_{\bmax}\} \cdot (1+\epsilon^\prime)^2}{(1 - 1/e)\cdot \OPT_{min}} \\
& = O((\bmax+\ell^\prime)n\log \ n \cdot \epsilon^{-2} / \OPT_{min}).
\end{align*}

Further since $\PRIMM$ reuses the RR sets instead of generating them from scratch for every budget, for the RR set $\mathcal{R}$ generated by $\PRIMM$,

\begin{align}\label{eq:rsize}
\mathbb{E}[|\mathcal{R}|] &= max_{k \in \bvec}\{ \mathbb{E}[|\mathcal{R}_k|] \} \nonumber \\ 
& = O((\bmax+\ell^\prime)n\log \ n \cdot \epsilon^{-2} / \OPT_{min}).
\end{align}

For an RR set $R \in \mathcal{R}$, let $w(R)$ denote the number of edges in $G$ pointing to nodes in $R$. If $EPT$ is the expected value of $w(R)$, then we know, $n \cdot EPT \leq m \cdot \OPT_{min}$ \cite{tang15}. Hence using Eq.~\eqref{eq:rsize}, the expected total time to generate $\mathcal{R}$ is determined by,
\begin{align}\label{eq:rtime}
\mathbb{E}[\sum_{R \in \mathcal{R}}w(R)] &= \mathbb{E}[|\mathcal{R}|] \cdot EPT \nonumber \\
& = O((\bmax+\ell^\prime)(n+m)\log \ n \cdot \epsilon^{-2}).
\end{align}

Notice that generating RR set from scratch for the final node selection, following the fix of \cite{chen2018issue}, only adds a multiplicative factor of $2$. Hence the overall asymptotic running time to generate $\mathcal{R}$ remains unaffected. Thus intuitively, there are two changes in $\PRIMM$'s running time. 
The budget $k$ of a single item of IMM is replaced with $\bmax$, the maximum budget of any item. Secondly, by applying union bound on every individual item's failure probability, a factor of $\log_n|\bvec|$ is added to the sample complexity.
Using Lemma \ref{lem:primmcorrect} and Eq.~\eqref{eq:rtime} we now prove the
	correctness and the running time result of $\PRIMM$.
	
\begin{restatable}{theorem}{thmPRIMM} \label{thm:PRIMM}
	$\PRIMM$ is prefix preserving and returns a $(1 - 1/e - \epsilon)$-approximate solution to IM with at least $1 - 1/n^\ell$ probability in $O((\bmax+\ell+\log_n |\bvec|)(n+m)\log \ n \cdot \epsilon^{-2})$ expected time.
\end{restatable}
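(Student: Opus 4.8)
The plan is to prove Theorem~\ref{thm:PRIMM} in two independent pieces — the correctness/prefix-preservation claim and the running-time claim — and to obtain both essentially by assembling results already established. For correctness there is little left to do: Lemma~\ref{lem:primmcorrect} already asserts that $\PRIMM$ returns a prefix-preserving $(1-1/e-\epsilon)$-approximate solution $S_{\bmax}$ with probability at least $1-1/n^{\ell}$, so I would simply invoke it. The only point worth re-emphasizing is the meaning of ``prefix-preserving'' per Definition~\ref{def:PRIMM}: for every $b_i\in\bvec$ the top-$b_i$ prefix of $S_{\bmax}$ is \emph{simultaneously} a $(1-1/e-\epsilon)$-approximation to $\OPT_{b_i}$, and this simultaneity over all $|\bvec|$ budgets is exactly what the union bound in Lemma~\ref{lem:primmcorrect} pays for by inflating $\ell$ to $\ell'=\log_n(n^{\ell}|\bvec|)$ (and then re-inflating once more to absorb the extra factor of $2$ via the adjustment $\ell\leftarrow\ell+\log 2/\log n$ in Algorithm~\ref{alg:IMM_adopted}).

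For the running time, I would decompose the cost into (i) generating the final RR-set collection $\mathcal{R}$ and (ii) the cumulative cost of all $\nodeselect$ invocations. Part (i) is already bounded in Eq.~\eqref{eq:rtime} by $O((\bmax+\ell')(n+m)\log n\,/\,\epsilon^2)$: the key facts are that $\lambda'_k$ and $\lambda^\ast_k$ are monotone in $k$ (Eqs.~\eqref{eq:lambdap} and~\eqref{eq:lambdaa}), so their maxima are attained at $k=\bmax$; that $\PRIMM$ reuses RR sets across budget switches rather than regenerating them, so $\mathbb{E}[|\mathcal{R}|]=\max_{k\in\bvec}\mathbb{E}[|\mathcal{R}_k|]$; and the standard relation $n\cdot EPT\le m\cdot\OPT_{\min}$ converting RR-set sizes into edge visits. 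Regenerating the RR sets one final time for the last $\nodeselect$ call (the \cite{chen2018issue} fix) only multiplies this by a constant. For part (ii) I would argue that the outer \textbf{while} loop runs $O(\log n+|\bvec|)$ times — each iteration either increments $i$ (at most $\log_2 n$ times) or advances $s$ (at most $|\bvec|$ times) — and that on a budget switch the prefix-reuse branch is taken, so only $O(\log n)$ iterations issue a fresh $\nodeselect$ call; since each $\nodeselect(\mathcal{R},k)$ runs in time near-linear in $\sum_{R\in\mathcal{R}}|R|$ and the collection $\mathcal{R}$ is non-decreasing in size with the final collection the largest, the total $\nodeselect$ work is of the same order as part (i) and is absorbed. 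Finally, substituting $\ell'=\log_n(n^{\ell}|\bvec|)=\ell+\log_n|\bvec|$ and noting the $\Theta(\ell)$ adjustment to $\ell$ yields the stated $O((\bmax+\ell+\log_n|\bvec|)(n+m)\log n\,/\,\epsilon^2)$ expected time.

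The main obstacle I anticipate is part (ii): carefully bounding the combined cost of the several $\nodeselect$ invocations, since $\PRIMM$ may call it once per outer-loop iteration plus once at the end. The clean resolution is to observe that the RR-set collection only grows, that a budget switch triggers no new $\nodeselect$ call, and that a $\nodeselect$ call following an ``$i$-increment'' operates on a collection whose target threshold $\theta_i=\lambda'_k/x$ with $x=n/2^i$ roughly doubles each round; hence the collection sizes across successive fresh calls grow geometrically, so their total cost is within a constant factor of the last (largest) call, which is already covered by Eq.~\eqref{eq:rtime}. Everything else is bookkeeping over the already-proved Lemmas~\ref{lem:primm} and~\ref{lem:primmcorrect} and Eqs.~\eqref{eq:rsize}--\eqref{eq:rtime}.
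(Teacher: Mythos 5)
Your proposal is correct and follows essentially the same route as the paper's proof: correctness is delegated to Lemma~\ref{lem:primmcorrect}, and the running time is obtained by splitting the cost into RR-set generation (bounded via Eq.~\eqref{eq:rtime}) and the $\nodeselect$ invocations, using the geometric doubling of $|\mathcal{R}|$ across $i$-increments together with the fact that budget switches reuse the previous ordering and so trigger no fresh $\nodeselect$ call. The final substitution $\ell'=\ell+\log_n|\bvec|$ matches the paper's concluding step exactly.
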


\begin{proof}
	From Lemma \ref{lem:primmcorrect}, we have that $\PRIMM$ returns a prefix preserving $(1 - 1/e - \epsilon)$-approximate solution with at least $1 - 1/n^\ell$ probability. In that process $\PRIMM$ invokes $\nodeselect$, $\log_2 \ n -1$ times in the while loop and once to find the final seed set $S_{\bmax}$. 
	Note that, we intentionally avoid redundant calls to $\nodeselect$ when we switch budgets,
	which saves $|\bvec| $ additional calls to $\nodeselect$.
	
	Let $\mathcal{R}_i$ be the susbset of $\mathcal{R}$ used in the $i$-th iteration of the loop. Since $\nodeselect$ involves one pass over all RR set, on a given input $\mathcal{R}_i$, it takes $O(\sum_{R \in \mathcal{R}_i} |R|)$ time. Recall $|\mathcal{R}_i|$ doubles with every increment of $i$. Hence it is a geometric sequence with a common ratio of $2$. Now from Theorem $3$ of \cite{tang15} and the fact that there is no additional calls to $\nodeselect$ 
	during budget switch, we have total cost of invoking all $\nodeselect$ is $O( \mathbb{E}[\sum_{R \in \mathcal{R}}|R|])$.
	
	Since $|R| \leq w(R)$, for any $R \in \mathcal{R}$, then using Eq. \eqref{eq:rtime} we have,
	\begin{align*}
	O(\mathbb{E}[\sum_{R \in \mathcal{R}}|R|]) &= O(\mathbb{E}[\sum_{R \in \mathcal{R}}w(R)]) \\
	& = O((\bmax+\ell^\prime)(n+m)\log \ n \cdot \epsilon^{-2}) \\
	& = O((\bmax+\ell + \log_n |\bvec|)(n+m)\log \ n \cdot \epsilon^{-2}).
	\end{align*}
	Hence the theorem follows.
\end{proof}

Finally, the combination of Theorems~\ref{thm:nonuniformbudget} and~\ref{thm:PRIMM}
	gives our main Theorem~\ref{thm:main}.

\subsection{Experiments}\label{sec:exp}
\begin{table}[t!]
	
	\centering
	\begin{tabular}{rccccc}
	 & \flix & \dbBook & \dbMovie & \twit & \orkut \\ \hline
	{ \# nodes}			& $7.6$K & $23.3$K	& $34.9$K & $41.7$M & $3.07$M  \\ 
	{ \# edges} 		& $71.7$K  & $141$K	& $274$K & $1.47$G & $234$M  \\ 
	 { avg. degree}     & $9.43$ & $6.5$ & $7.9$	& $70.5$  & $77.5$\\ 
	 { type}            &  undirected  & directed & directed  &  directed & undirected  \\ \hline
	\end{tabular}
	\caption{Network Statistics}
	 \label{tab:datasets}
\end{table}
\subsubsection{Experiment Setup}

\begin{figure*}[ht]
\hspace*{-2mm}\includegraphics[width=1.05\textwidth]{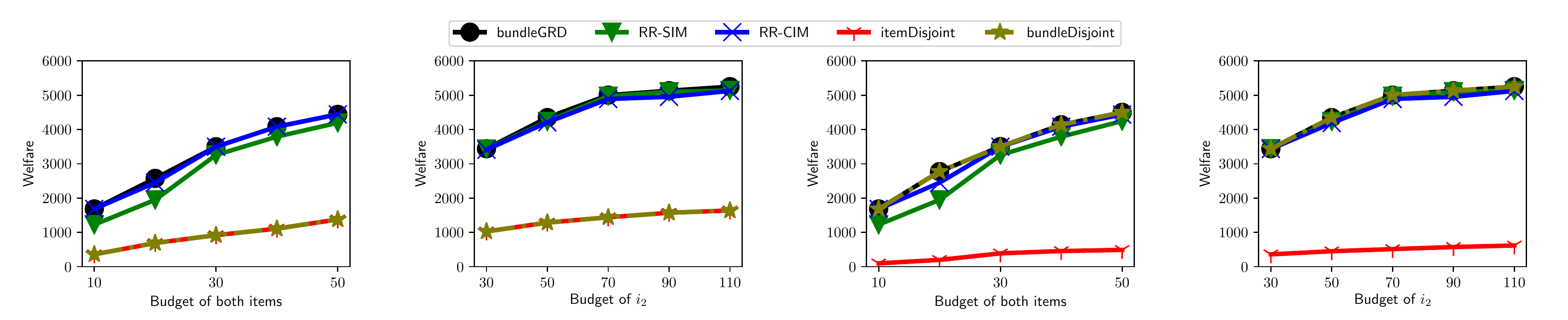} 
\begin{tabular}{cccc}
 \hspace{2mm} (a) Configuration $1$  \hspace{14mm} &  (b) Configuration $2$ \hspace{13mm} &  (c) Configuration $3$ \hspace{12mm} &  (d)  Configuration $4$
\end{tabular}
\caption{Expected social welfare in four configurations (on the \dbMovie network)} \label{fig:welfare}
\end{figure*}

\begin{figure*}[h!t!]
\hspace*{-2mm}\includegraphics[width=1.05\textwidth]{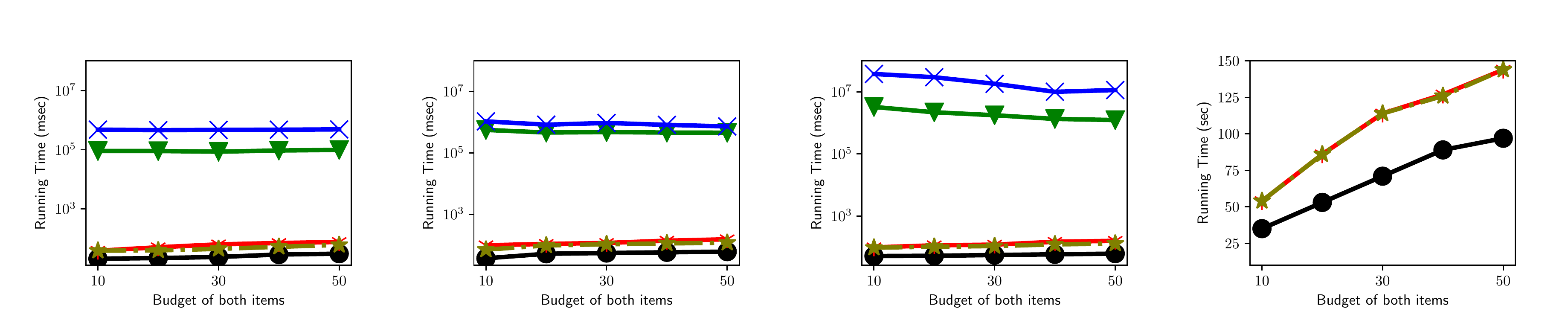}
\begin{tabular}{cccc}
\hspace{6mm} (a)  \flix  \hspace{18mm} &  (b) \dbBook \hspace{12mm} &  (c) \dbMovie \hspace{17mm} &  (d)  \twit
\end{tabular}
\caption{Running times of $\algo$, $\RRSIM$, $\RRCIM$, $\id$ and $\bd$ (on Configuration $1$)} \label{fig:time}
\end{figure*}


\begin{figure*}[ht]
\hspace*{-2mm}\includegraphics[width=1.05\textwidth]{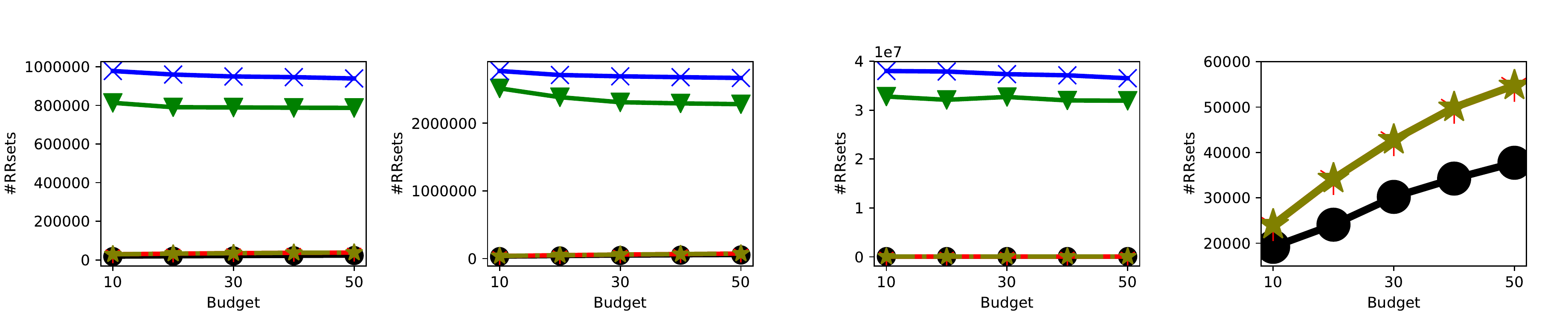}
\begin{tabular}{cccc}
\hspace{6mm} (a)  \flix  \hspace{18mm} &  (b) \dbBook \hspace{12mm} &  (c) \dbMovie \hspace{17mm} &  (d)  \twit
\end{tabular}
\caption{Number of RR sets generated by $\algo$, $\RRSIM$, $\RRCIM$, $\id$ and $\bd$ (on Configuration $1$)} \label{fig:rrsets}
\end{figure*}

\begin{figure*}[h!t!]
\hspace*{-2mm}\includegraphics[width=1.05\textwidth]{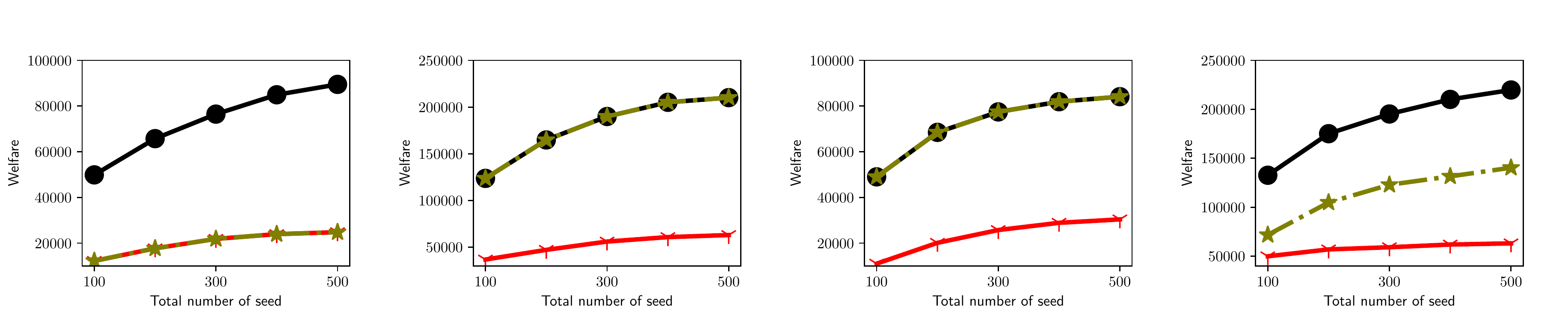} 
\begin{tabular}{cccc}
 \hspace{2mm} (a) Configuration $5$ & \hspace{14mm} (b) Configuration $6$   & \hspace{13mm} (c) Configuration $7$  & \hspace{12mm} (d) Configuration $8$
\end{tabular}
\caption{Expected social welfare in four configurations (on the \twit network)}\label{fig:synsw} 
\end{figure*}

\begin{figure*}[h!t!]
\hspace*{-2mm}\includegraphics[width=1.05\textwidth]{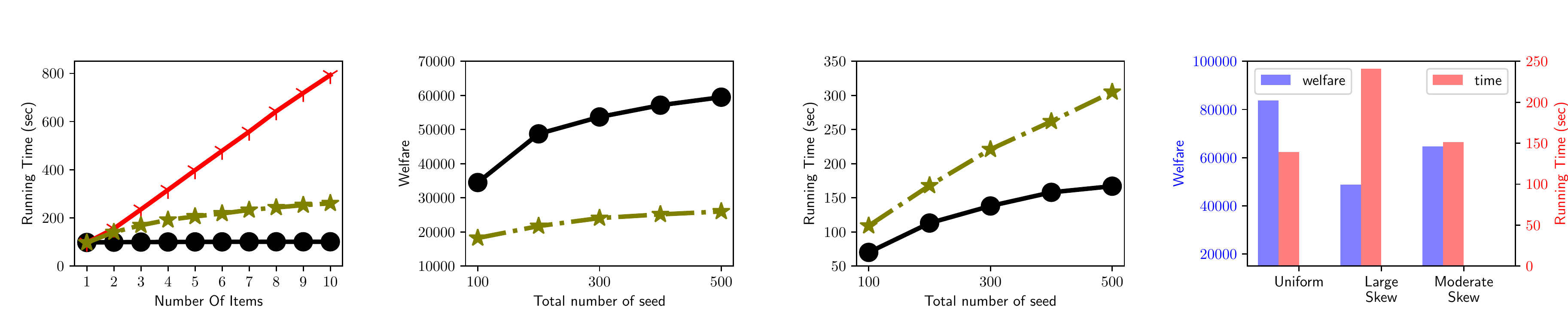} 
\begin{tabular}{cccc}
 \hspace*{-2mm} (a) Effect of number of items & \hspace*{10mm} (b) Welfare   & \hspace*{16mm} (c) Running time  & \hspace*{18mm} (d) Budget skew
\end{tabular}
\caption{(a) Impact of number of items on the running time and (b-d) Experiments using real $\parameterset$ (on the \twit network)}\label{fig:real} 
\end{figure*}

\begin{figure*}[h!t!]
\hspace*{-21mm}\includegraphics[width=1.25\textwidth]{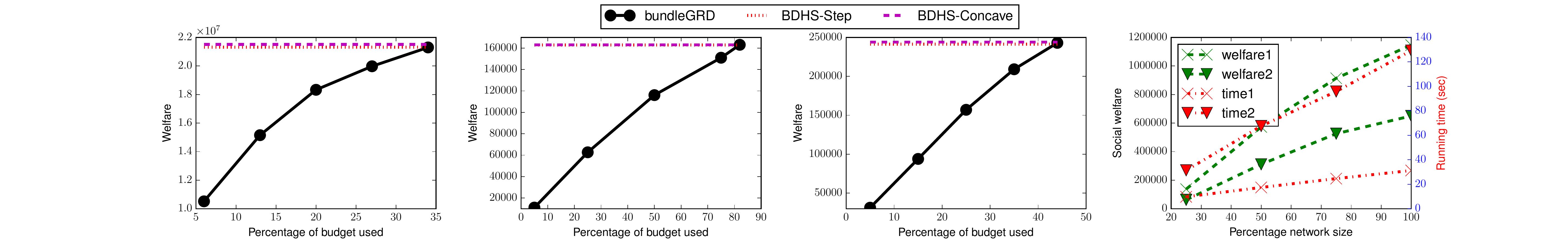}
\begin{tabular}{cccc}
\hspace{2mm} (a)  \orkut  \hspace{20mm} &  (b) \dbBook \hspace{12mm} &  (c) \dbMovie \hspace{19mm} &  (d)  \orkut
\end{tabular}
\caption{(a-c) Comparison against BDHS algorithms and (d) Scalability of $\algo$ } \label{fig:tcs}
\end{figure*}

\eat{
\begin{figure*}[ht]
\begin{minipage}[b]{0.45\linewidth}
\includegraphics[width=.48\textwidth]{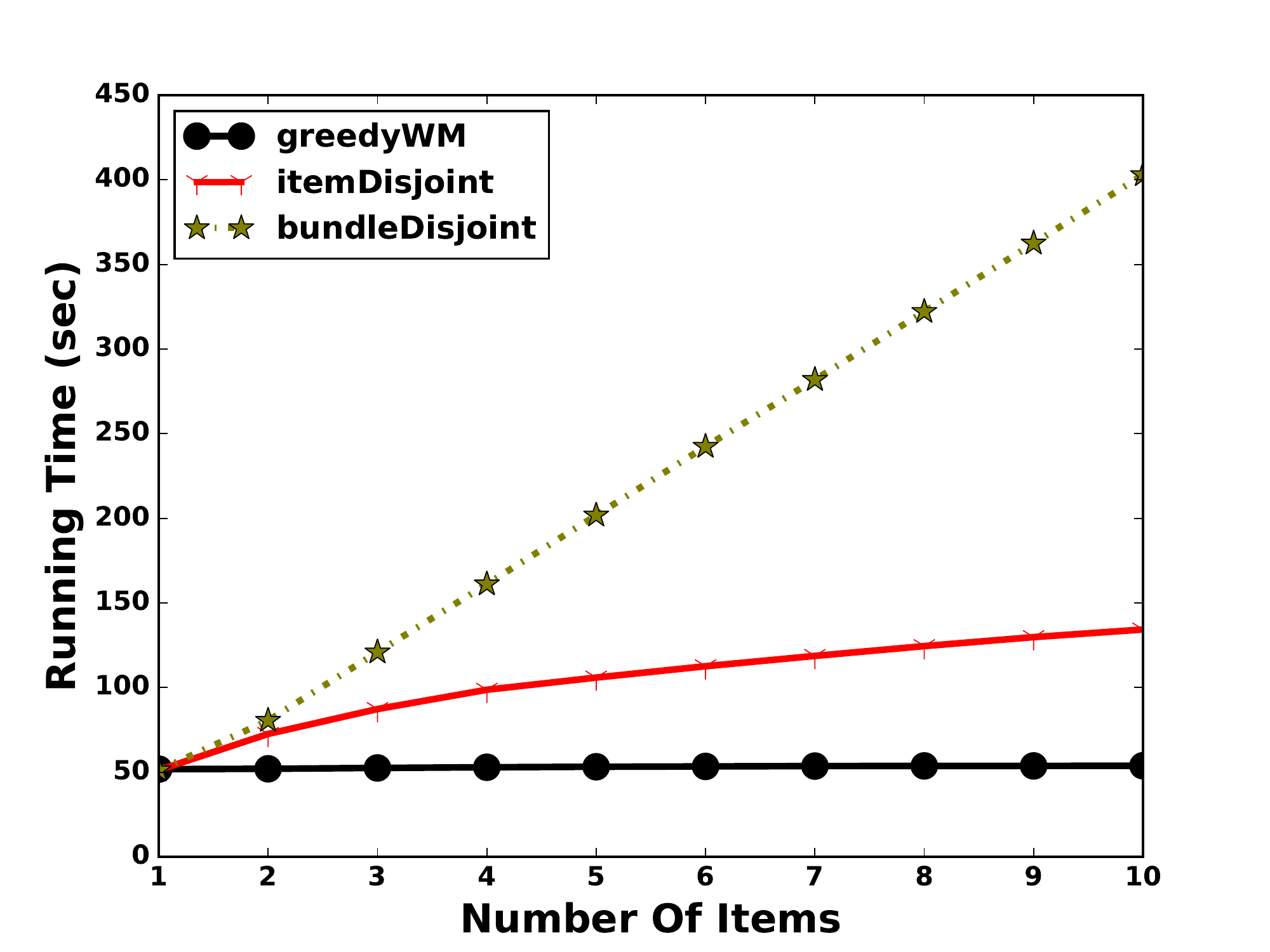}
\caption{Effect of number of items}
\label{fig:multiitemrun}
\end{minipage}
\hspace{0.5cm}
\begin{minipage}[b]{0.45\linewidth}
\begin{small}
\centering
\begin{tabular}{ccc}
\hspace{-12mm}\includegraphics[height=3cm,width=1.05\textwidth]{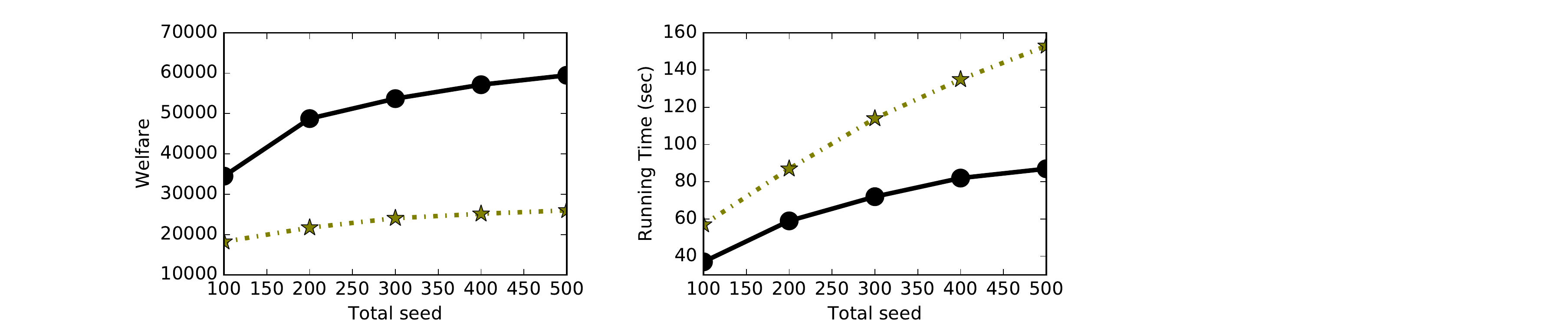} &
\hspace{8mm}\includegraphics[height=3cm,width=.31\textwidth]{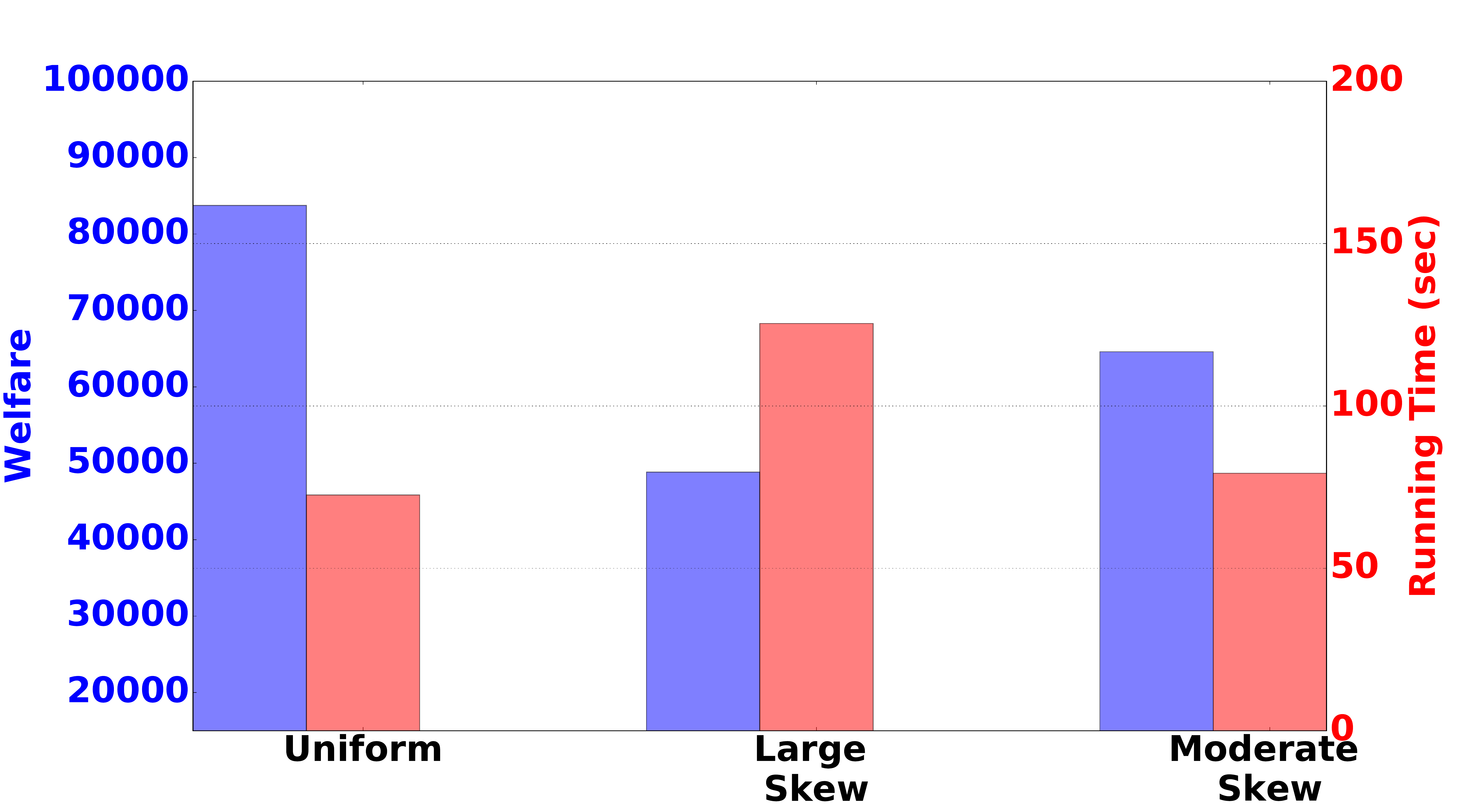}\\
\hspace{-53mm} (a) Social welfare  \hspace{30mm} (b) Running time   & \hspace{-68mm} (c) Effect of budget distribution on $\algo$
\end{tabular}
\caption{Results on real dataset}\label{fig:real} 
\end{small}
\end{minipage}
\end{figure*}
}
 
We perform extensive experiments on five real social networks.
We first experiment with synthetic utility (value and price) functions. 
For real utility functions, we learn the value and noise distributions of items from the bidding data in \ebay, 
and obtain item prices from \cgl and \fb groups \laksRev{to make them compatible with used items auctioned in eBay}. All experiments are performed on a Linux machine with Intel Xeon $2.6$ GHz CPU and $128$ GB RAM.

\spara{4.3.1.1 Networks} 
Table \ref{tab:datasets} summarizes the networks and their characteristics. 
\flix is mined in \cite{lu2015} from a social movie site and a strongly connected component is extracted. \db is a Chinese social network, where users rate  books, movies, music, etc.
In \cite{lu2015} all movie and book ratings of the users in the graph are crawled separately to derive two datasets from book and movie ratings: \dbBook and \dbMovie. \twit is one of the largest public network datasets. 
Finally \orkut is a large social network that we use to test scalability. Both \twit and \orkut can be obtained from \cite{twitter}.

\spara{4.3.1.2 Algorithms compared}
We compare $\algo$ against six baselines -- $\RRSIM$, $\RRCIM$, $\id$, $\bd$, $\tcsc$ and $\tcss$. $\RRSIM$ and $\RRCIM$ are two state-of-the art algorithms designed for complementary products in the context of IM \cite{lu2015}. However, they work only for two items. Extending the \comic framework and the $\RRSIM$ and $\RRCIM$ algorithms for more than two items is highly non-trivial as that requires dealing with automata with exponentially many states. 
Hence in comparing the performance of $\algo$ against $\RRSIM$ and $\RRCIM$, we  limit the number of items to two. Later we experiment with more than two items. Below, by deterministic utility of an itemset $I$, we mean $\val(I) - \price(I)$, i.e., its utility with the noise term ignored. 

\begin{enumerate}

\item {\bf \comic baselines.} For two items $\ia$ and $\ib$, given seed set of item $\ib$ (resp. $\ia$), $\RRSIM$ (resp. $\RRCIM$) finds seed set of item $\ia$ (resp. $\ib$) such that expected number of adoptions of $\ia$ is maximized. Initial seeds of $\ib$ (resp. $\ia$) are chosen using IMM \cite{tang15}. 

\item {\bf Item-disjoint.} Our next baseline $\id$ allocates only one item to every seed node. Given the set of items $\allitems$, $\id$ finds $\sum_{i \in \allitems} b_i$ nodes, say $L$,  using IMM \cite{tang15}, where $b_i$ is the budget of item $i$. Then it visits items in $L$ in non-increasing order of budgets, assigns item $i$ to first $b_i$ nodes and removes those $b_i$ nodes from $L$. By explicitly assigning every item to different seeds, $\id$ does not leverage the effect of supermodularity. However it benefits from the \textit{\chgins{network propagation}}: since the utilities are supermodular, if more neighbors of a node adopt some item, it is more likely that the node will also adopt an item. Thus, when individual items have positive utility and hence can be adopted and propagate on their own, by choosing more seeds,  $\id$ makes use of the \chgins{network propagation} to encourage more adoptions.

\item {\bf Bundle-disjoint.} Baseline $\bd$, aims to leverage both supermodularity and \chgins{network propagation}. It first orders the items $\allitems$ in non-increasing budget order and determines successively minimum sized subsets with non-negative deterministic utility, maintaining these subsets (``bundles'') in a list. Items in each bundle $B$ are allocated to a new set of $b_B := {\it min\/}\{b_i\mid i\in B\}$ seed nodes. The budget of each item in $B$ is decremented by $b_B$, and items with budget $0$ are removed. When no more bundles can be found, we revisit each item $i$ with a positive unused budget and repeatedly allocate it to the seeds of the first existing bundle $B$ which does not contain $i$. If $b_B > b_i$ (where $b_i$ is the current budget of $i$ after all deductions), then the first $b_i$ seeds from the seed set of $B$ are assigned to $i$. If an item $i$ still has a surplus budget, we select $b_i$ fresh seeds using IMM and assign them to $i$.

\item {\bf Welfare maximization baselines.} Our last two baselines, $\tcsc$ and $\tcss$ are two state-of-the-art welfare maximization algorithms under network externalities \cite{BhattacharyaDHS17}. 
As discussed in \textsection~\ref{sec:related}, their study has significant differences from our study, 
	but we still make an empirical comparison
	with their algorithms with the goal
	to \laksRev{explore what fraction of the  budget is needed by our model with network propagation to achieve the same
	social welfare as their model which has network externality but no network propagation.} 
We defer the details of the comparison method to  \textsection~\ref{sec:real_data}.

\end{enumerate}
%

\begin{table}[]
\hspace{-5mm}
\begin{tabular}{|l|c|c|c|c|l|}
\hline
\multicolumn{1}{|c|}{No} & Price                                                                                  & Value                                                                        & \multicolumn{1}{l|}{Noise}                                                                            & \multicolumn{1}{l|}{GAP}                                                                       & Budget     \\ \hline
1                        & \multirow{4}{*}{\begin{tabular}[c]{@{}c@{}}$\ia=3$\\ $\ib=4$\\ $\{\ia,\ib\}=7$\end{tabular}} & \multirow{2}{*}{\begin{tabular}[c]{@{}c@{}}$\ia=3,\ib=4$\\ $\{\ia,\ib\}=8$\end{tabular}} & \multirow{4}{*}{\begin{tabular}[c]{@{}c@{}}$\ia:N(0,1)$\\ $\ib:N(0,1)$\\ $\{\ia,\ib\}:N(0,2)$\end{tabular}} & \multirow{2}{*}{\begin{tabular}[c]{@{}c@{}}$\qao=0.5$, $\qbo=0.5$\\ $\qab=0.84,\qba=0.84$\end{tabular}} & Uniform    \\ \cline{1-1} \cline{6-6} 
2                        &                                                                                        &                                                                              &                                                                                                       &                                                                                                & Nonuniform \\ \cline{1-1} \cline{3-3} \cline{5-6} 
3                        &                                                                                        & \multirow{2}{*}{\begin{tabular}[c]{@{}c@{}}$\ia=3,\ib=3$\\ $\{\ia,\ib\}=8$\end{tabular}} &                                                                                                       & \multirow{2}{*}{\begin{tabular}[c]{@{}c@{}}$\qao=0.5$, $\qbo=0.16$\\ $\qab=0.98,\qba=0.84$\end{tabular}} & Uniform    \\ \cline{1-1} \cline{6-6} 
4                        &                                                                                        &                                                                              &                                                                                                       &                                                                                                & Nonuniform \\ \hline
\end{tabular}
\caption{Two item configurations}
\label{tab:configs}
\end{table}

\spara{4.3.1.3 Default Parameters}
Following previous works \cite{Huang2017,Nguyen2016} we set probability of edge $e = (u,v)$ to $1/\ind(v)$. 
Unless otherwise specified, we use $\epsilon = 0.5$ and $\ell = 1$ as our default for all five methods as recommended in \cite{tang15,lu2015}. 
The \comic algorithms $\RRSIM$ and $\RRCIM$ use adoption probabilities, called GAP parameters \cite{lu2015}, to model the interaction between items. The GAP parameters can be simulated within the \model framework using utilities shown in Eq. \eqref{eq:gaputil}. The derivation follows simple algebra. Here, $\qao$ (resp., $\qab$) denotes the probability that a user adopts
item $\ia$ given that it has adopted nothing (resp., item $\ib$).

Let $\ia$ and $\ib$ be the two items. Suppose the desire set of a node $u$ only has item $\ia$. The condition that $u$ adopts $\ia$ is
	$\val(\ia)-\price(\ia)+\noise(\ia) \ge 0$.
Thus the GAP parameter $\qao$ is given by:
\begin{equation*}
\qao = \Pr[ \val(\ia)-\price(\ia)+\noise(\ia) \ge 0] = \Pr[ \noise(\ia) \ge \price(\ia) - \val(\ia)].
\end{equation*}

Now suppose $\ia$ has been adopted by $u$, and $\ib$ enters the desire set. 
The GAP parameter $\qba$ is the probability of adopting $\ib$ given that $\ia$ has been adopted. So we have
\begin{align*}
\qba & = \Pr[\val(\{\ia,\ib \})-\price(\ia) - \price(\ib) + \noise(\ia) + \noise(\ib) \ge \\
	& \ \  \hspace*{2ex}  \val(\ia) - \price(\ia) + \noise(\ia)
\mid \noise(\ia) \ge \price(\ia) - \val(\ia) ] \\
 & = \Pr[  \noise(\ib) \ge \price(\ib) - (\val(\{\ia,\ib \} ) - \val(\ia)) \mid \noise(\ia) \ge \\ 
 & \ \hspace*{2ex} \price(\ia) - \val(\ia) ].
\end{align*}
Since noise $\noise(\ib)$ is independent of noise $\noise(\ia)$, we can remove the above condition in the
	conditional probability, and obtain
\begin{equation*}
\qba =  \Pr\{  \noise(\ib) \ge \price(\ib) - (\val(\{\ia,\ib \} ) - \val(\ia))\}.
\end{equation*}
The other two GAP parameters, $\qbo$ and $\qab$ can be obtained similarly.
To summarize, we have

\begin{align} \label{eq:gaputil}
\qao & = \Pr[ \noise(\ia) \ge \price(\ia) - \val(\ia)], \nonumber \\ 
\qab & = \Pr[  \noise(\ia) \ge \price(\ia) - (\val(\{\ia,\ib \} ) - \val(\ib))], \\ \nonumber
\qbo & = \Pr[ \noise(\ib) \ge \price(\ib) - \val(\ib)], \\ \nonumber
\qba & =  \Pr[  \noise(\ib) \ge \price(\ib) - (\val(\{\ia,\ib \} ) - \val(\ia))].
\end{align}

\subsubsection{Experiments on two items} \label{sec:two_items}

\eat{ 
To enable comparison with the \comic algorithms $\RRSIM$ and $\RRCIM$, in our first set of experiments, we limit number of items to two.
} 
We explore four different configurations corresponding to the choice of the values, prices, noise distribution parameters, and item budgets (see Table \ref{tab:configs}). While \model does not assume any specific distribution for noise, in our experiments we use a Gaussian distribution for illustration.

In Configurations  $1$ and $2$, individual items have non-negative deterministic utility. In this setting $\id$ and $\bd$ are equivalent. In Configurations $3$ and $4$ one item has a negative deterministic utility while the other item has a non-negative one. In this setting, however, $\algo$ and $\bd$ are equivalent. One may also consider configurations where every individual item has negative deterministic utility. In such a  setting, $\id$ produces $0$ welfare, which makes the comparison degenerates.

For every parameter setting, we consider two budget settings, namely \textit{uniform} (e.g., Configuration $1$) and \textit{non-uniform} (resp. Configuration $2$). In case of uniform budget, both items have the same budget $k$, where $k$ is varied from $10$ to $50$ in steps of $10$. For non-uniform budget, $\ia$'s budget is fixed at $70$, and $\ib$'s budget is varied from $30$ to $110$ in steps of $20$.

\spara{4.3.2.1 Social Welfare}
We compare the expected social welfare achieved by all algorithms on all four configurations (Fig. \ref{fig:welfare}). We show the results only for \dbMovie, since the trend of the results is similar on other networks.
In terms of social welfare, $\algo$ achieves an expected social welfare upto $5$ times higher than $\id$ (Fig. \ref{fig:welfare}(d)). 
\eat{
\note[Laks]{Higher than what? Also, the figures show that $\algo$ is tied with one more algorithms w.r.t. SW. So, $\algo$ is NOT ouperforming all the baselines consistently. If so, what point are we making with these plots: why should $\algo$ be preferred? BTW, the label geedyWM in plots should be consistently changed to $\algo$.} 
} 
A similar remark applies when $\bd$ and $\algo$ are not equivalent (e.g., Fig. \ref{fig:welfare}(b)). Further, notice that $\RRSIM$ and $\RRCIM$ produce welfare similar to $\algo$. It follows from 
Table $4$ of \cite{lu2015} (full arxiv version) that under this configuration, $\RRSIM$ and $\RRCIM$ end up copying the seeds of the other item. Hence their allocations are similar to $\algo$.  However, as shown next, $\algo$ is much more efficient than the other two algorithms, and easily supports more than two items, which makes $\algo$ more suitable in practice for multiple items over large networks.

\spara{4.3.2.2 Running time}
We study the running time of all algorithms using Configuration $1$ as a representative case. 
The results are shown in Fig. \ref{fig:time}. As can be seen, $\algo$ and $\bd$ are equivalent and hence have the same running time. However, $\algo$ significantly outperforms all other baselines on every dataset. $\RRSIM$ and $\RRCIM$ are particularly slow. In fact, on the large $\twit$ network, they could not finish even after our timeout after 
	$6$ hours (hence they are omitted from Fig. \ref{fig:time}(d)). In comparison with the baselines, $\algo$ is upto $5$ orders of magnitude (resp. $1.5$ times) faster than $\RRCIM$ (resp. $\id$). Running times on other configurations show a similar trend, and are omitted. 

\spara{4.3.2.3 Memory}
Lastly we study the memory required by all algorithms using Configuration $1$. 
The results are shown in Fig. \ref{fig:rrsets}. Since the amount of memory required is directly related to the number of RR sets each algorithm produces, we show the RR set numbers in the plots. $\RRSIM$ and $\RRCIM$ are based on TIM, whereas the other three algorithms leverage IMM, which generates much less number of RR sets than TIM. Further for the comic algorithms, the two separate pass involving forward and backward simulations also results in more RR sets generation.

\subsubsection{More than two items}\label{sec:mtsyn}

We use the largest dataset Twitter for tests in this subsection.

\spara{4.3.3.1 The configurations}  
Having established the superiority of $\algo$ for two items, we now consider more than two items. 
Recall that $\RRSIM$ and $\RRCIM$ cannot work with more than two items, so we confine our comparison to $\id$ and $\bd$. 
\eat{Again several interesting configurations can be obtained based on the budgets and the utilities of the items. We devise four such specific configurations. Unless otherwise stated, in every configuration the number of items is assumed to be $10$.
} 
We gauge the performance of the algorithms on social welfare and running time. We also study the effect of budget distribution on social welfare. We design four configurations corresponding to the choice of budget and utility (see Table~\ref{tab:configs-2+items}). For all configurations, we sample noise terms from $N(0,1)$. Price and value are set in such a way as to achieve certain shapes for the set of itemsets in the lattice that have a positive utility (see below). 

\begin{itemize}

\item {\bf Configurations 5-7.} Configuration 5 is the simplest: every item has the same budget; price and value are set such that every item has the same utility of $1$ and utility is additive. Thus, by design, this configuration gives minimal advantage to any algorithm that tries to leverage supermodularity.  The next two configurations (6 and 7) model the situation where a single ``core'' item is necessary in order to make an itemset's utility positive. E.g., a smartphone may be a core item, without which its accessories do not have a positive utility. We set the core item's utility to $5$. The addition of any other item increases the utility by $2$. Thus, all supersets of the core item have a positive utility, while all other subsets have a negative utility. Hence, the set of subsets with positive utility forms a ``cone'' in the itemset lattice. In Configuration 6 (resp. 7), the core item is the item with maximum (resp. minimum) budget. Finally, we design a more general configuration where the set of itemsets with positive utility forms an arbitrary shape (see Configuration 8 below). 

\item {\bf Configuration 8.} We consider the itemset lattice, with level $t$ having subsets of size $t$. We randomly set the prices and values of items in level 1 such that a random subset of items have a non-negative utility. Let $A_t$ be any itemset at level $t > 1$ and $i \in A_t$ any item. We choose a value uniformly at random, $\epsilon \sim U[1,5]$, and define 

\begin{equation}\label{eq:mar_val}
\val(i|A_t \setminus \{i\}) = max_{B \in \power(A_t \setminus \{i\}, t-2)} \{\val(i|B) + \epsilon\}
\end{equation}

where $\power(A, q)$ denotes the set of subsets of $A$ of size $q$. That is, the marginal gain of an item $i$ w.r.t. $A_t\setminus \{i\}$ is set to be the maximum marginal gain of $i$ w.r.t. subsets of $A_t$ of size $t-2$, plus a randomly chosen boost ($\epsilon$). 
E.g., let $A_4=\{i,j,k,l\}$, $t = 4$ then, $\val(i|\{j,k,l\}) = max\{\val(i|\{j,k\}), \val(i|\{k,l\}), \val(i|\{j,l\})\} + \epsilon$.

Recall that the value computation proceeds level-wise starting from level $t=0$. Thus, for any itemset $A_t$ in Eq.\eqref{eq:mar_val}, $\val(i|B)$ for subsets $B$ is already defined. 

\begin{table}

\begin{tabular}{|c|c|c|} \hline
  No & Value & Budget \\ \hline 
  5 & Additive & Uniform \\ \hline
  6 & Cone-max & Non-uniform \\ \hline
  7 & Cone-min & Non-uniform \\ \hline
  8 & Level-wise & Uniform \\ \hline
  \end{tabular}
\caption{Multiple item configurations}
\label{tab:configs-2+items}

\end{table}

Finally, we set 
$\val(A_t) = max_{i \in A_t} \{\val(A_t \setminus \{i\}) + \val(i|A_t \setminus \{i\})\}$. Now we show that this way of assigning values ensures that the value function is well-defined and supermodular. 
\end{itemize}

\begin{lemma}
The value function of Configuration 8 is supermodular.
\end{lemma}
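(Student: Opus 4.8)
The plan is to reduce the claim to the \emph{local} (pairwise) characterization of supermodularity: $\val$ is supermodular iff for every $S\subseteq\allitems$ and every pair of distinct items $i,j\in\allitems\setminus S$ one has $\val(S\cup\{i,j\})-\val(S\cup\{j\})\ge\val(S\cup\{i\})-\val(S)$. The ``only if'' direction is immediate, and the ``if'' direction follows by the standard telescoping argument (insert the elements of $T\setminus S$ into $S$ one element at a time and apply the pairwise inequality at each step). So it suffices to verify this single pairwise inequality for arbitrary $S$ and $i,j$. Along the way I would also record well-definedness: the values at level $0$ ($\val(\emptyset)=0$) and level $1$ are assigned outright, and Eq.~\eqref{eq:mar_val} together with the final $\max$ define the value of every itemset at level $t\ge2$ purely from values of itemsets at levels $t-1$ and $t-2$, so the recursion is grounded and terminates, and in particular $\val(i\mid B)$ inside Eq.~\eqref{eq:mar_val} is the already-computed set-function marginal $\val(B\cup\{i\})-\val(B)$ at the lower levels.

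For the core step I would fix $S$ and distinct $i,j\notin S$, set $A=S\cup\{i,j\}$ and $t=|A|=|S|+2$. Since $\val(A)$ is defined as $\max_{k\in A}\{\val(A\setminus\{k\})+\val(k\mid A\setminus\{k\})\}$, taking $k=i$ gives $\val(A)\ge\val(A\setminus\{i\})+\val(i\mid A\setminus\{i\})$; that is, the realized marginal $\val(A)-\val(A\setminus\{i\})$ dominates the auxiliary marginal from Eq.~\eqref{eq:mar_val}. Now apply Eq.~\eqref{eq:mar_val} to item $i$ and the $(t-1)$-set $A\setminus\{i\}=S\cup\{j\}$: the maximum there ranges over $B\in\power(A\setminus\{i\},t-2)$, and the particular choice $B=S$ is admissible because $S\subseteq A\setminus\{i\}$ and $|S|=t-2$; hence $\val(i\mid A\setminus\{i\})\ge\val(i\mid S)+\epsilon=\bigl(\val(S\cup\{i\})-\val(S)\bigr)+\epsilon$, where $\epsilon\ge1$ is the value drawn for this itemset. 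Chaining the two inequalities and using $A\setminus\{i\}=S\cup\{j\}$ yields $\val(S\cup\{i,j\})\ge\val(S\cup\{j\})+\val(S\cup\{i\})-\val(S)+\epsilon\ge\val(S\cup\{j\})+\val(S\cup\{i\})-\val(S)$ since $\epsilon>0$, which rearranges to the desired pairwise inequality. As $S,i,j$ were arbitrary, $\val$ is supermodular.

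I expect the main ``obstacle'' to be conceptual hygiene rather than genuine difficulty: one must be careful that the term $\val(k\mid A\setminus\{k\})$ appearing inside the $\max$ defining $\val(A)$ is \emph{not} in general equal to the realized marginal $\val(A)-\val(A\setminus\{k\})$ for every $k$; it merely lower-bounds it for each $k$, with equality only at the maximizing $k$. The point is that the pairwise test needs only this one-sided relation, and the strictly positive draw $\epsilon\ge1$ supplies the slack, so no case analysis on which $k$ attains the $\max$ is required, and the base level $t=2$ (where $B=\emptyset$ and $\val(i\mid\emptyset)=\val(\{i\})$) is covered by the identical computation with no special casing. A secondary detail worth stating explicitly is that the level-wise order of the construction guarantees all the quantities $\val(B\cup\{i\})$, $\val(B)$, $\val(A\setminus\{i\})$ invoked above are already defined when we reach level $t$.
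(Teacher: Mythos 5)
Your overall mechanism is the same one the paper uses: the marginal of $i$ with respect to a larger set is defined in Eq.~\eqref{eq:mar_val} as a maximum over marginals with respect to smaller sets plus a strictly positive draw $\epsilon$, and this forces increasing differences. You streamline the paper's argument by reducing to the pairwise test (so one application of the recursion suffices), whereas the paper runs a full induction on levels to show $\val(i\mid A_t)\ge\val(i\mid B)$ for every $B\subset A_t$. That reduction is fine in principle.

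However, there is a genuine gap in the last link of your chain, and your own third paragraph points at it and then draws the wrong conclusion. Your chain gives $\val(S\cup\{i,j\})-\val(S\cup\{j\})\ge \val(i\mid S\cup\{j\})\ge \val(i\mid S)+\epsilon$, and to reach the target $\val(S\cup\{i\})-\val(S)$ you write $\val(i\mid S)=\val(S\cup\{i\})-\val(S)$. But the only relation that comes for free from the $\max_{k}$ defining $\val(S\cup\{i\})$ is the \emph{opposite} one-sided bound, $\val(S\cup\{i\})-\val(S)\ge\val(i\mid S)$, with possible strict inequality when $k=i$ does not attain the maximum. In that case both sides of the pairwise test are merely bounded below by $\val(i\mid S)$ (one with a $+\epsilon$, one without), which yields no comparison between them; the gap $\val(S\cup\{i\})-\val(S)-\val(i\mid S)$ is not controlled by $\epsilon$. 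So the claim that ``the pairwise test needs only this one-sided relation'' is false: you need the reverse direction, i.e., that the auxiliary marginal $\val(i\mid S)$ coincides with the realized set-function marginal at the lower level. That identification is exactly the content of the paper's companion well-definedness lemma (all terms $\val(A_t\setminus\{k\})+\val(k\mid A_t\setminus\{k\})$ in the maximum are equal, so the maximum is attained at \emph{every} $k$ and the auxiliary marginal equals the realized one for each $k$). Your proof is repaired by invoking that lemma explicitly before running the chain; as written, paragraph 1 asserts the needed equality without justification and paragraph 3 explicitly retracts it.
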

\begin{proof}
First we show that for an itemset $A_t$ at level $t$, and an item $i \notin A_t$, $\val(i \mid A_t) \geq \val(i \mid B)$, where $B \subset A_t$ is any  subset of $A_t$. We prove this claim by induction on level.

\noindent 
	\underline{Base Case}: Let $t=1$ and $A_1$ be any singleton itemset.  Then $\val(i \mid A_1) = \val(i \mid \emptyset) + \epsilon \geq \val(i \mid \emptyset)$.
	
	\noindent 
	\underline{Induction}: Suppose the claim is true for all levels $t \leq l$. We show it holds for $t = l+1$. From our method of assigning values we have, $\val(i \mid A_{l+1}) = \max_{B_l \in \power(\allitems, l)} \{\val(i \mid B_l)\} + \epsilon$, where $\power(\allitems, l)$ is the set of all itemsets at level $l$. Thus $\val(i \mid A_{l+1}) \geq \val(i \mid B_l)$. By induction hypothesis,  $\val(i\mid B_l) \geq \val(i \mid B)$, for any 
subset $B \subset B_l$, and thus 
$\val(i \mid A_{l+1}) \geq \val(i \mid B)$.
	
It then follows that  for any itemsets $B \subset A \subset \allitems$ and item $i \in \allitems \setminus A$, $\val(i \mid A) \geq \val(i \mid B)$. Hence value is supermodular. 
\end{proof}

\begin{lemma}
The value function of Configuration 8 is well defined.
\end{lemma}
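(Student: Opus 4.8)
The plan is to prove the lemma by induction on the level $t=|A|$, establishing at each level two things at once: (i) the level-wise procedure assigns a single value $\val(A)$ to every itemset $A$ with $|A|=t$ (no circular reference), and (ii) this assignment is \emph{consistent} with the marginals produced by Eq.~\eqref{eq:mar_val}, namely that for every such $A$ and every $i\in A$,
\[
\val(A)=\val(A\setminus\{i\})+\val(i\mid A\setminus\{i\}),
\]
i.e.\ the ``$\max$ over $i$'' in the definition of $\val(A)$ is attained at \emph{every} element of $A$. Point (ii) is what ``well defined'' should mean here: it guarantees $\val(i\mid S)$ may be read interchangeably either as the right-hand side of Eq.~\eqref{eq:mar_val} or as the set-function marginal $\val(S\cup\{i\})-\val(S)$, so nothing in the construction depends on which reading one uses when passing to the next level. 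The base cases $t=0,1$ are immediate: $\val(\emptyset)=0$, the singleton values are assigned directly, and (ii) is vacuous.

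For the inductive step I would first observe that evaluating Eq.~\eqref{eq:mar_val} for a level-$t$ itemset $A$ only invokes $\val(i\mid B)=\val(B\cup\{i\})-\val(B)$ for $B\subseteq A\setminus\{i\}$ with $|B|=t-2$, hence only values of itemsets of sizes $t-1$ and $t-2$, which are defined and consistent by the induction hypothesis; this yields (i). Since the only subsets of the $(t-1)$-element set $A\setminus\{i\}$ of size $t-2$ are the sets $A\setminus\{i,k\}$ with $k\in A\setminus\{i\}$, Eq.~\eqref{eq:mar_val} becomes $\val(i\mid A\setminus\{i\})=\max_{k\in A\setminus\{i\}}\val(i\mid A\setminus\{i,k\})+\epsilon$, and using consistency at level $t-1$ (applied to $A\setminus\{k\}$ and the element $i$) we may rewrite $\val(i\mid A\setminus\{i,k\})=\val(A\setminus\{k\})-\val(A\setminus\{i,k\})$. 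Therefore
\[
\val(A\setminus\{i\})+\val(i\mid A\setminus\{i\})=\max_{k\in A\setminus\{i\}}\Bigl[\val(A\setminus\{i\})+\val(A\setminus\{k\})-\val(A\setminus\{i,k\})\Bigr]+\epsilon ,
\]
and the bracketed quantity is symmetric in $i$ and $k$. So it remains only to argue that the maximum of this symmetric quantity over $k$ does not depend on the choice of $i\in A$; this gives (ii) at level $t$ and closes the induction.

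The main obstacle is exactly this last step — proving $i$-independence of the inner maximum. The natural route is to strengthen the induction so that it carries an explicit, manifestly symmetric closed form for $\val$ on the lower levels (concretely, $\val(A)=\sum_{j\in A}\val(\{j\})+\sum_{s=2}^{|A|}(|A|-s+1)\,\epsilon_s$, where $\epsilon_s$ is the boost used at level $s$): since the additive boost in Eq.~\eqref{eq:mar_val} depends only on the level, unfolding $\val(A\setminus\{i\})$, $\val(A\setminus\{k\})$ and $\val(A\setminus\{i,k\})$ one further level with the induction hypothesis makes every bracketed term collapse to a common value plus the same level-$(t-1)$ boost, so the $\max$ over $k$ is the same for all $i$. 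Along the way one checks that the consistency just proved is compatible with the preceding supermodularity lemma — it is, since that lemma only ever uses the ``$\max$-plus-boost'' form of the marginals — and that the argument is insensitive to the particular realizations of the random $\epsilon$'s, so the statement holds for every draw.
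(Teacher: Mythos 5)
Your proof takes a genuinely different, and more ambitious, route than the paper's. The paper's argument is essentially a two-line tautology: it sets $m = \val(A_t) = \max_{k \in A_t} \{\val(A_t \setminus \{k\}) + \val(k \mid A_t \setminus \{k\})\}$ and then declares that, ``according to the configuration,'' the marginal of $i$ equals $m - \val(A_t\setminus\{i\})$ (the paper writes $\val(i)$ where it appears to mean $\val(A_t\setminus\{i\})$), so that $\val(A_t\setminus\{i\}) + \val(i \mid A_t\setminus\{i\}) = m$ identically in $i$. In other words, well-definedness is obtained by \emph{deriving} the marginals from the set value after the max has been taken, not by showing that the max is attained at every element. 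You instead prove the stronger statement that $\val(A_t\setminus\{i\}) + \val(i\mid A_t\setminus\{i\})$, with the marginal taken literally from Eq.~\eqref{eq:mar_val}, is the same for every $i \in A_t$, via the explicit closed form $\val(A) = \sum_{j\in A}\val(\{j\}) + \sum_{s=2}^{|A|}(|A|-s+1)\epsilon_s$. That is the more useful fact: it is what reconciles the Eq.~\eqref{eq:mar_val} marginals (which the preceding supermodularity lemma manipulates directly) with the set function actually being constructed, a reconciliation the paper's shortcut leaves implicit. Your inductive bookkeeping checks out: with a per-level boost, the bracketed quantity $\val(A\setminus\{i\})+\val(A\setminus\{k\})-\val(A\setminus\{i,k\})+\epsilon_t$ collapses to the closed form of $\val(A)$ for every pair $i,k$, so the inner maximum is constant and the induction closes.

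The one caveat is that your argument hinges on the boost $\epsilon$ being a single draw per level. The paper's wording (``Let $A_t$ be any itemset at level $t>1$ and $i\in A_t$ any item. We choose a value uniformly at random, $\epsilon \sim U[1,5]$, and define \ldots'') reads at least as naturally as a fresh draw for each pair $(A_t, i)$; under that reading your closed form fails and the inner maximum is no longer independent of $i$ (it is attained only at elements realizing the largest boost), so the strong consistency you are proving is simply false. The paper's tautological version survives that interpretation precisely because it never uses the Eq.~\eqref{eq:mar_val} marginal after $\val(A_t)$ has been fixed. So either state the per-level-$\epsilon$ convention explicitly as part of the configuration---under which your induction is a strictly stronger and cleaner result than the paper's---or retreat to the paper's weaker reading of ``well defined.''
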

\begin{proof}
We show that for an itemset $A_t$ at level $t$, $\val(i \mid A_t \setminus \{i\}) + \val(i) = \val(j \mid A_t \setminus \{j\}) + \val(j) $, for any $i , j \in A_t$.

Let $\val(A_t) = max_{k \in A_t} \{\val(A_t \setminus \{k\}) + \val(k|A_t \setminus \{k\})\} = m$. Then according to our configuration  $\val(i \mid A_t \setminus \{i\}) = m - \val(i)$. Similarly $\val(j \mid A_t \setminus \{j\}) = m - \val(j)$. Hence $\val(i \mid A_t \setminus \{i\}) + \val(i) = m - \val(i) + \val(i) = m - \val(j) + \val(j) = \val(j \mid A_t \setminus \{j\}) + \val(j)$.
\end{proof}

\spara{4.3.3.2 Social welfare}
First, we study the social welfare achieved by the algorithms, in each of the above configurations, with the total budget varying from $500$ to $1000$ in steps of $100$. For Configurations 7 and 10, we set the budget uniformly for every item. 
For other configurations, the max budget is set to $20\%$ of the total budget,  min budget to $2\%$, and the remaining budget is split uniformly. The results of the experiment on \twit network are shown in Fig. \ref{fig:synsw}. Under Configurations 8 and 9, $\algo$ and $\bd$ produces the same allocation, hence the welfare is the same. However in general $\algo$ outperforms every baseline in all the four configurations by producing welfare up to $4$ times higher than baselines.

\spara{4.3.3.3 Running time vs number of items}
Next, we study the effect of the number of items on the running time of the algorithms. For this experiment, we use Configuration 5. We set the budget of every item to $k=50$ and vary the number of items $s$, from $1$ to $10$. Fig. \ref{fig:real}(a) shows the running times on the $\twit$ dataset. As the number of items increases the number of seed nodes to be selected for $\id$ and $\bd$ increases. Notice both $\id$ and $\bd$ select the same number of seeds, which is $k \times s$. $\id$ selects it by one invocation of IMM, with budget $ks$, while $\bd$ invokes IMM $s$ times with budget $k$ for every invocation. So their overall running times differ.  By contrast, the running time of $\algo$ only depends on the maximum budget and is independent of the number of items. E.g., when number of items is $10$, $\algo$ is about $8$ times faster than $\bd$ and $2.5$ times faster than $\id$. 

\eat{
\spara{6.3.4 Effect of budget distribution}
Our final experiment in this section is not a comparison of algorithms. Rather, it studies the following question. Suppose we have a fixed total budget which we must divide it up among various items. How would the social welfare and running time vary for different splits? Since we have seen that in terms of social welfare $\algo$ dominates all baselines, we use it to measure the welfare. 
We adapt Configuration 7: keeping the utility additive with positive utility for each individual items, given a total budget of $500$, we split it across $10$ items. Specifically, we consider three different budget distributions, namely (i) Uniform: each item has the same budget $50$, (ii) Large skew: one item has $82\%$ of the total budget and the remaining $18\%$ is divided evenly among the remaining $9$ items; and (iii) Moderate skew: Budgets of the $10$ items are given by the budget vector $\bvec = [10,20,30,40,50,50,60,70,80,90]$.

Fig. \ref{fig:multiitemskew} shows the expected social welfare and the running time of $\algo$ under the three budget distributions. The welfare is the highest under uniform and worst under large skew, with moderate skew in between. Running time shows the opposite trend, with uniform being the fastest and large skew being the slowest. The findings are consistent with the observation that with large skew, the number of seeds to be selected increases and the allocation cannot take advantage of supermodularity.  
}
\eat{ 
When the distribution is uniform $\algo$ performs the best, even for arbitrary budget the social welfare does not reduce much. However when budget is skewed the social welfare significantly reduces. In contrast, the running time is highest for the skewed budget and lowest when budget is uniform. This is because that the number of seed nodes to be selected is $491$ for skewed, as compared to $50$ for uniform budget.
} 

\subsubsection{Experiment with real value, price, and noise parameters} \label{sec:real_data}
In this section, we conduct experiments on parameters (value, price, and noise) learned from real data. We consider the following 5 items: (1) Playstation $4$, $500$ GB console, denoted $\ps$, (2) Controller of the Playstation, denoted $\cs$, and (3-5) Three different games compatible with $\ps$, denoted $\ga$, $\gb$ and $\gc$ respectively. We next describe the method by which we learn their parameters from real data.

\spara{4.3.4.1 Learning the value, price, and noise}
{\color{black}
Predicting a user's bid in an auction is a widely studied problem in auction theory. Jiang et al. \cite{Albertbidding07} showed that learning user's valuations of items improves the prediction accuracy. Given the bidding history of an item, their method learns a value distribution of the item, by taking into account hidden/unobserved bids. 
We use it to learn the values of itemsets from bidding histories. Recall that in our model value is not random, instead noise models the randomness in valuations. Hence \chgins{we take the mean of the learned distribution to be the value and the noise is set to have $0$ mean and the same variance as the learnt distribution. While \model does not assume specific noise distributions, for concreteness, we fit a Gaussian distribution to noise.} 
}
We take $10,000$ independent random samples from the learnt distribution to fit the gaussian.

\begin{table}[h]
\hspace{-3mm}
	
	\begin{tabular}{|c|c|c|c|c|}	
 \hline
 Itemset & Price & Value & Noise & eBay bidding link \\
 \hline
 $\{\ps\}$ & $260$ & $213$ & $N(0,4)$ & https://ebay.to/2ym9Ioj  \\
 \hline
 $\{\ps,\cs\}$ & $280$ & $220$ & $N(0,6)$ & https://ebay.to/2Escb68  \\
 \hline
 $\{\ps,\ga,\gb,\gc\}$ & $275$ & $258$ & $N(0,4)$ & https://ebay.to/2QYpmxh  \\
 \hline
 $\{\ps,\ga,\gb,\cs\}$ & $290$ & $292.5$ & $N(0,5)$ & https://ebay.to/2ClEnF2  \\
 \hline
 $\{\ps,\ga,\gb,\gc,\cs\}$ & $295$ & $302$ & $N(0,7)$ & https://ebay.to/2P60y99  \\
 \hline
	\end{tabular}
	\caption{Learned parameters}
	 \label{tab:realp}
\end{table}

We mine the bidding histories of different itemsets from \ebay. 
To match the used products bidden in \ebay, we use prices for the used products on
	\cgl and \fb groups.
	Since the items bidden in \ebay are typically used products, to match them with the right price information, we use \cgl and \fb groups where the exact same old product is sold.

The price obtained is C\$$260$ for $\ps$, C\$$20$ for $\cs$, and C\$$5$ each for $\ga,\gb$ and $\gc$. For some of the itemsets, we show the learned parameters and the links to the corresponding \ebay bidding histories used in the learning, in Table \ref{tab:realp}. The rest of the itemsets are omitted from the table for brevity. 
We describe the parameters of those omitted itemsets here. Firstly, any of $\cs,\ga,\gb,\gc$, without the core item $\ps$, is useless. Hence values of those items are set to $0$. Secondly, we did not find any bidding record for an itemset consisting of $\ps,\cs$ and a single game. This is perhaps because typically owners of $\ps$ own multiple games and while selling they sell all the games together with $\ps$. Hence, we consider the itemset with $\ps,\cs$ and a single game to have negative deterministic utility. However, as the table shows, itemsets with $\ps,\cs$ and two games have non-negative deterministic utility. Finding the bidding history for the exact same games is difficult, so since games $g_1$--$g_3$ are priced similarly and valued similarly by users, we assume that any itemset with $\ps,\cs$ and any two games has the same utility as that shown in the fourth row of Table \ref{tab:realp}. \chgins{From the value column, we can see that the items indeed follow supermodular valuation, confirming that in practice complementarity arises naturally.} Lastly, the only itemsets that have positive deterministic utility are itemsets with $\ps,\cs$ and at least two games. All other itemsets including the singleton items, have negative deterministic utility. Consequently, we know that the allocation produced by $\id$ will have $0$ expected social welfare, so we omit $\id$ from our experiments, discussed next.

\spara{4.3.4.2 Effect of total budget size}
We compare $\algo$ with $\bd$ on the Twitter dataset with different sizes of total budgets.  
Given a total budget, we assign $30\%,30\%,20\%,10\%,10\%$ of that to $\ps,\cs,\ga,\gb,\gc$ respectively. Then we vary the total budget from $100$ to $500$ in steps of $100$. Fig. \ref{fig:real}(b) shows the welfare: as can be seen, $\algo$ outperforms $\bd$ in both high and low budgets. In fact with higher budget, $\algo$ produces welfare more than $2$ times that of $\bd$. Next we report the running time of the two algorithms in Fig. \ref{fig:real}(c). Since $\bd$ makes multiple calls to IMM, its running time is $1.5$ times higher than $\algo$.

\spara{4.3.4.3 Effect of different item budget given the same total budget}
Our next experiment 
studies the following question. Suppose we have a fixed total budget which we must be divided up among various items. How would the social welfare and running time vary for different splits? Since we have seen that in terms of social welfare $\algo$ dominates all baselines, we use it to measure the welfare. 
Given a total budget of $500$, we split it across $5$ items following three different budget distributions, namely (i) Uniform: each item has the same budget $100$, (ii) Large skew: one item, $\ps$ has $82\%$ of the total budget and the remaining $18\%$ is divided evenly among the remaining $4$ items; and (iii) Moderate skew: Budgets of the $5$ items, $[\ps,\cs,\ga,\gb,\gc]$, are given by the budget vector $\bvec = [150,150,100,50,50]$.

Fig. \ref{fig:real}(d) shows the expected social welfare and the running time of $\algo$ under the three budget distributions on the Twitter dataset. The welfare is the highest under uniform and worst under large skew, with moderate skew in between. 
Running time shows consistent trend, with uniform being the fastest and large skew being the slowest. The findings are consistent with the observation that with large skew, the number of seeds to be selected increases and the allocation cannot take full advantage of supermodularity. 

\spara{4.3.4.4 Effect of propagation vs. network externality}
	We next compare our $\algo$ against the other two baselines, $\tcsc$ and $\tcss$ (referred to as BDHS algorithms for simplicity). $\tcsc$ and $\tcss$ correspond to the concave and step externality algorithms respectively (i.e. Alg 1 and 3 of \cite{BhattacharyaDHS17}). 
	Our overall approach is, despite the differences between our model and BDHS model as highlighted in \textsection\ref{sec:related}, we try to convert our model in a reasonable way to their model by means of restriction, and use their algorithms
	to find the total social welfare that they can achieve.
	Then we gradually increase the budget of items in our model to see at which budget the social welfare achieved
	by our solution reaches the social welfare achieved by their solution that has no budget and assigns items to every
	node directly. This would demonstrate the budget savings due to our consideration of network propagation.

	We now describe how we convert our model to their model.
	First, our model uses network propagation with the UIC model while their model uses network externality without
	propagation.
	To align the two models, we try two alternatives.
	The first alternative is to sample 10,000 live-edge graphs, and the propagation on one live-edge graph bears 
	similarity with the $1$-step function, and thus we use $1$-step externality function on each live-edge graph to compute
	the total social welfare and then average over all live-edge graphs.
	We refer to this alternative $\tcss$.
	The second alternative works when we restrict our UIC model such that every edge has the same propbability $p$.
	In this case, the activation probability of a node $v$ is $1-(1-p)^k$, where $k$ is the number of active neighbors
	of $v$ which is at most the size $s$ of its	2-neighborhood support set.
	This resembles the concave function case in the BDHS model, and thus we use the concave function 
	$1-(1-p)^s$ in their 2-hop model. We refer to this alternative $\tcsc$.

	Second, to align their unit demand model with our model, we treat
	each item subset as a virtual item in their model, so that they can assign item subsets as one virtual item to the
	nodes. 
	Finally, their model has no budget, so they are free to assign all item subsets to all nodes.
	We use this as a benchmark of the total social welfare they can achieve, and see at what fraction of the budget
	we can achieve the same social welfare due to the network propagation effect.

	We used the $\orkut$ as one of the large networks in this study, which also enables the study of the performance of $\algo$ on a large network other than $\twit$ (which is already used in Figure~\ref{fig:time}(d),~\ref{fig:synsw}, and~\ref{fig:real}).Fig. \ref{fig:tcs}(a-c) shows the results on $\orkut$, $\dbBook$ and $\dbMovie$ networks respectively.  \laksRev{The $x$ axis shows the fraction of the budget needed by $\algo$, where 100\% corresponds to a budget of $n$, i.e., \#nodes in the network, which corresponds to the setting of \cite{BhattacharyaDHS17}.} As can be seen, for dense networks like $\orkut$, $\algo$ needs less than $35\%$ as the budget. We found a similar result on $\flix$, not included here for the lack of space. For a sparse graph like $\dbBook$ it needs ~$82\%$, which is still less than the budget of BDHS. Further, since propagation has a submodular growth, much of the budget is used to increase the latter half of the welfare. E.g., even on $\dbBook$, $75\%$ of BDHS' welfare is obtained by only using $50\%$ budget. 
	This test clearly demonstrates that our $\algo$ could leverage the power of propagation, compared to the BDHS
	approach that only considers externality.

\spara{4.3.4.5 Scalability test}
Our next experiment shows the impact of network size on $\algo$ using $\orkut$ with two types of edge probabilities: (1) $1/\ind(v)$ and (2) fixed $0.01$.  
	We use a uniform budget of $50$ for all items. We then use breadth-first-search to progressively increase the network size such that it includes a certain percentage of the total nodes. The results are shown in Fig. 8(d). With increasing network size, the running time in both cases roughly has a linear increase, whereas the welfare depicts a sublinear growth. It is worth noticing that even for the entire million-sized network and fixed probability, $\algo$ requires mere $129$ (time 2) seconds to complete, which again attests to its  scalability.

\eat{ 
\prib{I suggest that we remove the following experiment from the main paper now.}
\weic{I agree that the following test is minor, and if we need space, we can remove it, or simply mention it
	in one sentence.} 
}

\begin{table}
	
	\begin{tabular}{|c|c|c|c|} \hline
  Budget distribution & $\algo$ & MAX\_IMM & IMM\_MAX \\ \hline 
  Uniform & $37719$ & $37719$ & $37719$ \\ \hline
  Large skew & $144328$ & $144328$ & $144328$ \\ \hline
  Moderate skew & $50839$ & $50839$ & $50839$ \\ \hline
  \end{tabular}
   \hspace{-7mm} \caption{The number of RR sets generated}
\label{tab:memory-2+items}
\end{table}

\spara{4.3.4.6 Memory usage}
Lastly we assess the memory usage of $\algo$.
Since the main memory usage is on the RR set storage, we evaluate the
	number of RR sets $\algo$ generates in comparison to IMM for the three aforementioned budget distributions. 
Since IMM works only with a single item (i.e., one budget), we consider two variants. In the first variant IMM is invoked with maximum budget, called IMM\_MAX. The second variant iterates over all budgets and reports the budget that generates the maximum number of RR sets, called MAX\_IMM. Notice IMM\_MAX and MAX\_IMM are not equivalent because the number of RR sets generated by IMM is not monotone in budget. The results are shown in Table \ref{tab:memory-2+items}. In all three budget configurations the \emph{numbers}  of RR sets generated by the three algorithms are exactly the same, from which we can conclude that $\algo$ has a similar memory requirement as IMM.



\section{Summary \& Discussion }\label{sec:concl}
\chgins{We propose a novel model combining influence diffusion with utility-driven item adoption, which supports any mix of competing and complementary items. Focusing on complementary items, we study the problem of optimizing expected social welfare.
Our objective function is monotone, but neither submodular nor supermodular. Yet, we show that a simple greedy allocation guarantees a $(1-1/e-\epsilon)$-approximation to the optimum.} Based on this, we develop a scalable approximation algorithm $\algo$, which satisfies an interesting prefix preserving property. With extensive experiments, we show that our algorithm outperforms the state of the art baselines. 

\eat{ 
We propose a novel model combining utility driven item adoption grounded in economics with an influence diffusion model. Focusing on the case where items are mutually complementary, we study a novel objective social welfare maximization, assuming values are supermodular. 
We show that the expected social welfare is monotone but neither submodular nor supermodular w.r.t. allocations. 
Yet, we show that a simple greedy allocation guarantees an expected social welfare that is a $(1-1/e-\epsilon)$-approximation to the optimum. 
Based on this, we develop a scalable approximation algorithm called $\PRIMM$. 
With extensive experiments on real and synthetic data, we show that our algorithm significantly outperforms the baselines.  
}
\eat{ 
We show that the expected social welfare under \model is monotone but neither submodular nor supermodular w.r.t. allocations. 
Yet, we show that a simple greedy allocation guarantees an expected social welfare that is a $(1-1/e-\epsilon)$-approximation to the optimum. 
To our knowledge, this is the first result of this kind for non-submodular function maximization in the context of viral marketing. Based on this, we develop a scalable approximation algorithm called $\algo$. 
With extensive experiments on real and synthetic data, we show that our algorithm significantly outperforms the baselines.} 

Our results and techniques carry over unchanged to any triggering propagation model \cite{kempe03}. We assumed that price is additive and valuations are supermodular. 
If we use submodular prices, that would further favor item bundling. In this case, utility remains supermodular and our results remain intact. Independently of this, we could study competition using submodular value functions. 
%
\eat{ 
Making one or both of valuations and noise user dependent would make the model more expressive. 
} 
Orthogonally, we can study the UIC model under personalized noise terms. 
It is interesting to study the expected welfare maximization problem in these alternative settings. 

\eat{ 
\balance

In this work, 
we propose the Comparative Independent Cascade (\comic) model that allows any degree of competition or complementarity between two different propagating items, and study the novel \SIM and \CIM problems for complementary products.
We 
develop non-trivial extensions to the RR-set techniques to achieve approximation algorithms.
For non-submodular settings, we propose Sandwich Approximation to achieve data-dependent approximation factors.
Our experiments demonstrate the effectiveness and efficiency of proposed solutions.

For future work, one direction is to design more efficient algorithms or
	heuristics for \SIM and (especially) \CIM; 
	e.g., whether near-linear time algorithm is still available for these
	problems is still open.
Another direction is to fully characterize the entire GAP space $\bQ$ in terms of
	monotonicity and submodularity properties.
Moreover, an important direction is to 
	extend the model to multiple items.
Given the current framework, \model can be extended to accommodate $k$ items, 
	if we allow 
	$k\cdot 2^{k-1}$ GAP parameters --- for each item, we specify the probability of 
	adoption for every combination of other items that have been adopted.
However, how to simplify the model and make it tractable, how to reason about
	the complicated two-way or multi-way competition and complementarity, how
	to analyze  monotonicity and submodularity, and how to learn
	GAP parameters from real-world data all remain as
	interesting challenges.
Last, it is also interesting to consider an extended \comic model
	in which influence probabilities on edges are product-dependent.

}

\clearpage  
{
\bibliographystyle{plain}
\bibliography{sigmod2019-epic}  

\begin{thebibliography}{10}

\bibitem{twitter}
{Twitter} dataset.
\newblock \url{https://snap.stanford.edu/data/}.
\newblock Accessed: 2018-05-30.

\bibitem{AbramowitzA18}
Ben Abramowitz and Elliot Anshelevich.
\newblock Utilitarians without utilities: Maximizing social welfare for graph
  problems using only ordinal preferences.
\newblock In {\em AAAI}, pages 894--901, 2018.

\bibitem{BharathiKS07}
Shishir Bharathi, David Kempe, and Mahyar Salek.
\newblock Competitive influence maximization in social networks.
\newblock In {\em International Workshop on Web and Internet Economics}, pages
  306--311. Springer, 2007.

\bibitem{BhattacharyaDHS17}
Sayan Bhattacharya, Wolfgang Dvo{\v{r}}{\'a}k, Monika Henzinger, and Martin
  Starnberger.
\newblock Welfare maximization with friends-of-friends network externalities.
\newblock {\em Theory of Computing Systems}, 61(4):948--986, 2017.

\bibitem{boadway1984welfare}
Robin~W Boadway and Neil Bruce.
\newblock {\em Welfare economics}.
\newblock B. Blackwell New York, 1984.

\bibitem{borgs14}
Christian Borgs, Michael Brautbar, Jennifer Chayes, and Brendan Lucier.
\newblock Maximizing social influence in nearly optimal time.
\newblock In {\em Proceedings of the twenty-fifth annual ACM-SIAM symposium on
  Discrete algorithms}, pages 946--957. SIAM, 2014.

\bibitem{borodin10}
Allan Borodin, Yuval Filmus, and Joel Oren.
\newblock Threshold models for competitive influence in social networks.
\newblock In {\em International Workshop on Internet and Network Economics},
  pages 539--550. Springer, 2010.

\bibitem{BudakAA11}
Ceren Budak, Divyakant Agrawal, and Amr El~Abbadi.
\newblock Limiting the spread of misinformation in social networks.
\newblock In {\em Proceedings of the 20th international conference on World
  wide web}, pages 665--674. ACM, 2011.

\bibitem{Carbaugh16}
Robert Carbaugh.
\newblock {\em Contemporary Economics: An Applications Approach}.
\newblock Routledge, 8th edition, 2016.

\bibitem{CarnesNWZ07}
Tim Carnes, Chandrashekhar Nagarajan, Stefan~M Wild, and Anke Van~Zuylen.
\newblock Maximizing influence in a competitive social network: a follower's
  perspective.
\newblock In {\em Proceedings of the ninth international conference on
  Electronic commerce}, pages 351--360. ACM, 2007.

\bibitem{chalermsook2015social}
Parinya Chalermsook, Atish Das~Sarma, Ashwin Lall, and Danupon Nanongkai.
\newblock Social network monetization via sponsored viral marketing.
\newblock In {\em ACM SIGMETRICS Performance Evaluation Review}, volume~43,
  pages 259--270. ACM, 2015.

\bibitem{chang1979pricing}
Lena Chang and William~B Fairley.
\newblock Pricing automobile insurance under multivariate classification of
  risks: additive versus multiplicative.
\newblock {\em Journal of Risk and Insurance}, 1979.

\bibitem{chen2018issue}
Wei Chen.
\newblock An issue in the martingale analysis of the influence maximization
  algorithm imm.
\newblock {\em arXiv preprint arXiv:1808.09363}, 2018.

\bibitem{ChenNegOpi11}
Wei Chen, Alex Collins, Rachel Cummings, Te~Ke, Zhenming Liu, David Rincon,
  Xiaorui Sun, Yajun Wang, Wei Wei, and Yifei Yuan.
\newblock Influence maximization in social networks when negative opinions may
  emerge and propagate.
\newblock In {\em Proceedings of the 2011 SIAM International Conference on Data
  Mining}, pages 379--390. SIAM, 2011.

\bibitem{infbook}
Wei Chen, Laks~VS Lakshmanan, and Carlos Castillo.
\newblock Information and influence propagation in social networks.
\newblock {\em Synthesis Lectures on Data Management}, 5(4):1--177, 2013.

\bibitem{CLLR15}
Wei Chen, Fu~Li, Tian Lin, and Aviad Rubinstein.
\newblock Combining traditional marketing and viral marketing with amphibious
  influence maximization.
\newblock In {\em EC}, 2015.

\bibitem{ChenWW10}
Wei Chen, Chi Wang, and Yajun Wang.
\newblock Scalable influence maximization for prevalent viral marketing in
  large-scale social networks.
\newblock In {\em Proceedings of the 16th ACM SIGKDD international conference
  on Knowledge discovery and data mining}, pages 1029--1038. ACM, 2010.

\bibitem{ChenWW10c}
Wei Chen, Yajun Wang, and Siyu Yang.
\newblock Efficient influence maximization in social networks.
\newblock In {\em Proceedings of the 15th ACM SIGKDD international conference
  on Knowledge discovery and data mining}, pages 199--208. ACM, 2009.

\bibitem{ChenWW10b}
Wei Chen, Yifei Yuan, and Li~Zhang.
\newblock Scalable influence maximization in social networks under the linear
  threshold model.
\newblock In {\em Data Mining (ICDM), 2010 IEEE 10th International Conference
  on}, pages 88--97. IEEE, 2010.

\bibitem{cohen14}
Edith Cohen, Daniel Delling, Thomas Pajor, and Renato~F Werneck.
\newblock Sketch-based influence maximization and computation: Scaling up with
  guarantees.
\newblock In {\em Proceedings of the 23rd ACM International Conference on
  Conference on Information and Knowledge Management}, pages 629--638. ACM,
  2014.

\bibitem{cramton-etal-ca-book-2007}
Peter Cramton, Yoav Shoham, and Richard Steinberg.
\newblock Combinatorial auctions.
\newblock pages 1--33, 2006.

\bibitem{dattaMS10}
Samik Datta, Anirban Majumder, and Nisheeth Shrivastava.
\newblock Viral marketing for multiple products.
\newblock In {\em Data Mining (ICDM), 2010 IEEE 10th International Conference
  on}, pages 118--127. IEEE, 2010.

\bibitem{feige-vondrak-demand-2010}
Uriel Feige and Jan Vondr{\'a}k.
\newblock The submodular welfare problem with demand queries.
\newblock {\em Theory of Computing}, 6(1):247--290, 2010.

\bibitem{HeSCJ12}
Xinran He, Guojie Song, Wei Chen, and Qingye Jiang.
\newblock Influence blocking maximization in social networks under the
  competitive linear threshold model.
\newblock In {\em Proceedings of the 2012 SIAM International Conference on Data
  Mining}, pages 463--474. SIAM, 2012.

\bibitem{hirshleifer1978}
Jack Hirshleifer.
\newblock The private and social value of information and the reward to
  inventive activity.
\newblock In {\em Uncertainty in Economics}, pages 541--556. Elsevier, 1978.

\bibitem{Huang2017}
Keke Huang, Sibo Wang, Glenn Bevilacqua, Xiaokui Xiao, and Laks V.~S.
  Lakshmanan.
\newblock Revisiting the stop-and-stare algorithms for influence maximization.
\newblock {\em Proc. VLDB Endow.}, 10(9):913--924, 2017.

\bibitem{Albertbidding07}
Albert~Xin Jiang and Kevin Leyton-Brown.
\newblock Bidding agents for online auctions with hidden bids.
\newblock {\em Machine Learning}, 67(1-2):117--143, 2007.

\bibitem{jung2012}
Kyomin Jung, Wooram Heo, and Wei Chen.
\newblock Irie: Scalable and robust influence maximization in social networks.
\newblock In {\em Data Mining (ICDM), 2012 IEEE 12th International Conference
  on}, pages 918--923. IEEE, 2012.

\bibitem{kapraov-etal-greedy-opt-soda-2013}
Michael Kapralov, Ian Post, and Jan Vondr\'{a}k.
\newblock Online submodular welfare maximization: Greedy is optimal.
\newblock In {\em Proceedings of the Twenty-fourth Annual ACM-SIAM Symposium on
  Discrete Algorithms}, SODA '13, pages 1216--1225, 2013.

\bibitem{kempe03}
David Kempe, Jon Kleinberg, and {\'E}va Tardos.
\newblock Maximizing the spread of influence through a social network.
\newblock In {\em Proceedings of the ninth ACM SIGKDD international conference
  on Knowledge discovery and data mining}, pages 137--146. ACM, 2003.

\bibitem{kim2013}
Jinha Kim, Seung-Keol Kim, and Hwanjo Yu.
\newblock Scalable and parallelizable processing of influence maximization for
  large-scale social networks?
\newblock In {\em Data Engineering (ICDE), 2013 IEEE 29th International
  Conference on}, pages 266--277. IEEE, 2013.

\bibitem{korula-etal-online-swm-arxiv-2017}
Nitish Korula, Vahab Mirrokni, and Morteza Zadimoghaddam.
\newblock Online submodular welfare maximization: Greedy beats 1/2 in random
  order.
\newblock In {\em Proceedings of the forty-seventh annual ACM symposium on
  Theory of computing}, pages 889--898. ACM, 2015.

\bibitem{LCSZ17}
Qiang Li, Wei Chen, Xiaoming Sun, and Jialin Zhang.
\newblock Influence maximization with $\epsilon$-almost submodular threshold
  functions.
\newblock In {\em NIPS}, 2017.

\bibitem{li-etal-im-survey-tkde-2018}
Yuchen Li, Ju~Fan, Yanhao Wang, and Kian-Lee Tan.
\newblock Influence maximization on social graphs: A survey.
\newblock {\em TKDE}, 2018.

\bibitem{lu2013}
Wei Lu, Francesco Bonchi, Amit Goyal, and Laks~VS Lakshmanan.
\newblock The bang for the buck: fair competitive viral marketing from the host
  perspective.
\newblock In {\em Proceedings of the 19th ACM SIGKDD international conference
  on Knowledge discovery and data mining}, pages 928--936. ACM, 2013.

\bibitem{lu2015}
Wei Lu, Wei Chen, and Laks~VS Lakshmanan.
\newblock From competition to complementarity: comparative influence diffusion
  and maximization.
\newblock {\em Proceedings of the VLDB Endowment}, 9(2):60--71, 2015.

\bibitem{mcauley15}
Julian McAuley, Rahul Pandey, and Jure Leskovec.
\newblock Inferring networks of substitutable and complementary products.
\newblock In {\em Proceedings of the 21th ACM SIGKDD International Conference
  on Knowledge Discovery and Data Mining}, pages 785--794. ACM, 2015.

\bibitem{milgrom1995complementarities}
Paul Milgrom and John Roberts.
\newblock Complementarities and fit strategy, structure, and organizational
  change in manufacturing.
\newblock {\em Journal of accounting and economics}, 1995.

\bibitem{myers12}
Seth~A Myers and Jure Leskovec.
\newblock Clash of the contagions: Cooperation and competition in information
  diffusion.
\newblock In {\em Data Mining (ICDM), 2012 IEEE 12th International Conference
  on}, pages 539--548. IEEE, 2012.

\bibitem{myerson1981optimal}
Roger~B Myerson.
\newblock Optimal auction design.
\newblock {\em Mathematics of operations research}, 6(1):58--73, 1981.

\bibitem{narayanam2012viral}
Ramasuri Narayanam and Amit~A Nanavati.
\newblock Viral marketing for product cross-sell through social networks.
\newblock In {\em Joint European Conference on Machine Learning and Knowledge
  Discovery in Databases}, pages 581--596. Springer, 2012.

\bibitem{submodular}
George~L Nemhauser, Laurence~A Wolsey, and Marshall~L Fisher.
\newblock An analysis of approximations for maximizing submodular set
  functions.
\newblock {\em Mathematical Programming}, 14(1):265--294, 1978.

\bibitem{Nguyen2016}
Hung~T Nguyen, My~T Thai, and Thang~N Dinh.
\newblock Stop-and-stare: Optimal sampling algorithms for viral marketing in
  billion-scale networks.
\newblock In {\em Proceedings of the 2016 International Conference on
  Management of Data}, pages 695--710. ACM, 2016.

\bibitem{snyder08}
Walter Nicholson and Christopher Snyder.
\newblock {\em Microeconomic theory: Basic principles and extensions}.
\newblock Nelson Education, 2011.

\bibitem{nisan2007}
Noam Nisan, Tim Roughgarden, Eva Tardos, and Vijay~V Vazirani.
\newblock {\em Algorithmic game theory}.
\newblock Cambridge university press, 2007.

\bibitem{PathakBS10}
Nishith Pathak, Arindam Banerjee, and Jaideep Srivastava.
\newblock A generalized linear threshold model for multiple cascades.
\newblock In {\em Data Mining (ICDM), 2010 IEEE 10th International Conference
  on}, pages 965--970. IEEE, 2010.

\bibitem{rasmusen1994}
Eric Rasmusen and Basil Blackwell.
\newblock Games and information.
\newblock {\em Cambridge, MA}, 15, 1994.

\bibitem{ST17}
Grant Schoenebeck and Biaoshuai Tao.
\newblock Beyond worst-case (in)approximability of nonsubmodular influence
  maximization.
\newblock In {\em WINE}, 2017.

\bibitem{SunCLWSZL11}
Tao Sun, Wei Chen, Zhenming Liu, Yajun Wang, Xiaorui Sun, Ming Zhang, and
  Chin-Yew Lin.
\newblock Participation maximization based on social influence in online
  discussion forums.
\newblock In {\em ICWSM}, 2011.

\bibitem{xiaokui-opim-sigmod-2018}
Jing Tang, Xueyan Tang, Xiaokui Xiao, and Junsong Yuan.
\newblock Online processing algorithms for influence maximization.
\newblock In {\em SIGMOD}, 2018.

\bibitem{tang15}
Youze Tang, Yanchen Shi, and Xiaokui Xiao.
\newblock Influence maximization in near-linear time: A martingale approach.
\newblock In {\em Proceedings of the 2015 ACM SIGMOD International Conference
  on Management of Data}, pages 1539--1554. ACM, 2015.

\bibitem{tang14}
Youze Tang, Xiaokui Xiao, and Yanchen Shi.
\newblock Influence maximization: Near-optimal time complexity meets practical
  efficiency.
\newblock In {\em Proceedings of the 2014 ACM SIGMOD international conference
  on Management of data}, pages 75--86. ACM, 2014.

\bibitem{topkis2011supermodularity}
Donald~M Topkis.
\newblock {\em Supermodularity and Complementarity}.
\newblock Princeton University Press, 1998.

\end{thebibliography}
}

\end{document}